\documentclass[noinfoline]{imsart}\setcounter{tocdepth}{2}

\RequirePackage[OT1]{fontenc}
\RequirePackage{amsthm,amsmath}
\RequirePackage[numbers]{natbib}
\RequirePackage[colorlinks,citecolor=blue,urlcolor=blue]{hyperref}

\usepackage{verbatim}

\allowdisplaybreaks

\usepackage{fullpage}

\usepackage{algorithm,algorithmic}

\DeclareFontEncoding{FMS}{}{}
\DeclareFontSubstitution{FMS}{futm}{m}{n}
\DeclareFontEncoding{FMX}{}{}
\DeclareFontSubstitution{FMX}{futm}{m}{n}
\DeclareSymbolFont{fouriersymbols}{FMS}{futm}{m}{n}
\DeclareSymbolFont{fourierlargesymbols}{FMX}{futm}{m}{n}
\DeclareMathDelimiter{\hsnorm}{\mathord}{fouriersymbols}{152}{fourierlargesymbols}{147}

\newcommand{\hs}{\big\hsnorm}

\RequirePackage[OT1]{fontenc}
\RequirePackage{amsthm,amsmath}

\RequirePackage{natbib}
\RequirePackage[colorlinks,citecolor=blue,urlcolor=blue]{hyperref}

\usepackage{verbatim}

\allowdisplaybreaks

\newcommand {\tr}{\mbox{tr}}

\usepackage{subfigure}
\usepackage{layout}

\usepackage{dsfont}

\usepackage[mathscr]{eucal}
\usepackage[toc,page]{appendix}
\usepackage{mathrsfs}
\usepackage{color}
\usepackage{pifont}
\usepackage{bm}
\usepackage{latexsym}
\usepackage{amsmath}
\usepackage{amsthm}
\usepackage{amsfonts}
\usepackage{amssymb}
\usepackage{epsfig}
\usepackage{graphicx}
\usepackage{hyperref}
\usepackage{multirow}

\newtheorem{theorem}{Theorem}

\newtheorem{corollary}{Corollary}
\newtheorem{lemma}{Lemma}
\newtheorem{definition}{Definition}
\newtheorem{remark}{Remark}

\input xy
\xyoption{all}

\startlocaldefs
\numberwithin{equation}{section}
\theoremstyle{plain}

\endlocaldefs

\begin{document}

\begin{frontmatter}

\title{\large{Hybrid Regularisation of Functional Linear Models}} 
\runtitle{Hybrid Regularisation}

\begin{aug}
\author{\fnms{Anirvan} \snm{Chakraborty}\ead[label=e1]{anirvan.chakraborty@epfl.ch}}
\and
\author{\fnms{Victor~M.} \snm{Panaretos}\ead[label=e2]{victor.panaretos@epfl.ch}}

\runauthor{A. Chakraborty and V.~M. Panaretos}

\affiliation{\'Ecole Polytechnique F\'ed\'erale de Lausanne}

\address{Institut de Math\'ematiques,\\
\'Ecole Polytechnique F\'ed\'erale de Lausanne\\
\printead{e1,e2}}
\end{aug}

\begin{abstract}
\noindent We consider the problem of estimating the slope function in a functional regression with a scalar response and a functional covariate. This central problem of functional data analysis is well known to be ill-posed, thus requiring a regularised estimation procedure. The two most commonly used approaches are based on spectral truncation or Tikhonov regularisation of the empirical covariance operator. In principle, Tikhonov regularisation is the more canonical choice. Compared to spectral truncation, it is robust to eigenvalue ties, while it attains the optimal minimax rate of convergence in the mean squared sense, and not just in a concentration probability sense. In this paper, we show that, surprisingly, one can strictly improve upon the performance of the Tikhonov estimator in finite samples by means of a linear estimator, while retaining its stability and asymptotic properties by combining it with a form of spectral truncation. Specifically, we construct an estimator that additively decomposes the functional covariate by projecting it onto two orthogonal subspaces defined via functional PCA; it then applies Tikhonov regularisation to the one component, while leaving the other component unregularised. We prove that when the covariate is Gaussian, this hybrid estimator uniformly improves upon the MSE of the Tikhonov estimator in a non-asymptotic sense, effectively rendering it inadmissible. This domination is shown to also persist under discrete observation of the covariate function. The hybrid estimator is linear, straightforward to construct in practice, and with no computational overhead relative to the standard regularisation methods. By means of simulation, it is shown to furnish sizeable gains even for modest sample sizes. 
\end{abstract}

\begin{keyword}[class=MSC]
\kwd[Primary ]{62M}\kwd{62G}
\kwd[; secondary ]{62J07}
\kwd{62M15}\kwd{15A29}
\end{keyword}

\begin{keyword}
\kwd{admissibility}\kwd{condition index}\kwd{functional data analysis}\kwd{ill-posed problem}\kwd{mean integrated squared error}\kwd{principal component analysis}\kwd{rate of convergence}\kwd{ridge regression}\kwd{spectral truncation}\kwd{Tikhonov regularisation}
\end{keyword}

\end{frontmatter}

\tableofcontents

\newpage
\section{Introduction}
\label{1}

\subsection{Functional Linear Models and their regularisation}
 For a real-valued response $y$, and a random functional covariate $X$ taking values in a separable Hilbert space ${\cal H}$ with inner product $\langle\cdot,\cdot\rangle$, the functional linear regression model with scalar response is given by 
\begin{equation}\label{scalar_FLM}
y = \alpha + \langle X,\beta\rangle +\epsilon,
\end{equation} 
where $\epsilon$ is a scalar random measurement error term that is assumed to be zero mean and independent of the covariate $X$ (\citet{RS05}, \citet{HK12}, \citet{HE15}). The so-called slope parameter $\beta\in\mathcal{H}$ is typically the object of primary importance. 
The statistical task is then to estimate $\beta$ on the basis of an i.i.d. sample of pairs $\{(y_{i},X_{i})\}_{i=1}^n$ generated according to the model \eqref{scalar_FLM}. The classical least squares approach of estimating $\beta$ results in the normal equation 
\begin{equation}\label{normal_equation}
\hat{\mathscr{K}}\,\beta = \hat{C},
\end{equation}
where $\hat{\mathscr{K}}$ is the empirical covariance operator of the $\{X_i\}$ and $\hat{C}$ is the empirical cross-covariance of the $\{y_{i}\}$ and the $\{X_{i}\}$. Since the population operator $\mathscr{K}$ is a trace-class operator, its empirical version $\hat{\mathscr{K}}$ is so too, for all $n$. Its failure to be boundedly invertible gives rise to an ill-posed inverse problem, which is usually solved by regularising the inverse of $\hat{\mathscr{K}}$. The regularisation strategies employed in the functional data analysis literature can be broadly categorised into two classes\footnote{Though there exist even more general descriptions that include the two categories as special cases, see \citet{CMS07}.}: \emph{sieve methods} and \emph{penalised methods}.

In the method of sieves (\citet{grenander1981}), one selects an orthonormal basis $\{\varphi_k\}$ of $\mathcal{H}$, and projects the covariates $\{X_i\}$ and the slope function $\beta$ onto the subspace spanned by the first $r$ basis elements. If the basis $\{\varphi_k\}$ and truncation level $r$ are selected judiciously, one obtains a stable multivariate regression problem, with a small amount of bias. In terms of asymptotics, one must let $K\rightarrow\infty$ but regulate its growth as a function of $n$,  in order to guarantee that the regression problem remain stable for each $n$. The challenge here is to determine a ``good" basis $\{\varphi_k\}$ whose first $r$ elements provide a parsimonious representation simultaneously for $X$ \emph{and} $\beta$ -- but of course $\beta$ is unknown, and worse still, does not have any intrinsic relationship to $X$ that might prove the existence of such a basis. The typical choice is to rely on the Karhunen-Lo\`eve expansion of $X$ and to choose $\{\varphi_k\}$ to be the basis of eigenfunctions of $\mathscr{K}$, and has evolved in the most popular choice in practice. This approach, known as \emph{spectral truncation} or \emph{PCA regression}, uses a sieve that is optimally adapted to $X$,  but makes no reference to $\beta$, thus potentially not providing a good approximation of $\beta$. The thought is, however, that those characteristics of $\beta$ that do not correlate well with $X$ (and thus are not well expressed in the Karhunen-Lo\`eve basis) are worth sacrificing, as they cannot be well-recovered through the model \eqref{scalar_FLM} anyway. In practice, the Karhunen-Lo\`eve basis is estimated from the data, using a functional principal component analysis (\citet[Chapter 10]{RS05}; \citet{FerratyReg}; \citet{CFF02}; \citet{HP06}; \citet{yao2005}). 

Penalised methods, on the other hand, regularise the problem by placing restrictions directly on $\beta$, most often by penalising the degree of roughness of $\beta$ by means of a suitable norm. They lead to constrained least squares problems, instead of the unconstrained problem \eqref{normal_equation}, that are well posed. Estimation procedures following this paradigm have been studied, for instance, by \citet{ramsay1991}, \citet{marx1999}, \citet{CFS03}, \citet{LH07} and \citet{CKS09} to name only a few (see also \citet{RS05}). Depending on the nature of the penalty, the estimator can be represented in a finite a basis $\{\varphi_k\}$, typically a spline basis corresponding to a curvature penalty, and the functional regression translates to a multivariate ridge regression problem. The approach can be elegantly formulated within a reproducing kernel Hilbert space framework, which directly translates the infinite dimensional and ill-posed problem into a finite dimensional and well-posed one (\citet{YC10}). A general description of penalised methods can be viewed as instances of \emph{Tikhonov regularisation} (\citet{tikhonov1977}), where the sum of squares objective function $SS(\beta)=\sum_{i=1}^{n} (y_{i} - \bar{y} - \langle X_{i} - \bar{X},\beta\rangle)^{2}$ is penalised by the addition of a multiple of some norm of $\beta$.

\subsection{Tikhonov vs Spectral regularisation and Our Contributions}
For a regularisation parameter $\rho > 0$, the Tikhonov regularised estimator is defined as 
$$\hat{\beta}_{TR} = \hat{\mathscr{K}}_{\rho}^{-1}\hat{C}$$  
where $\hat{\mathscr{K}}_{\rho} = \hat{\mathscr{K}} + \rho\mathscr{I}$ with $\mathscr{I}$ the identity operator on ${\cal H}$. This estimator is the direct analogue of the ridge estimator (\citet{HK70}) in classical multivariate linear regression with correlated regressors and is the minimiser of the penalized least squares problem
\begin{equation}\label{tikhonov_objective}
\min_{\beta \in {\cal H}}\left\{ \frac{1}{n} \sum_{i=1}^{n} (y_{i} - \bar{y} - \langle X_{i} - \bar{X},\beta\rangle)^{2} + \rho \|\beta\|^{2}\right\}.
\end{equation} 
On the other hand, given $r\in\mathbb{N}$, the spectral truncation estimator is defined as
$$\hat{\beta}_{ST} = \left(\sum_{j=1}^{r}\hat\lambda_j^{-1}\hat\phi_j\otimes\hat\phi_j\right)\hat{C},$$  
where $\{\hat\lambda_j,\hat\phi_j\}$ is the spectrum of the empirical covariance $\hat{\mathscr{K}}$.
In one of the landmark papers on functional regression, \citet{HH07} established general error properties of both spectral truncation and of Tikhonov regularisation. Their results indicate the latter approach to be a more canonical avenue mainly due to two reasons, namely its minimaxity and its stability. More specifically, \citet{HH07} established minimax optimal rates of convergence for the Tikhonov estimator in the mean squared error (MSE) sense, but only obtained minimax rates for spectral truncation estimator in the weaker concentration probability sense\footnote{In an earlier paper, \citet{HHN06}  proved that a \textit{modified, non-linear (thresholded)} version of the spectral truncation estimator \emph{can} attain the minimax MSE rate.  The modification is done to ensure that the resulting estimator does not take very large values (see Theorem 5 in Appendix A.2 in \cite{HHN06}, pp. 116--117 in that paper, and the discussion after Theorem 1 in \citet{HH07}). Unfortunately, this modified estimator depends on arbitrary constants whose choices are subjective and so the estimator is not practically feasible. In fact, it remains unknown whether the original spectral truncation estimator attains the minimax rate of convergence in the mean squared sense at all. The non-linear modification appears to be necessary for the proof techniques of \citet{HH07} to work.} (see also Remark \ref{rem3}). Secondly, \citet{HH07} demonstrated that the spectral truncation approach can suffer from instabilities when the eigenvalues of $\mathscr{K}$ are not well-spaced, while Tikhonov regularisation is immune to such effects (see also \citet{YC10} for similar arguments). In the well-spaced regime, neither approach is seen to dominate the other, except in the rather special circumstance where: 
\begin{itemize}
\item[(i)] the leading eigenvalues of $\mathscr{K}$ are well-conditioned,
\item[] \emph{and} 
\item[(ii)] $\beta$ mostly contained in the span of the leading eigenfunctions of $\mathscr{K}$. 
\end{itemize}
When (i) and (ii) occur simultaneously, spectral truncation prevails, as the problem essentially reduces to a well-conditioned multivariate regression, in no need of regularisation.

The question this paper considers is the following: is it possible to leverage this last observation in order to improve upon the more canonical Tikhonov approach, by combining it in part with the projection rationale of spectral truncation? The answer is an unequivocal yes, and surprisingly the improvement is realised by a lienar estimator: a simple combination of the two approaches yields a hybrid estimator that remains linear, straightforward to compute, and provably strictly improves upon the Tikhonov estimator in a non-asymptotic sense (i.e. not a rate but an exact MSE sense). We note in passing that while adaptive estimators of the slope function have been considered (see e.g. \cite{CJ10}, \cite{CJ12}), they typically introduce a thresholding of the spectral estimator, thus becoming non-linear (and, there has not been any theoretical comparison of the non-asymptotic MSEs of these estimators to those of the Tikhonov or spectral truncation estimator).

The hybrid estimator we introduce (defined rigorously in Section \ref{2}) projects onto a finite dimensional subspace $\mathcal{H}_r$ (as would a sieve estimator), but rather than discard the residual component (i.e. the projection onto the orthogonal complement $\mathcal{H}_r^{\perp}$), it retains it, and applies a ridge regularisation to that and only that. The dimension $r<\infty$ of $\mathcal{H}_r$ does not grow with respect to $n$, and only the ridge parameter $\rho$ is sample-size dependent.  We demonstrate in Section \ref{2} that choosing $\mathcal{H}_r$ to be \emph{any} eigenspace of $\mathscr{K}$ of dimension $r<n$ yields an estimator that  attains the minimax MSE rate but in fact strictly improves upon the Tikhonov approach for large enough samples, uniformly over $\beta$  (Theorem \ref{thm-orcl1} and Corollary \ref{cor-orcl1}). Section \ref{2.2} exploits this observation in order to construct a practically feasible hybrid estimator, by empirical construction of $\mathcal{H}_r$. Section \ref{6} establishes that the empirically constructed estimator also yields the same strict improvement and rates. The practicalities of constructing the estimator are discussed in detail Subsection \ref{2.2.1}, where recommendations are given on how to best choose the dimension $r$ of $\mathcal{H}_r$ as well as the ridge parameter $\rho$. The key message is that one ought to select $r$ so that the first $r$ eigenvalues yield a mild condition index ($\lambda_1/\lambda_r$).  Section \ref{dis} then treats the case where the covariates are functions observed discretely on a grid, and shows that even in this case, the hybrid estimator still enjoys the same properties and improvement in mean squared error over the Tikhonov estimator in this setup. In Section \ref{3}, we conduct a simulation study that illustrates that one can make considerable performance gains in practice, even for moderate sample sizes.  The proofs of our formal results are collected in Section \ref{5}. First, though, Section \ref{notation} introduces some notation that will be employed throughout of the paper.

\section{Preliminaries}\label{notation} 

As mentioned in the introduction, $\mathcal{H}$ will be a real separable Hilbert space, assumed infinite dimensional, with inner product $\langle\cdot,\cdot\rangle:\mathcal{H}\times\mathcal{H}\rightarrow\mathbb{R}$, and induced norm $\|\cdot\|:\mathcal{H}\to[0,\infty)$. Given a linear operator $\mathscr{A}:\mathcal{H}\rightarrow\mathcal{H}$, we will denote its adjoint operator by $\mathscr{A}^*$, its Moore-Penrose generalised inverse by $\mathscr{A}^{-}$, and its inverse by $\mathscr{A}^{-1}$, provided the latter is well-defined. The operator,  Hilbert-Schmidt, and nuclear norms will respectively be
$$\hs \mathscr{A}\hs_{\infty}=\sup_{\|h\|=1}\|\mathscr{A}h\|,\quad \hs \mathscr{A} \hs =\sqrt{\mbox{trace}\left(\mathscr{A}^*\mathscr{A}\right)},\quad  \hs \mathscr{A} \hs_1=\mbox{trace}\left(\sqrt{\mathscr{A}^*\mathscr{A}}\right).$$ 
It is well-known that 
$$ \hs \mathscr{A}\hs_{\infty} \leq \hs \mathscr{A}\hs \leq \hs \mathscr{A}\hs_{1}$$
for any bounded linear operator satisfying $\hs \mathscr{A}\hs_{1} < \infty$. The identity operator on $\mathcal{H}$ will be denoted by $\mathscr{I}$. For a pair of elements $f,g\in \mathcal{H}$, the tensor product $f\otimes g:\mathcal{H}\to\mathcal{H}$ will be defined as the linear operator
$$(f\otimes g)u=\langle g,u\rangle f,\qquad u\in \mathcal{H}.$$
The same notation will be used to denote the tensor product between two operators, so that for operators $\mathscr{A}$, $\mathscr{B}$, and $\mathscr{G}$, one has
$$(\mathscr{A}\otimes \mathscr{B})\,\mathscr{G}=\mbox{trace}\left(\mathscr{B}^*\mathscr{G}\right)\mathscr{A}.$$
Given an estimator $\delta$ of the slope parameter $\beta\in\mathcal{H}$, we define the \emph{Mean Square Error (MSE)} in order to probe the performance of $\delta$,
$$\mathrm{MSE}(\delta) = \mathbb{E}\|\delta - \beta\|^2=\hs \mathbb{E}\{ (\delta - \beta) \otimes (\delta - \beta)\}\hs_1.$$ 
In the usual setting of $\mathcal{H}=L_2[0,1]$, this risk function reduces to the so-called \emph{Mean Integrated Squared Error},
$$\mathbb{E}\left\{\int_0^1(\delta(x)-\beta(x))^2\mathrm{d}x\right\},$$
but of course our results will be valid for any separable Hilbert space $\mathcal{H}$. We also note that all our results hold verbatim if instead of the MSE, we consider the (weaker) Hilbert-Schmidt norm of $\mathbb{E}\{ (\delta - \beta) \otimes (\delta - \beta)\}$ as the risk function.

\section{Motivation: Multivariate Plus Functional Regressors}\label{2} To motivate our hybrid estimator, let $X = Y + Z$, where:
\begin{enumerate}
\item The covariance operator $\mathscr{K}_{1}$ of $Y$ is of finite rank $r$.

\item The random elements $Y$ and $Z$ are uncorrelated.

\item The eigenspaces of $\mathscr{K}_1$ are orthogonal to those of the covariance $\mathscr{K}_2$ of $Z$.
\end{enumerate}
Observe that such a decomposition always exists by the Karhunen-Lo\`eve theorem. The heuristic now is that if $Y$ and $Z$ were observable, we would have a model with two orthogonal and uncorrelated regressors, one multivariate, and one functional,
$$y= \langle X,\beta\rangle+\epsilon=\langle Y,\beta_1\rangle + \langle Z,\beta_2\rangle +\epsilon,$$
with $\beta_1$ and $\beta_2$ being the projections of $\beta$ on the (orthogonal) ranges of $\mathscr{K}_1$ and $\mathscr{K}_2$. So, if $Y$ has a well-conditioned covariance $\mathscr{K}_{1}$, then instead of regularising the entire spectrum of the covariance operator $\mathscr{K}$ of $X$, one should carry out two separate regressions: a multivariate one, without regularisation, corresponding to the well-conditioned $\mathscr{K}_{1}$; and a functional one, with Tikhonov regularisation, corresponding to the ill-conditioned $\mathscr{K}_{2}$. The point here is that functional regression is not ill-conditioned as a result of poor design (as in the multivariate case when covariates may be correlated); it is ill-conditioned by the mere fact that it is infinite dimensional. But, in general, we should be able to extract a subspace on which it is well-conditioned.

We now turn to transforming our heuristic to a concrete result. Write the spectra of the two covariance operators $\mathscr{K}_{1}$ and $\mathscr{K}_{2}$ as 
$$\mathscr{K}_1=\sum_{j=1}^{r}\lambda_{j1}\phi_{j1}\otimes\phi_{j1}\qquad \&\qquad \mathscr{K}_2=\sum_{j\geq 1}\lambda_{j2}\phi_{j2}\otimes\phi_{j2}.$$
Define 
$$\beta_{1} = \left(\sum_{j=1}^{r}\phi_{j1}\otimes\phi_{j1}\right)\beta= \mathscr{P}_{1}\beta \qquad\&\qquad \beta_{2} =\left(\sum_{j\geq 1}\phi_{j2}\otimes\phi_{j2}\right)\beta = \mathscr{P}_{2}\beta$$
 to be the projections of $\beta$ into the eigenspaces of $\mathscr{K}_{1}$ and $\mathscr{K}_{2}$. Note that we must have $\beta = \beta_{1} + \beta_{2}$ for identifiablity so we henceforth assume that $\mbox{range}(K)=\mathcal{H}$. Thus, $\mathscr{P}_{1} + \mathscr{P}_{2} = \mathscr{I}$, where $\mathscr{I}$ is the identity operator on ${\cal H}$, and indeed $\langle{X},\beta\rangle = \langle{Y},\beta_{1}\rangle + \langle{Z},\beta_{2}\rangle$. Now consider the following modification of the population version of the Tikhonov penalised least squares problem
\begin{equation}\label{hybrid_objective}
\min_{\beta_{1}, \beta_{2} \in {\cal H}} \left\{ \mathbb{E}\big[y - \mathbb{E}[y] - \langle Y - \mathbb{E}[Y],\beta_{1}\rangle - \langle Z - \mathbb{E}[Z],\beta_{2}\rangle\big]^{2} + \rho \|\beta_{2}\|^{2}\right\}, 
\end{equation}
where we only penalize the part of the norm of $\beta$ that corresponds to $\beta_{2}$. Direct calculation in the above minimisation problem yields the unique minimiser 
$$\beta_{min} = \mathscr{K}_{1}^{-}C_{1} + \mathscr{K}_{\rho,2}^{-}C_{2},$$ 
where
$$C_{1} = \mathbb{E}[yY] - \mathbb{E}[y]\mathbb{E}[Y],\quad C_{2} = \mathbb{E}[yZ] - \mathbb{E}[y]\mathbb{E}[Z],\quad\mathscr{K}_{\rho,2} = \mathscr{K}_{2} + \rho\mathscr{P}_{2}.$$

 The form of the minimiser $\beta_{min}$ motivates the following definition of a hybrid regularised estimator of $\beta$ in the oracle case. Assume that a sample $(y_{i}, X_{i})$ is available, and the oracle reveals the decompositions $X_{i} = Y_{i} + Z_{i}$ into uncorrelated orthogonal components, as well as their respective covariances ($\mathscr{K}_1,\mathscr{K}_2$). Define a hybrid estimator as
\begin{eqnarray}
\widetilde{\beta}_{HR} = \mathscr{K}_{1}^{-}\widetilde{C}_{1} + \mathscr{K}_{\rho,2}^{-}\widetilde{C}_{2},   \label{HR-orcl1}
\end{eqnarray}
where 
\begin{eqnarray*}
\widetilde{C}_{1} &=& n^{-1} \sum_{i=1}^{n} (y_{i} - \overline{y})(Y_{i} - \overline{Y}),\qquad\mbox{with }\overline{Y} = n^{-1} \sum_{i=1}^{n} Y_{i}\\
\widetilde{C}_{2} &=& n^{-1} \sum_{i=1}^{n} (y_{i} - \overline{y})(Z_{i} - \overline{Z}),\qquad\mbox{with }\overline{Z} = n^{-1} \sum_{i=1}^{n} Z_{i}.
\end{eqnarray*}
On the other hand, the oracle version of the Tikhonov estimator is 
\begin{eqnarray}
\widetilde{\beta}_{TR} = \mathscr{K}_{\rho}^{-1}\hat{C},   \label{HR-orcl2}
\end{eqnarray}
where $\mathscr{K}_{\rho} = \mathscr{K} + \rho\mathscr{I}$ and $\hat{C} = n^{-1} \sum_{i=1}^{n} (y_{i} - \overline{y})(X_{i} - \overline{X})$. Our first theorem shows that, since the hybrid estimator makes explicit use of the additional information (the decomposition $(Z_i,Y_i)$ instead of just $X_i$), it \emph{improves} upon the Tikhonov estimator.

\begin{theorem} \label{thm-orcl1}
Let $X=Y+Z$, where $Y$ and $Z$ are uncorrelated random elements with $\mathbb{E}(\|Y\|^{4})<\infty$ and $\mathbb{E}(\|Z\|^{4})<\infty$. Assume that the eigenspaces of the respective covariances $\mathscr{K}_{1}$ and $\mathscr{K}_{2}$ of $Y$ and $Z$ are orthogonal. Further, assume that the $\langle X,\phi_{j}\rangle$'s are independent, where $\phi_{j}$'s are the eigenfunctions of $\mathscr{K}$. Then, 
\begin{itemize}
\item[(a)] For any fixed $\rho > 0$, $\mathrm{MSE}(\widetilde{\beta}_{TR}) > \mathrm{MSE}(\widetilde{\beta}_{HR})$ for all sufficiently large $n$. 
\item[(b)] If we choose $\rho = \rho(n) \sim cn^{-\gamma}$ for some $\gamma \in (0,1/2]$ and a constant $c > 0$, we have 
\begin{eqnarray*}
n^{2\gamma} \{ \mathrm{MSE}(\widetilde{\beta}_{TR}) - \mathrm{MSE}(\widetilde{\beta}_{HR}) \} > B(n) + o(1).
\end{eqnarray*}
Here, $B(n)$ converges to a positive constant if at least one of $\langle \beta,\phi_{j1}\rangle$, $j=1,2,\ldots,r$ is non-zero, else it converges to zero as $n \rightarrow \infty$.
\end{itemize}
\end{theorem}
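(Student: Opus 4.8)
The plan is to exploit the orthogonality of the two eigensystems to collapse the whole comparison onto a fixed finite-dimensional subspace, where it becomes an elementary ridge-versus-least-squares calculation, and then to track the leading terms carefully enough to read off both the sign in part (a) and the rate in part (b). Start from the observation that, because $Y$ and $Z$ are uncorrelated, $\mathscr{K}=\mathscr{K}_1+\mathscr{K}_2$, and since the eigenspaces of $\mathscr{K}_1$ and $\mathscr{K}_2$ are orthogonal, $\mathscr{K}_\rho=\mathscr{K}+\rho\mathscr{I}$ is block diagonal with respect to $\mathcal{H}=\mathrm{range}(\mathscr{P}_1)\oplus\mathrm{range}(\mathscr{P}_2)$; hence $\mathscr{K}_\rho^{-1}=\mathscr{K}_{\rho,1}^{-}+\mathscr{K}_{\rho,2}^{-}$ with $\mathscr{K}_{\rho,1}=\mathscr{K}_1+\rho\mathscr{P}_1$. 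Since $Y_i\in\mathrm{range}(\mathscr{P}_1)$ and $Z_i\in\mathrm{range}(\mathscr{P}_2)$ almost surely, $\hat C=\widetilde C_1+\widetilde C_2$ with $\widetilde C_k\in\mathrm{range}(\mathscr{P}_k)$, so $\widetilde\beta_{TR}=\mathscr{K}_{\rho,1}^{-}\widetilde C_1+\mathscr{K}_{\rho,2}^{-}\widetilde C_2$. Comparing with \eqref{HR-orcl1}, the two estimators agree on $\mathrm{range}(\mathscr{P}_2)$ and differ only on the $r$-dimensional subspace $\mathrm{range}(\mathscr{P}_1)$, where $\widetilde\beta_{TR}$ carries the ridge inverse $\mathscr{K}_{\rho,1}^{-}$ and $\widetilde\beta_{HR}$ the plain inverse $\mathscr{K}_1^{-}$; so, by the Pythagorean theorem applied to the two orthogonal pieces of the error,
\[
\mathrm{MSE}(\widetilde\beta_{TR})-\mathrm{MSE}(\widetilde\beta_{HR})=\mathbb{E}\bigl\|\mathscr{K}_{\rho,1}^{-}\widetilde C_1-\beta_1\bigr\|^2-\mathbb{E}\bigl\|\mathscr{K}_1^{-}\widetilde C_1-\beta_1\bigr\|^2=:\Delta_n ,
\]
and everything reduces to analysing $\Delta_n$ in a fixed finite-dimensional space.

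For that finite-dimensional analysis, substitute $y_i-\bar y=\langle Y_i-\bar Y,\beta_1\rangle+\langle Z_i-\bar Z,\beta_2\rangle+(\epsilon_i-\bar\epsilon)$ into $\widetilde C_1$ to get $\widetilde C_1=\widetilde{\mathscr{K}}_1\beta_1+W_1$, with $\widetilde{\mathscr{K}}_1$ the empirical covariance of the $Y_i$ and $W_1=n^{-1}\sum_i[\langle Z_i-\bar Z,\beta_2\rangle+\epsilon_i-\bar\epsilon](Y_i-\bar Y)$ a noise term whose mean is $O(n^{-1})$ (uncorrelatedness of $Y$ with $Z$ and with $\epsilon$ kills its leading part). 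Writing $D_n:=\widetilde C_1-\mathscr{K}_1\beta_1$ and using $\mathscr{K}_1^{-}\mathscr{K}_1=\mathscr{P}_1$, $\beta_1\in\mathrm{range}(\mathscr{P}_1)$ and $(\mathscr{K}_{\rho,1}^{-}\mathscr{K}_1-\mathscr{P}_1)\beta_1=-\rho\,\mathscr{K}_{\rho,1}^{-}\beta_1$, one obtains the clean identities $\mathscr{K}_1^{-}\widetilde C_1-\beta_1=\mathscr{K}_1^{-}D_n$ and $\mathscr{K}_{\rho,1}^{-}\widetilde C_1-\beta_1=\mathscr{K}_{\rho,1}^{-}(D_n-\rho\beta_1)$. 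Expanding the squared norms in the eigenbasis $\{\phi_{j1}\}_{j=1}^r$ of $\mathscr{K}_1$ and taking expectations, the only deterministic term that does not vanish is the Tikhonov ``bias'' $\rho^2\|\mathscr{K}_{\rho,1}^{-}\beta_1\|^2$; the rest is controlled through $\mathbb{E}\|D_n\|^2=O(n^{-1})$ and $\mathbb{E}D_n=O(n^{-1})$, which follow from $\mathbb{E}\|Y\|^4,\mathbb{E}\|Z\|^4<\infty$ (bounding the mean-squared Hilbert--Schmidt norm of $\widetilde{\mathscr{K}}_1-\mathscr{K}_1$ and $\mathbb{E}\|W_1\|^2$) together with finiteness of the error variance; and independence of the $\langle X,\phi_j\rangle$ — beyond the mere uncorrelatedness guaranteed by Karhunen--Lo\`eve — is what makes the limiting variances $v_j:=\lim_n n\,\mathbb{E}\langle D_n,\phi_{j1}\rangle^2$ explicitly computable, giving
\[
\Delta_n=\rho^2\bigl\|\mathscr{K}_{\rho,1}^{-}\beta_1\bigr\|^2-\frac1n\sum_{j=1}^r\Bigl(\lambda_{j1}^{-2}-(\lambda_{j1}+\rho)^{-2}\Bigr)v_j+o(n^{-1})
\]
up to a further non-negative $O(\rho/n)$ term.

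From this expansion the two parts fall out. For fixed $\rho>0$ with $\beta_1\neq0$, the first term is a strictly positive constant that dominates the $O(n^{-1})$ remainder, giving part (a). For $\rho=\rho(n)\sim cn^{-\gamma}$ with $\gamma\in(0,1/2]$ one has $\mathscr{K}_{\rho(n),1}^{-}\to\mathscr{K}_1^{-}$, so $\rho(n)^2\|\mathscr{K}_{\rho(n),1}^{-}\beta_1\|^2\sim c^2n^{-2\gamma}\|\mathscr{K}_1^{-}\beta_1\|^2$ while the correction is $O(n^{-1-\gamma}+n^{-2})$; multiplying by $n^{2\gamma}$ shows $n^{2\gamma}\Delta_n\to c^2\|\mathscr{K}_1^{-}\beta_1\|^2$, which is strictly positive exactly when some $\langle\beta,\phi_{j1}\rangle\neq0$ and is zero otherwise, so taking $B(n)$ to be the explicit leading part of $n^{2\gamma}\Delta_n$ gives part (b).

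The main obstacle is the finite-dimensional step: proving that every stochastic remainder is genuinely $O(n^{-1})$ (and not merely $o(1)$) under only fourth-moment hypotheses, and with enough uniformity in $\rho$ that the $n^{-2\gamma}$ rate of part (b) is not lost. This requires careful bookkeeping of the empirical centerings $\bar y,\bar Y,\bar Z$ and of the cross-moments between $W_1$ and $\widetilde{\mathscr{K}}_1-\mathscr{K}_1$ — precisely where independence of the $\langle X,\phi_j\rangle$ does its work, by making those cross-moments computable and of the right order. The degenerate case in which all $\langle\beta,\phi_{j1}\rangle$ vanish, so that the leading term $\rho^2\|\mathscr{K}_{\rho,1}^{-}\beta_1\|^2$ disappears, requires a separate and more delicate argument.
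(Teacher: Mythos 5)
Your proposal is correct and lands on exactly the paper's key identity (equation \eqref{mse-compare}), but it gets there by a genuinely more economical route. The paper computes $\mathrm{MSE}(\widetilde{\beta}_{TR})$ and $\mathrm{MSE}(\widetilde{\beta}_{HR})$ in full — including the operators $\mathrm{Var}(yX)$, $\mathrm{Var}(yY)$, $\mathrm{Var}(yZ)$ and the cross-covariance terms $T_3,T_4$, all of which require the fourth-moment factorisation supplied by independence of the $\langle X,\phi_j\rangle$ — and only then observes that everything outside $\mathrm{range}(\mathscr{P}_1)$ cancels in the difference. You instead establish the cancellation structurally at the outset: block-diagonality of $\mathscr{K}_\rho$ gives $\widetilde{\beta}_{TR}=\mathscr{K}_{\rho,1}^{-}\widetilde{C}_1+\mathscr{K}_{\rho,2}^{-}\widetilde{C}_2$, so the two estimators coincide on $\mathrm{range}(\mathscr{P}_2)$ and Pythagoras collapses the comparison to an $r$-dimensional ridge-versus-OLS problem; the identities $\mathscr{K}_1^{-}\widetilde{C}_1-\beta_1=\mathscr{K}_1^{-}D_n$ and $\mathscr{K}_{\rho,1}^{-}\widetilde{C}_1-\beta_1=\mathscr{K}_{\rho,1}^{-}(D_n-\rho\beta_1)$ then reproduce $\Delta_n=\rho^2\|\mathscr{K}_{\rho,1}^{-}\beta_1\|^2+n^{-1}A_1(\rho)+(\text{negligible})$, which is the paper's $\rho^2A_2(\rho)+n^{-1}A_1(\rho)$. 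What your route buys is clarity about where each hypothesis enters: only $\mathbb{E}\langle D_n,\phi_{j1}\rangle^2=O(n^{-1})$ and $\|\mathbb{E}D_n\|=O(n^{-1})$ are needed for the orders (fourth moments suffice), and independence is only needed to evaluate the limits $v_j$ explicitly — whereas the paper's brute-force computation uses it throughout. Two small caveats: your parenthetical claim that the cross term $-2\rho\,\mathbb{E}\langle\mathscr{K}_{\rho,1}^{-}D_n,\mathscr{K}_{\rho,1}^{-}\beta_1\rangle$ is non-negative is unjustified, but harmless since it is $O(\rho/n)$ and hence dominated by $\rho^2+n^{-1}$ in both regimes; and the degenerate case $\beta_1=0$ that you flag for part (a) is not rescuable by a "more delicate argument" — your own expansion shows $\Delta_n=n^{-1}A_1(\rho)<0$ there, so part (a) genuinely requires some $\langle\beta,\phi_{j1}\rangle\neq 0$ (a hypothesis present in part (b) and Corollary \ref{cor-orcl1} but omitted from part (a)); the paper's proof silently assumes it as well, so you are, if anything, being more careful than the source.
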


\noindent The independence assumption in the above theorem obviously holds for Gaussian processes, and for any process whose Karhunen-L\`oeve expansion has independent coefficients. It can be relaxed to requiring that $\mathbb{E}(\prod_{u=1}^{4}\langle X,\phi_{j_{u}}\rangle^{l_{u}}) = \prod_{u=1}^{4} \mathbb{E}(\langle X,\phi_{j_{u}}\rangle^{l_{u}})$ for $l_{u}$'s satisfying $1 \leq l_{u} \leq 4$ and $\sum_{u=1}^{4} l_{u} \leq 4$. This can be viewed as a ``pseudo-independence'' condition, and similar assumptions have been considered for analysis of high-dimensional data (see, e.g., Sec. 3 in \cite{CQ10}, Sec. 4 in \cite{BS96}). As a direct consequence of part (b) of the above theorem, we have the corollary:

\begin{corollary}  \label{cor-orcl1}
Under the conditions of Theorem \ref{thm-orcl1} and in the setup of part (b) of that theorem, if at least one of $\langle \beta,\phi_{j1}\rangle$, $j=1,2,\ldots,r$ is non-zero, there exists a constant $c_{0} > 0$ such that
$$\mathrm{MSE}(\widetilde{\beta}_{TR}) - \mathrm{MSE}(\widetilde{\beta}_{HR}) > c_{0}n^{-2\gamma}$$
for all sufficiently large $n$. If $\langle \beta,\phi_{j1}\rangle$ is uniformly zero for $1\leq j\leq r$, the two MSE norms are asymptotically equal.
\end{corollary}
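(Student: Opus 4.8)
The plan is to read Corollary~\ref{cor-orcl1} off from part~(b) of Theorem~\ref{thm-orcl1}, so the argument will be short; what needs care is only the precise sense of the phrase ``asymptotically equal'' in the degenerate case, which forces one to extract slightly more from the proof of Theorem~\ref{thm-orcl1}(b) than its displayed inequality alone.

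For the first assertion I would argue directly. By part~(b), $n^{2\gamma}\{\mathrm{MSE}(\widetilde{\beta}_{TR})-\mathrm{MSE}(\widetilde{\beta}_{HR})\} > B(n) + o(1)$, and when at least one $\langle\beta,\phi_{j1}\rangle\neq 0$ the sequence $B(n)$ converges to a strictly positive limit, say $b_{0}$. Hence for all $n$ beyond some threshold the right-hand side exceeds $b_{0}/2$, so that $\mathrm{MSE}(\widetilde{\beta}_{TR})-\mathrm{MSE}(\widetilde{\beta}_{HR}) > (b_{0}/2)\,n^{-2\gamma}$; taking $c_{0}=b_{0}/2$ (indeed any $c_{0}\in(0,b_{0})$) completes this part. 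This step is immediate once Theorem~\ref{thm-orcl1}(b) is in hand.

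For the second assertion I would go back into the computation underlying Theorem~\ref{thm-orcl1}(b). That computation does not merely bound, but in fact identifies, the leading order of the scaled risk difference, yielding $n^{2\gamma}\{\mathrm{MSE}(\widetilde{\beta}_{TR})-\mathrm{MSE}(\widetilde{\beta}_{HR})\} = B(n) + o(1)$ with $B(n)\ge 0$, the inequality in the statement being the half of this relevant for domination. When $\langle\beta,\phi_{j1}\rangle = 0$ for every $1\le j\le r$, the terms that survive in $B(n)$ are precisely those proportional to the $\langle\beta,\phi_{j1}\rangle$, so $B(n)\to 0$; consequently $\mathrm{MSE}(\widetilde{\beta}_{TR})-\mathrm{MSE}(\widetilde{\beta}_{HR}) = o(n^{-2\gamma})$. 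Since both oracle estimators attain the minimax rate and differ, by construction, only in the finite-rank block associated with $\mathscr{K}_{1}$, their risks agree up to a term of strictly smaller order than their common rate of convergence, i.e.\ the two MSEs are asymptotically equal (in particular their ratio tends to $1$).

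The only real obstacle lies not in the corollary itself but in its parent: one must verify that the $o(1)$ remainder in the expansion of $n^{2\gamma}\{\mathrm{MSE}(\widetilde{\beta}_{TR})-\mathrm{MSE}(\widetilde{\beta}_{HR})\}$ is genuinely negligible, so that when the leading $\langle\beta,\phi_{j1}\rangle$-term is absent nothing of comparable order is concealed in it. Granting the sharp expansion established in the proof of Theorem~\ref{thm-orcl1}(b), both claims of the corollary follow from the elementary limit arguments above.
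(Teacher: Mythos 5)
Your argument is correct and is essentially the paper's own: the paper states the corollary as a direct consequence of Theorem \ref{thm-orcl1}(b), and your first part is exactly that deduction, while your second part rightly notes that the degenerate case requires returning to the exact identity $\mathrm{MSE}(\widetilde{\beta}_{TR})-\mathrm{MSE}(\widetilde{\beta}_{HR})=n^{-1}A_{1}(\rho)+\rho^{2}A_{2}(\rho)$ from equation \eqref{mse-compare}, since the displayed one-sided bound of part (b) alone would not yield asymptotic equality. No gaps.
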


Thus, in the oracle case, as long as $\beta$ is at least partially expressed by the principal components of $Y$, then the hybrid regularisation estimator will improve on the Tikhonov estimator -- whether $\rho$ is held fixed, or allowed to decay polynomially in $n$, as one usually does. The next section deals with carrying over this improvement to an empirically feasible estimator.

\section{The Hybrid Estimator}\label{hybrid_section}

 In practice, the components $Y$ and $Z$ are unobservable, and their covariance operators $\mathscr{K}_{1}$ and $\mathscr{K}_{2}$ are unknown. Still, we can replace them by their empirical versions, and consider whether we can still improve upon the Tikhonov estimator by the hybrid approach when doing so. We will focus on the case where $Y$ is the projection of $X$ onto its first $r$ principal components, and $Z=X-Y$, since this case admits straightforward empirical versions of all the quantities involved.  We first define the empirical version of the hybrid estimator (Subsection \ref{2.2}); next we establish its superiority to Tikhonov regularisation (Subsection \ref{6}); and then, we discuss its (straightforward) practical implementation (Subsection \ref{2.2.1}).

\subsection{Definition}\label{2.2}

\noindent Given an i.i.d. sample $X_{1},\ldots, X_{n}$ ditributed as $X$, denote their empirical covariance and its spectrum as
$$\hat{\mathscr{K}}=\frac{1}{n}\sum_{i=1}^{n}X_i\otimes X_i=\sum_{i=1}^{n}\hat{\lambda}_j\hat\phi_j\otimes\hat\phi_j.$$
Now define
$$\hat{Y}_{i} = \sum_{j=1}^{r} \langle X_{i},\hat{\phi}_{j}\rangle \hat{\phi}_{j}=\hat{\mathscr{P}}_{1}X_{i}\quad\mbox{and}\quad\hat{Z}_{i} = X_{i} - \hat{Y}_{i}=\hat{\mathscr{P}}_{2}X_{i}, \qquad i = 1,\ldots,n,$$ 
with $\hat{\mathscr{P}}_{1}=\sum_{i=1}^{r}\hat{\phi}_i\otimes\hat{\phi}_i$ the projection onto $\mathrm{span}\{\hat\phi_1,\ldots,\hat\phi_r\}$ and $\hat{\mathscr{P}}_{2} = \mathscr{I} - \hat{\mathscr{P}}_{1}$.   Let us denote the sample covariance operators of the $\hat{Y}_{i}$'s and the $\hat{Z}_{i}$'s as 
$$\hat{\mathscr{K}}_{1}=\frac{1}{n}\sum_{i=1}^{n}\hat{Y}_i\otimes \hat{Y}_i\quad\&\quad\hat{\mathscr{K}}_{2}=\frac{1}{n}\sum_{i=1}^{n}\hat{Z}_i\otimes \hat{Z}_i,$$
respectively.  Finally, let $\hat{C}_{1}$ and $\hat{C}_2$ be the empirical covariances between the proxy regressors and the responses,
$$\widehat{C}_{1} = \frac{1}{n} \sum_{i=1}^{n} (y_{i} - \overline{y})\left(Y_{i} - \frac{1}{n}\sum_{i=1}^{n}\hat{Y}_i\right),\quad\widehat{C}_{2} = \frac{1}{n} \sum_{i=1}^{n} (y_{i} - \overline{y})\left(Z_{i} - \frac{1}{n}\sum_{i=1}^{n}\hat{Z}_i\right).$$
From \eqref{HR-orcl1}, it is clear that a natural definition of the hybrid regularisation estimator of $\beta$ is:

\begin{definition}[Hybrid Regularisation Estimator] The hybrid regularisation estimator $\hat\beta_{HR}$ is defined as the solution to the penalised least squares problem 
\begin{eqnarray*}
\min_{\beta_{1}, \beta_{2} \in {\cal H}} n^{-1} \sum_{i=1}^{n} (y_{i} - \bar{y} - \langle \hat{Y}_{i} - \bar{Y},\beta_{1}\rangle - \langle \hat{Z}_{i} - \bar{Z},\beta_{2}\rangle)^{2} + \rho \|\beta_{2}\|^{2},
\end{eqnarray*}
where $\beta = \beta_{1} + \beta_{2}$ with $\beta_{1} = \hat{\mathscr{P}}_{1}\beta$, $\beta_{2} = \hat{\mathscr{P}}_{2}\beta$, $\bar{Y} = n^{-1}\sum_{i=1}^{n} \hat{Y}_{i}$ and $\bar{Z} = n^{-1}\sum_{i=1}^{n} \hat{Z}_{i}$. It is explicitly given by
\begin{equation}
\hat{\beta}_{HR} = \hat{\mathscr{K}}_{1}^{-}\hat{C}_{1} + \hat{\mathscr{K}}_{\rho,2}^{-}\hat{C}_{2},   \label{HR-data1}
\end{equation}
where $\hat{\mathscr{K}}_{\rho,2} = \hat{\mathscr{K}}_{2} + \rho\hat{\mathscr{P}}_{2}$.

\end{definition}

\begin{remark} \label{rem4}
A priori, there is no reason why one should choose $\hat Y$ to be the projection of $X$ onto the \textbf{first} $r$ eigenfunctions: \textbf{any} collection of $r$ eigenfunctions could be considered. In principle, we choose $r$ eigenfunctions of $\hat{K}$ that: (1) yield a component $\hat Y$ with a well-conditioned covariance operator $\hat{K}_1$; (2) and capture a large part of the norm of $\beta$. Since $\beta$ is unknown in practice, (2) is impossible to control. Still, the whole point of fitting a functional linear model is the understanding that $\beta$ correlates well with the signal rather than the noise in $X$, and thus this correlation is expected to be carried by the leading principal components of $X$, explaining our choice of selecting the first $r$ components, subject to a well-conditioning restriction. 
\end{remark}

\subsection{Theoretical Properties}
\label{6}

We now turn to prove that both the gain in efficiency and the minimaxity observed in the oracle setup also carry over to the practically feasible hybrid estimator. We will make use of the following assumptions.

\begin{itemize}
\item[\textbf{(A1)}] $X$ is a centered Gaussian process. 
\item[\textbf{(A2)}] The eigenvalues $\lambda_{1} > \lambda_{2} > \ldots$ of $\mathscr{K}$ are all positive. Also, for constants $\alpha > 1$, $0 < c < C$ and $j_{0} \geq 1$, we have $cj^{-\alpha} \leq \lambda_{j} \leq Cj^{-\alpha}$ for all $j \geq j_{0}$. 
\item[\textbf{(A3)}] For constants $d > 0$, $\eta > 1/2$ and $j_{0} \geq 1$, we have $|\langle \beta,\phi_{j}\rangle| \leq dj^{-\eta}$ for all $j \geq j_{0}$. 
\end{itemize}

Condition (A1) can be relaxed to accommodate other distributions. In that case, we would need to assume that $\mathbb{E}(\|X\|^{16}) < \infty$, that $\mathbb{E}(\langle X,\phi_{j}\rangle^{4}) \leq C\lambda_{j}^{2}$ for all $j \geq 1$ and a constant $C > 0$, and the pseudo-independence condition similar to that mentioned earlier, i.e. that $\mathbb{E}(\prod_{u=1}^{4}\langle X,\phi_{j_{u}}\rangle^{l_{u}}) = \prod_{u=1}^{4} \mathbb{E}(\langle X,\phi_{j_{u}}\rangle^{l_{u}})$ for  $l_{u}$'s satisfying $1 \leq l_{u} \leq 4$ and $\sum_{u=1}^{4} l_{u} \leq 16$. These in particular hold if $X$ has the representation $X = \sum_{j=1}^{\infty} \lambda_{j}^{1/2}V_{j}\phi_{j}$, where the $V_{j}$'s are i.i.d. zero mean random variables with finite moments (cf. the discussion before Corollary \ref{cor-orcl1}). Conditions (A2) and (A3) have been used by \citet{HH07} to obtain the rate of convergence of the Tikhonov regularisation estimator in terms of its integrated mean squared error. The interplay between $\alpha$ and $\eta$ determines the degree of difficulty of estimating $\beta$. Clearly, the larger the value of $\eta$, the easier is the estimation problem. If $\alpha$ is large, then the distribution of $X$ becomes almost finite dimensional. In that case, if $\eta$ is small, then the estimation problem is difficult if there are important components of $\beta$, namely, $\langle\beta,\phi_{j}\rangle$ corresponding to small values of $\lambda_{j}$. This is because there is very little information on $X$ in those directions, and thus those components of $\beta$ will be difficult to estimate. We will later see exactly how $\alpha$ and $\eta$ determine the precision in estimating $\beta$. Condition (A2) is sufficient to ensure that $\mathbb{E}(\|X\|^{2}) = \sum_{j=1}^{\infty} \lambda_{j} < \infty$, which in turn implies that $X$ is a (tight) random element in ${\cal H}$. Condition (A3) ensures that $\|\beta\|^{2} < \infty$.

Our first result compares the efficiency of the oracle version of the hybrid and Tikhonov estimator to that of their empirical version for a fixed value of the ridge parameter $\rho$.
\begin{theorem}  \label{thm-data1}
Suppose that conditions (A1)--(A3) hold, and $\alpha < 2\eta$. Then,
\begin{eqnarray}  \label{HR-data2}
&& |\mathrm{MSE}(\hat{\beta}_{HR}) - \mathrm{MSE}(\widetilde{\beta}_{HR})| \\
&& = O(1)\left[\left\{\frac{1}{n\rho^{1+\frac{1}{\alpha}}} + \rho^{m}\right\}^{1/2}\left(\frac{1}{n\rho^{1+\frac{1}{\alpha}}}\right)^{1/2} + \frac{1}{n\rho^{1+\frac{1}{\alpha}}}\right] \nonumber
\end{eqnarray}
for any sequence $\rho \rightarrow 0$ satisfying $n\rho^{2} \rightarrow \infty$ as $n \rightarrow \infty$. Further,  
$$\mathrm{MSE}(\hat{\beta}_{HR}) = O(1)\left\{\frac{1}{n\rho^{1+\frac{1}{\alpha}}} + \rho^{m}\right\}$$
as $n \rightarrow \infty$.  Here $m = (2\eta-1)/\alpha$ or $m = 2$ according as $\alpha > \eta - 1/2$ or $\alpha < \eta - 1/2$. Moreover, analogous rates of convergence also hold for $|\mathrm{MSE}(\hat{\beta}_{TR}) - \mathrm{MSE}(\widetilde{\beta}_{TR})|$ and $\mathrm{MSE}(\hat{\beta}_{TR})$.
\end{theorem}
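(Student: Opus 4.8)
The plan is to compare $\hat{\beta}_{HR}$ not directly with $\beta$ but with its oracle counterpart $\widetilde{\beta}_{HR}$, via the elementary identity
\begin{equation*}
\mathrm{MSE}(\hat{\beta}_{HR}) - \mathrm{MSE}(\widetilde{\beta}_{HR}) = 2\,\mathbb{E}\langle\widetilde{\beta}_{HR} - \beta,\ \hat{\beta}_{HR} - \widetilde{\beta}_{HR}\rangle + \mathbb{E}\|\hat{\beta}_{HR} - \widetilde{\beta}_{HR}\|^{2},
\end{equation*}
combined with Cauchy--Schwarz. This reduces the whole theorem to two estimates: (i) the oracle bound $\mathrm{MSE}(\widetilde{\beta}_{HR}) = O(n^{-1}\rho^{-1-1/\alpha} + \rho^{m})$, and (ii) the discrepancy bound $\mathbb{E}\|\hat{\beta}_{HR} - \widetilde{\beta}_{HR}\|^{2} = O(n^{-1}\rho^{-1-1/\alpha})$. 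Given these, the identity and Cauchy--Schwarz yield $|\mathrm{MSE}(\hat{\beta}_{HR}) - \mathrm{MSE}(\widetilde{\beta}_{HR})| \le 2\,\mathrm{MSE}(\widetilde{\beta}_{HR})^{1/2}\bigl(\mathbb{E}\|\hat{\beta}_{HR} - \widetilde{\beta}_{HR}\|^{2}\bigr)^{1/2} + \mathbb{E}\|\hat{\beta}_{HR} - \widetilde{\beta}_{HR}\|^{2}$, which is precisely the right-hand side of \eqref{HR-data2}, and then $\mathrm{MSE}(\hat{\beta}_{HR}) \le \mathrm{MSE}(\widetilde{\beta}_{HR}) + |\mathrm{MSE}(\hat{\beta}_{HR}) - \mathrm{MSE}(\widetilde{\beta}_{HR})|$ gives the asserted rate for $\mathrm{MSE}(\hat{\beta}_{HR})$. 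The assertions for $\hat{\beta}_{TR}$ and $\widetilde{\beta}_{TR}$ follow from the same scheme in the degenerate case $r=0$, which on the oracle side is essentially \citet{HH07}.

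For estimate (i) I would first record that, since $\mathscr{K}$ and the projections $\mathscr{P}_{1},\mathscr{P}_{2}$ are simultaneously diagonalised by $\{\phi_{j}\}$, the oracle estimator has the explicit coordinate form $\widetilde{\beta}_{HR} = \sum_{j\le r}\lambda_{j}^{-1}\langle\hat{C},\phi_{j}\rangle\phi_{j} + \sum_{j>r}(\lambda_{j}+\rho)^{-1}\langle\hat{C},\phi_{j}\rangle\phi_{j}$ --- unregularised inversion on the first $r$ coordinates, Tikhonov shrinkage afterwards --- so $\mathrm{MSE}(\widetilde{\beta}_{HR})$ decomposes coordinatewise into a bias and a variance part. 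The $j\le r$ block contributes only $O(n^{-1})$, since $\lambda_{j}\ge\lambda_{r}>0$. For $j>r$, Gaussianity (A1) gives $\mathbb{E}\langle\hat{C},\phi_{j}\rangle = (1-n^{-1})\lambda_{j}\langle\beta,\phi_{j}\rangle$ and $\mathrm{Var}(\langle\hat{C},\phi_{j}\rangle)\asymp n^{-1}\lambda_{j}$, so the tail variance is $\asymp n^{-1}\sum_{j>r}\lambda_{j}(\lambda_{j}+\rho)^{-2}$ and the tail squared bias is $\asymp\sum_{j>r}\rho^{2}(\lambda_{j}+\rho)^{-2}\langle\beta,\phi_{j}\rangle^{2}$; evaluating these series under (A2)--(A3) exactly as in \citet{HH07} gives the orders $n^{-1}\rho^{-1-1/\alpha}$ and $\rho^{m}$, the dichotomy in $m$ reflecting whether the bias series is controlled by its head ($m=2$) or its tail ($m=(2\eta-1)/\alpha$).

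Estimate (ii) is where the work lies. Writing $\hat{\beta}_{HR}$ in the analogous coordinate form but with the empirical spectrum $(\hat{\lambda}_{j},\hat\phi_{j})$ in place of $(\lambda_{j},\phi_{j})$, the difference $\hat{\beta}_{HR} - \widetilde{\beta}_{HR}$ is exactly the error of a first-order perturbation of a spectral functional of $\hat{\mathscr{K}}$ evaluated at $\hat{C}$. I would isolate the block $j\le r$ from the block $j>r$. The block $j\le r$ is harmless: $r$ is fixed and, by (A2), the eigenvalues and the consecutive gaps $\lambda_{j}-\lambda_{j+1}$, $1\le j\le r$, are bounded below by fixed positive constants, so classical perturbation inequalities give $|\hat{\lambda}_{j}-\lambda_{j}| + \|\hat\phi_{j}-\phi_{j}\| = O_{p}(\hs\hat{\mathscr{K}}-\mathscr{K}\hs)$ with moments of every order under (A1), contributing $O(n^{-1})$ in all. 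For the block $j>r$ I would substitute $\hat{C} = \mathscr{K}\beta + (\hat{C}-\mathscr{K}\beta)$: on the deterministic part $\mathscr{K}\beta$ use the resolvent identity together with the fact that $\mathscr{K}_{\rho}^{-1}\mathscr{K}\beta$ has coordinates bounded by $|\langle\beta,\phi_{j}\rangle| \le dj^{-\eta}$, reducing matters to $\mathbb{E}\hs\mathscr{K}_{\rho}^{-1}(\hat{\mathscr{K}}-\mathscr{K})\hs^{2}$-type quantities weighted by the slope's decay, which the moment identities from (A1) and the series bounds of (i) render $O(n^{-1}\rho^{-1-1/\alpha})$; on the stochastic part $\hat{C}-\mathscr{K}\beta$, whose $\phi_{j}$-coordinates are of size $(\lambda_{j}/n)^{1/2}$, the same resolvent manipulation together with $\hs\hat{\mathscr{K}}-\mathscr{K}\hs = O_{p}(n^{-1/2})$ yields a bound of order $n^{-2}\rho^{-3-1/\alpha}$, which is $o(n^{-1}\rho^{-1-1/\alpha})$ precisely because $n\rho^{2}\to\infty$.

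The main obstacle, inside (ii), is the careful bookkeeping of the double and triple series that appear once the eigenfunction perturbation $\hat\phi_{j}-\phi_{j} = \sum_{k\ne j}(\lambda_{j}-\lambda_{k})^{-1}\langle(\hat{\mathscr{K}}-\mathscr{K})\phi_{j},\phi_{k}\rangle\phi_{k} + \mathrm{(remainder)}$ is expanded and averaged: the spectral gaps $\lambda_{j}-\lambda_{k}$ degrade faster than the eigenvalues themselves, so crude operator-norm bounds are too lossy and one must track the weights $(\lambda_{j}+\rho)^{-1}$, $(\lambda_{j}+\rho)^{-2}$ and $(\lambda_{j}-\lambda_{k})^{-1}$ against the decays $\lambda_{j}\asymp j^{-\alpha}$ and $|\langle\beta,\phi_{j}\rangle|\le dj^{-\eta}$. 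It is here that $\alpha < 2\eta$ (for summability of the resulting series) and $n\rho^{2}\to\infty$ (for the linearisation remainders, this being equivalent to $\hs\hat{\mathscr{K}}-\mathscr{K}\hs/\rho \to 0$ in probability and in the relevant moments) are used. Carrying this out establishes (ii) and hence the theorem; the argument for $\hat{\beta}_{TR}$ versus $\widetilde{\beta}_{TR}$ is verbatim the same with the truncation block absent.
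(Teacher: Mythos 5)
Your proposal is correct and follows essentially the same route as the paper: the same bias--variance identity plus Cauchy--Schwarz reducing the theorem to the oracle bound $\mathrm{MSE}(\widetilde{\beta}_{HR}) = O(n^{-1}\rho^{-1-1/\alpha} + \rho^{m})$ (obtained coordinatewise via the series estimates of Lemma \ref{lemma-2}) and the discrepancy bound $\mathbb{E}\|\hat{\beta}_{HR}-\widetilde{\beta}_{HR}\|^{2} = O(n^{-1}\rho^{-1-1/\alpha})$, the latter handled by separating the $j\le r$ and $j>r$ blocks and controlling the deterministic and stochastic parts of $\hat{C}$ through resolvent/perturbation expansions exactly as in the paper's decomposition into the terms $U_{1},\ldots,U_{8}$. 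The structural claims, the identification of the dominant terms, and the roles of $\alpha<2\eta$ and $n\rho^{2}\to\infty$ all match the paper's argument, with only the detailed series bookkeeping (which you correctly flag as the main labour) left unexecuted.
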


 The terms $n^{-1}\rho^{-1-1/\alpha}$ and $\rho^{m}$ in the expression of $\mathrm{MSE}(\hat{\beta}_{HR})$ given in the above theorem clearly show the effects of the variance and the bias terms, respectively, in the estimation of $\beta$. It also reveals that only the bias is affected by the rate of decay of the $\langle\beta,\phi_{j}\rangle$'s but not the variance. This is expected because the variability in the estimation of $\beta$ should purely depend on the fluctuations in $X$, which depends on the rate of decay of the eigenvalues of the covariance operator $\mathscr{K}$ of $X$. 

As a corollary, we can obtain a similar comparison theorem when the ridge parameter $\rho$ decays with $n$:
\begin{corollary}  \label{cor-data1}
Consider the setup of Theorem \ref{thm-data1}. Let $c > 0$ be a fixed constant. Then, 

\begin{eqnarray}  \label{HR-data3}
&& \mathrm{MSE}(\hat{\beta}_{HR}) \nonumber \\
&& = 
\begin{cases}
O(n^{-(2\eta-1)/(\alpha+2\eta)}) &\text{if $\eta - 1/2 < \alpha < 2\eta$ and $\rho \sim cn^{-\frac{\alpha}{\alpha+2\eta}}$} \\
O(n^{-2\alpha/(3\alpha+1)}) &\text{if $\alpha < \eta - 1/2$ and $\rho \sim cn^{-\frac{\alpha}{3\alpha+1}}$}
\end{cases}
\end{eqnarray}
as $n \rightarrow \infty$. Further, the same rates of convergence also hold for $\mathrm{MSE}(\hat{\beta}_{TR})$.
\end{corollary}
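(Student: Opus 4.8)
The plan is to derive Corollary \ref{cor-data1} directly from the rate bound
$\mathrm{MSE}(\hat{\beta}_{HR}) = O(1)\{(n\rho^{1+1/\alpha})^{-1} + \rho^{m}\}$
established in Theorem \ref{thm-data1}, simply by choosing the ridge parameter $\rho$ to balance the variance contribution $(n\rho^{1+1/\alpha})^{-1}$ against the bias contribution $\rho^{m}$, and then checking that the resulting choice is admissible (i.e. $\rho\to 0$ and $n\rho^{2}\to\infty$, as required for Theorem \ref{thm-data1} to apply) and lies in the regime that fixes the stated value of $m$.

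First I would treat the case $\eta-1/2 < \alpha < 2\eta$, where $m = (2\eta-1)/\alpha$. Setting the two terms equal, $(n\rho^{1+1/\alpha})^{-1} = \rho^{m}$, gives $\rho^{m+1+1/\alpha} = n^{-1}$; since $m+1+1/\alpha = (2\eta+\alpha)/\alpha$, this yields $\rho \sim c\,n^{-\alpha/(\alpha+2\eta)}$, and then each of the two terms is of order $\rho^{m} = n^{-(2\eta-1)/(\alpha+2\eta)}$, which is the asserted rate. Next, the case $\alpha < \eta-1/2$, where $m = 2$: balancing gives $\rho^{3+1/\alpha} = n^{-1}$, i.e. $\rho\sim c\,n^{-\alpha/(3\alpha+1)}$, and both terms are then of order $\rho^{2} = n^{-2\alpha/(3\alpha+1)}$, as claimed.

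It remains to verify admissibility of these choices. In the first case $n\rho^{2}\sim n^{(2\eta-\alpha)/(\alpha+2\eta)}\to\infty$ precisely because $\alpha < 2\eta$, and clearly $\rho\to 0$; moreover $\alpha < 2\eta$ is exactly the standing hypothesis of Theorem \ref{thm-data1}, while $\alpha > \eta-1/2$ places us in the regime $m = (2\eta-1)/\alpha$. In the second case $n\rho^{2}\sim n^{(\alpha+1)/(3\alpha+1)}\to\infty$ since $\alpha > 0$, $\rho\to 0$, and $\alpha < \eta-1/2 < 2\eta$, so the hypothesis of Theorem \ref{thm-data1} holds while $\alpha < \eta-1/2$ is the regime $m = 2$. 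The claim for $\mathrm{MSE}(\hat{\beta}_{TR})$ is identical: by the last sentence of Theorem \ref{thm-data1} the same bound $O(1)\{(n\rho^{1+1/\alpha})^{-1}+\rho^{m}\}$ governs $\mathrm{MSE}(\hat{\beta}_{TR})$, and the same optimisation applies verbatim (these are also precisely the rates obtained by \citet{HH07} for the Tikhonov estimator).

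Since this is essentially a one-line optimisation, there is no genuine obstacle; the only thing to be careful about is bookkeeping — ensuring that the balancing value of $\rho$ is consistent with the regime ($\alpha\gtrless\eta-1/2$) that determines $m$, and that the side conditions $\rho\to 0$, $n\rho^{2}\to\infty$, $\alpha < 2\eta$ needed to invoke Theorem \ref{thm-data1} all hold for that value of $\rho$.
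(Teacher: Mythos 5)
Your proposal is correct and matches the paper's (implicit) argument: the paper gives no separate written proof of Corollary \ref{cor-data1}, noting only that the rates of decay of $\rho$ are chosen to optimise the bound of Theorem \ref{thm-data1}, which is exactly the variance--bias balancing you carry out. Your verification of the side conditions $\rho\to 0$, $n\rho^{2}\to\infty$, and the consistency of each balancing choice with the regime determining $m$ is accurate and completes the argument.
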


 The above corollary gives the rates of convergences of the hybrid regularisation estimator in terms of its mean squared error under different regimes determined by $\alpha$ and $\eta$. These regimes correspond to the different degrees of difficulty of the estimation problem in the functional linear regression setting. The rates of decay of $\rho$ to zero are chosen so as to optimize the rates of convergence of the MSEs. 

\begin{remark}\label{rem1} Note that the asymptotic rate of convergence of $\mathrm{MSE}(\hat{\beta}_{TR})$ was proved in Theorem 2 in \citet{HH07} under the restriction that $\alpha < \eta + 1/2$ and $\rho \sim cn^{-\alpha/(\alpha+2\eta)}$. The above corollary reveals that this upper bound on the values of the decay rate of the eigenvalues of $X$ can be relaxed. Further, the same rate of convergence is in fact true for a wider class of values of $\alpha$ and $\eta$ so long as $\eta - 1/2 < \alpha < 2\eta$. Note that \citet{HH07} did not require $\alpha > \eta - 1/2$. 
\end{remark}

\begin{remark} \label{rem2} \citet{HH07} showed that the rate of convergence of $\mathrm{MSE}(\hat{\beta}_{TR})$ is optimal in a minimax sense under the conditions of Theorem \ref{thm-data1} when $1 < \alpha < \eta + 1/2$ and $\rho \sim cn^{-\alpha/(\alpha+2\eta)}$. From Corollary \ref{cor-data1}, Remark \ref{rem1} and the proof of  equation (3.11) in \citet{HH07}, it follows that the hybrid regularisation estimator also enjoys the same minimax optimal rate of convergence for the same choice of regularisation parameter in the regime $\max(1,\eta - 1/2) < \alpha < 2\eta$. 
\end{remark} 

\begin{remark} \label{rem3} The spectral truncation estimator $\widehat{\beta}_{ST}$ 
studied by \citet{HH07} is known to satisfy $\mathrm{MSE}(\widehat{\beta}_{ST}) > \delta n^{-(2\eta-1)/(\alpha+2\eta)}$ for some $\delta > 0$ and sufficiently large $n$ and that $\mathrm{ISE}(\widehat{\beta}_{ST}) = O_{p}(n^{-(2\eta-1)/(\alpha+2\eta)})$ under appropriate conditions including $1 < \alpha < 2\eta - 2$ (see Theorem 1 in \citet{HH07}). This rate is also the minimax rate of convergence in a concentration probability sense. Now, it follows from Corollary \ref{cor-data1} that when $\eta - 1/2 < \alpha < 2\eta$, we have $\mathrm{ISE}(\widehat{\beta}_{HR}) = O_{p}(n^{-(2\eta-1)/(\alpha+2\eta)})$. In particular, when $\lambda_{j} \sim cj^{-\alpha}$ for all large $j$ (so that both condition (A2) in our paper and condition (3.2) in \citet{HH07} are satisfied) and when $\max(1,\eta - 1/2) < \alpha < 2\eta-2$, it follows that both of these two estimators have the same minimax rate of convergence in the concentration probability sense. Note that it is unknown whether the spectral truncation estimator will attain the minimax rate of convergence in the MSE sense like the hybrid and the Tikhonov estimators  discussed in Remark \ref{rem2}.
\end{remark}

Theorem \ref{thm-data1} and Corollary \ref{cor-orcl1} set the stage for our main result, showing that the hybrid estimator can improve upon the Tikhonov estimator in a non-asymptotic sense, even in the empirical case:

\begin{theorem}  \label{thm-data2}
Suppose that the conditions of Theorem \ref{thm-data1} hold. Let $c > 0$ be a fixed constant and $\rho \sim cn^{-\varepsilon}$ for some $\varepsilon > 0$. Also assume that at least one of $\langle \beta,\phi_{j}\rangle$, $j=1,2,\ldots,r$, is non-zero. Then, there exists a constant $\kappa_{0} > 0$ such that 
$$\mathrm{MSE}(\hat{\beta}_{TR}) - \mathrm{MSE}(\hat{\beta}_{HR}) > \kappa_{0}n^{-2\varepsilon}$$
for all sufficiently large $n$ if $\varepsilon < \alpha/(5\alpha-2\eta+2)$ in case $\eta - 1/2 < \alpha < 2\eta$ or if $\varepsilon < \alpha/(3\alpha+1)$ in case $\alpha < \eta - 1/2$.
\end{theorem}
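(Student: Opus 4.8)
The plan is to reduce the empirical comparison to the oracle comparison of Corollary~\ref{cor-orcl1} plus two ``estimation perturbation'' terms that are controlled by Theorem~\ref{thm-data1}. Take the oracle to be the one associated with the decomposition $X = Y + Z$ in which $Y = \mathscr{P}_{1}X$ is the projection onto the first $r$ population eigenfunctions and $Z = X - Y$; then the eigenfunctions of $\mathscr{K}_{1}$ are $\phi_{1},\dots,\phi_{r}$, so the assumption that at least one of $\langle\beta,\phi_{j}\rangle$, $j = 1,\dots,r$, is nonzero is exactly the hypothesis of Corollary~\ref{cor-orcl1}. Write
\begin{align*}
\mathrm{MSE}(\hat{\beta}_{TR}) - \mathrm{MSE}(\hat{\beta}_{HR}) &= \big[\mathrm{MSE}(\widetilde{\beta}_{TR}) - \mathrm{MSE}(\widetilde{\beta}_{HR})\big] \\
&\quad + \big[\mathrm{MSE}(\hat{\beta}_{TR}) - \mathrm{MSE}(\widetilde{\beta}_{TR})\big] - \big[\mathrm{MSE}(\hat{\beta}_{HR}) - \mathrm{MSE}(\widetilde{\beta}_{HR})\big].
\end{align*}
Under (A1)--(A2), $X$ Gaussian makes the $\langle X,\phi_{j}\rangle$ independent and supplies all requisite finite moments, while the eigenspaces of $\mathscr{K}_{1}$ and $\mathscr{K}_{2}$ are orthogonal by construction, so Corollary~\ref{cor-orcl1} (with $\gamma = \varepsilon$, which is $\le 1/2$ in both regimes, as checked below) yields a constant $c_{0} > 0$ with $\mathrm{MSE}(\widetilde{\beta}_{TR}) - \mathrm{MSE}(\widetilde{\beta}_{HR}) > c_{0}n^{-2\varepsilon}$ for all large $n$.

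It then suffices to show that each of $|\mathrm{MSE}(\hat{\beta}_{TR}) - \mathrm{MSE}(\widetilde{\beta}_{TR})|$ and $|\mathrm{MSE}(\hat{\beta}_{HR}) - \mathrm{MSE}(\widetilde{\beta}_{HR})|$ is $o(n^{-2\varepsilon})$; for large $n$ the sum of the two is then below $\tfrac12 c_{0}n^{-2\varepsilon}$ and one may take $\kappa_{0} = \tfrac12 c_{0}$. Both perturbations are covered by Theorem~\ref{thm-data1} (whose hypotheses $\alpha < 2\eta$ and $n\rho^{2}\to\infty$ hold here, the latter because $\rho\sim cn^{-\varepsilon}$ with $\varepsilon < 1/2$), giving the bound $O(1)\big[\{(n\rho^{1+1/\alpha})^{-1} + \rho^{m}\}^{1/2}(n\rho^{1+1/\alpha})^{-1/2} + (n\rho^{1+1/\alpha})^{-1}\big]$. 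Plugging in $\rho\sim cn^{-\varepsilon}$, set $a = 1 - \varepsilon(\alpha+1)/\alpha$ and $b = \varepsilon m$, so that $(n\rho^{1+1/\alpha})^{-1}\asymp n^{-a}$ and $\rho^{m}\asymp n^{-b}$. One checks that the stated restriction on $\varepsilon$ forces $b < a$, whence $\{n^{-a} + n^{-b}\}^{1/2}\asymp n^{-b/2}$, the first term of the bound is $O(n^{-(a+b)/2})$, and this dominates the standalone $n^{-a}$; thus both perturbations are $O(n^{-(a+b)/2})$. The requirement $(a+b)/2 > 2\varepsilon$ then simplifies, after a short computation, to exactly $\varepsilon < \alpha/(5\alpha - 2\eta + 2)$ when $m = (2\eta-1)/\alpha$ (the case $\eta - 1/2 < \alpha < 2\eta$) and to exactly $\varepsilon < \alpha/(3\alpha+1)$ when $m = 2$ (the case $\alpha < \eta - 1/2$).

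A handful of elementary side inequalities must be verified en route: that $\varepsilon < 1/2$ in both regimes (in the first, $\alpha > (2\eta-1)/2 > (2\eta-2)/3$ gives $5\alpha - 2\eta + 2 > 2\alpha$, so $\alpha/(5\alpha-2\eta+2) < 1/2$; in the second, $\alpha/(3\alpha+1) < 1/3$), which is what lets Theorem~\ref{thm-data1} and Corollary~\ref{cor-orcl1} apply; and that the hypothesis implies $b < a$ (in the first regime $b < a \iff \varepsilon < \alpha/(2\eta + \alpha)$, and $\alpha > \eta - 1/2$ makes $5\alpha - 2\eta + 2 > 2\eta + \alpha$, so $\varepsilon < \alpha/(5\alpha-2\eta+2)$ suffices; in the second regime $b < a$ is literally the hypothesis). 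The main obstacle is therefore not conceptual but the exponent bookkeeping: correctly identifying which term of the bound in Theorem~\ref{thm-data1} dominates over the admissible ranges of $(\alpha,\eta,\varepsilon)$, and verifying that the threshold on $\varepsilon$ quoted in the statement is precisely the one that renders the two perturbations negligible against the $n^{-2\varepsilon}$ gain guaranteed by the oracle comparison. Once $|\mathrm{MSE}(\hat{\beta}_{TR}) - \mathrm{MSE}(\widetilde{\beta}_{TR})| + |\mathrm{MSE}(\hat{\beta}_{HR}) - \mathrm{MSE}(\widetilde{\beta}_{HR})| = O(n^{-(a+b)/2}) = o(n^{-2\varepsilon})$ is established, combining with the oracle lower bound in the displayed identity completes the proof.
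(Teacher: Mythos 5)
Your proposal is correct and follows essentially the same route as the paper: the same three-term decomposition into the oracle gap plus two empirical-versus-oracle perturbations, with the oracle gap bounded below by $\kappa_1\rho^2 \asymp n^{-2\varepsilon}$ via Theorem \ref{thm-orcl1}(b)/Corollary \ref{cor-orcl1} and the perturbations shown to be $o(\rho^{2})$ via the bound of Theorem \ref{thm-data1}. Your exponent bookkeeping (identifying $a$, $b$, verifying $b<a$ and $(a+b)/2>2\varepsilon$ under the stated thresholds on $\varepsilon$, and checking $\varepsilon<1/2$) is accurate and in fact spells out the computation that the paper leaves as ``it can be checked.''
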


Although the hybrid estimator and the Tikhonov estimator enjoy the same rate of convergence by Theorem \ref{thm-data1}, the latter is effectively rendered \textit{inadmissible} by the hybrid estimator for a broad range of choices of $\rho$, including choices arbitrarily close to the optimal one (as in Corollary \ref{cor-data1}) -- and this is true for all sample sizes above a threshold. It is illustrated in the simulations study in Section \ref{3}, that this improvement can be sizeable, even for modest sample sizes. Moreover, it is interesting to note that we can attain this improvement regardless of the choice of $r$ may be, even for $r=1$ (provided, of course, that $\langle\beta,\phi_{1}\rangle \neq 0$ as the theorem requires).

The proof of the Theorem reveals that the determining factor in the inadmissibility of the Tikhonov estimator is the larger bias component compared to the hybrid estimator (see also equation (1.4) in the proof of the oracle case provided in the Appendix). An important requirement is that the choice of $Y$ to be such that $\beta$ is at least partially expressed by the eigenfunctions of $Y$. Of course, how large the sample size has to be will depend on $r$ and also depend on the condition number of the covariance operator of $Y$. The latter is determined by the relative magnitudes of the eigenvalues of $Y$, equivalently, the relative importance of the associated eigenfunctions in explaining the variation in $X$.

\subsection{Computational Aspects}
\label{2.2.1}
Algorithm \ref{algo} provides a the step-by-step construction of the hybrid estimator. In summary, our recommendation is to fix an $r$ by the condition index approach discussed in in Remark \ref{rem4}, and to then choose $\rho$ by cross-validation (as in standard Tikhonov regularisation, see e.g. \citet{YC10}). Going through the steps in Algorithm \ref{algo}, one case see that there is no computational overhead or added complexity relative to the construction of a spectral truncation or Tikhonov estimator. An alternate, slightly more complex albeit fully automated procedure would be to use a double cross-validation for choosing both $r$ and $\rho$.

\begin{algorithm}[t]
\caption{Construction of the Hybrid Estimator}
\label{algo}
\begin{enumerate}

\item[(Step 0)] Determine the eigenvalues $\widehat{\lambda}_{j}$ and eigenfunctions $\widehat{\phi}_{j}$ of $\widehat{\mathscr{K}}$.

\item[(Step 1)] Fix a condition number $L$ and choose $r$  using the eigenvalues of $\widehat{\mathscr{K}}$ as 
$$r=\sup\left\{j \geq 1 : \left({\hat{\lambda}_{1}}/{\hat{\lambda}_{j}}\right)^{1/2} \leq L\right\}.$$

\item[(Step 2)] Set $\hat{Y}_{i} = \sum_{j=1}^{r} \langle X_{i},\hat{\phi}_{j}\rangle \hat{\phi}_{j}$ and $\widehat{Z}_{i}=X-\hat{Y}_i$, and compute
$$\widehat{C}_{1} = \frac{1}{n} \sum_{i=1}^{n} (y_{i} - \overline{y})\left(Y_{i} - \frac{1}{n}\sum_{i=1}^{n}\hat{Y}_i\right)\quad\&\quad\widehat{C}_{2} = \frac{1}{n} \sum_{i=1}^{n} (y_{i} - \overline{y})\left(Z_{i} - \frac{1}{n}\sum_{i=1}^{n}\hat{Z}_i\right).$$

\item[(Step 3)]  For the chosen $r$, choose $\rho$ by generalized cross-validation.

\item[(Step 4)]  Use this value of $\rho$ and the value of $r$ obtained in Step 1 to compute $\widehat{\beta}_{HR}$ using \eqref{HR-data1}, 
$$ \widehat{\beta}_{HR} = \sum_{j\leq r} \widehat{\lambda}_{j}^{-1} \langle\widehat{C}_{1},\widehat{\phi}_{j}\rangle\widehat{\phi}_{j} + \sum_{j > r} (\widehat{\lambda}_{j} + \rho)^{-1} \langle\widehat{C}_{2},\widehat{\phi}_{j}\rangle\widehat{\phi}_{j}. $$

\end{enumerate}
\end{algorithm}

It is worth remarking that one of the widely-used methods for choosing finitely many principal components of $X$ is to select the number required to capture the bulk of the trace of the empirical covariance operator -- a typical choice of threshold is that of $85\%$ of the total variation in $X$ (see \citet{Joll02} and \citet{RS05}). We shall later see in the simulation studies in Section \ref{3} that this choice is far from optimal, as it makes no reference to the condition number of the resulting multivariate regression. 

Our counterproposal on choosing $r$ guarantees that the covariance operator of $\hat{Y}$ is \emph{well-conditioned} -- the whole point of the hybrid estimator is to extract a component of the regression that does not need regularisation, after all, and such components are in no way connected with the cumulative variance explained. Condition indices and their maximum, which is called the condition number, are well-known in the classical multivariate regression setup as indicators of the degree of collinearity among the covariates, and more generally in numerical analysis as a measure of the instability of a linear problem. A rule-of-thumb is that a condition number $\leq 30$ indicates well-posedness (see, e.g., \citet{Hock03}). An alternative way to choose $L$ could be to consider a plot of the empirical condition indices and look for the ``elbow''. With a pre-fixed $L$, it is obvious why the choice of $r$ should be large when the eigenvalues decay slowly, and why it should be more conservative when they decay fast. Furthermore, since $\sup_{j \geq 1} |\hat{\lambda}_{j} - \lambda_{j}| \rightarrow 0$ in probability as $n \rightarrow \infty$, it follows that $\sup\{j \geq 1 : [\hat{\lambda}_{1}/\hat{\lambda}_{j}]^{1/2} \leq L\} \rightarrow \sup\{j \geq 1 : [\lambda_{1}/\lambda_{j}]^{1/2} \leq L\}$ in probability as $n \rightarrow \infty$, i.e., $r$ is chosen consistently by this procedure.

\section{The Case of Discretely Observed Functions}
\label{dis}

For data in a function space, say, $L_{2}[0,1]$, it may happen that instead of observing the entire curve $X$, one can only observe it on a  grid, say, 
$$0 \leq t_{1} < t_{2} < \ldots < t_{m} \leq 1.$$
Thus, the regressor at hand is an $m$-dimensional vector 
$$X^{(m)} = (X(t_{1}),X(t_{2}),\ldots,X(t_{m}))'.$$ 
In this setup, an approximation of the functional linear model considered in \eqref{scalar_FLM} is
\begin{eqnarray} \label{discrete_FLM}
y = \alpha + m^{-1} \sum_{p=1}^{m} X(t_{p})\beta(t_{p}) + \epsilon.
\end{eqnarray}
We define $\beta^{(m)} = (\beta(t_{1}),\beta(t_{2}),\ldots,\beta(t_{m}))'$. This setup of discretely observed data is closely related to the \textit{time-sampling model} considered by \citet{AW12} or the \textit{common design model} considered by \citet{CY11} with the difference that we do not consider measurement errors in the discrete observations of the $X_i$. 

In the discretely sampled setup considered above, the oracle and the empirical hybrid regularisation estimators of $\beta^{(m)}$ are defined analogously and are denoted by $\widetilde{\beta}_{HR}^{(m)}$ and $\widehat{\beta}_{HR}^{(m)}$, respectively. Similarly, the oracle and the empirical Tikhonov estimators are denoted by $\widetilde{\beta}_{TR}^{(m)}$ and $\widehat{\beta}_{TR}^{(m)}$, respectively. 

In order to state results analogous to Theorems \ref{thm-data1} and \ref{thm-data2}, we need to assume the following modifications of assumptions (A2) and (A3). We denote the eigenvalue-eigenvector pairs of the covariance matrix of $X^{(m)}/\sqrt{m}$ by $(\lambda_{j}^{(m)},\phi_{j}^{(m)})$ for $j = 1,2,\ldots,m$. 
\begin{itemize}
\item[\textbf{(A2')}] Suppose that $\lambda_{1}^{(m)} > \ldots > \lambda_{m}^{(m)} > 0$. Also, for  constants $\alpha > 1$, $0 < c' < C'$ and $j_{0}' \geq 1$, we have $c'j^{-\alpha} \leq \lambda_{j}^{(m)} \leq C'j^{-\alpha}$ for all $j_{0}' \leq j \leq m$ when $m$ is sufficiently large. \\
\item[\textbf{(A3')}] For constants $d' > 0$, $\eta' > 1/2$ and $j_{0}' \geq 1$, we have $m^{-1/2}|\langle \beta^{(m)},\phi_{j}^{(m)}\rangle| \leq d'\{j^{-\eta'} + m^{-1}\}$ for all $j_{0}' \leq j \leq m$ when $m$ is sufficiently large. \
\end{itemize}
In assumption (A3'), the parameter $\eta'$ is some function of the parameters $\alpha$ and $\eta$ that appear in assumptions (A2) and (A3) earlier. The two components in the inequality in assumption (A3') may be respectively interpreted as the contribution at the functional level and the error due to discretization. For instance, when $X$ is a standard Brownian motion, and if $\beta$ lies in its RKHS and satisfies assumption (A3), then $\eta' = \alpha$ for $\alpha \leq \eta$ and $\eta' = \eta$ for $\eta < \alpha < 2\eta -1$ (see the Appendix for a proof of this fact). Note that the condition $\alpha < 2\eta -1$ is needed to ensure that $\beta$ lies in the RKHS of the standard Brownian motion. Also, using the arguments in \citet{AW12}, it can be shown that condition (A2') holds in this case (see Appendix A in the Supplementary Material of \citet{AW12}). 

Theorem \ref{discrete-thm-data1} now shows that, even when we have discretely observed data, the hybrid and the Tikhonov estimators enjoy the same properties as their fully functional counterparts provided that the grid size grows to infinity sufficiently fast.
\begin{theorem} \label{discrete-thm-data1}
Suppose that conditions (A1), (A2') and (A3') hold, and $\alpha < 2\eta'$. Also assume that $m > \rho^{-2}$. Then,
\begin{eqnarray}  \label{discrete-HR-data2}
&& m^{-1}|\mathrm{MSE}(\hat{\beta}_{HR}^{(m)}) - \mathrm{MSE}(\widetilde{\beta}_{HR}^{(m)})| \\
&& = O(1)\left[\left\{\frac{1}{n\rho^{1+\frac{1}{\alpha}}} + \rho^{M}\right\}^{1/2}\left(\frac{1}{n\rho^{1+\frac{1}{\alpha}}}\right)^{1/2} + \frac{1}{n\rho^{1+\frac{1}{\alpha}}}\right] \nonumber
\end{eqnarray}
for any sequence $\rho \rightarrow 0$ satisfying $n\rho^{2} \rightarrow \infty$ as $n \rightarrow \infty$. Further,  
$$m^{-1}\mathrm{MSE}(\hat{\beta}_{HR}^{(m)}) = O(1)\left\{\frac{1}{n\rho^{1+\frac{1}{\alpha}}} + \rho^{M}\right\}$$
as $n \rightarrow \infty$.  Here $M = (2\eta'-1)/\alpha$ or $M = 2$ according as $\alpha > \eta' - 1/2$ or $\alpha < \eta' - 1/2$. Moreover, analogous rates of convergence also hold for $m^{-1}|\mathrm{MSE}(\hat{\beta}_{TR}^{(m)}) - \mathrm{MSE}(\widetilde{\beta}_{TR}^{(m)})|$ and $m^{-1}\mathrm{MSE}(\hat{\beta}_{TR}^{(m)})$. Thus,
\begin{eqnarray*}  \label{discrete-HR-data3}
&& m^{-1}\mathrm{MSE}(\hat{\beta}_{HR}^{(m)}) \nonumber \\
&& = 
\begin{cases}
O(n^{-(2\eta'-1)/(\alpha+2\eta')}) &\text{if $\eta' - 1/2 < \alpha < 2\eta'$ and $\rho \sim cn^{-\frac{\alpha}{\alpha+2\eta'}}$} \\
O(n^{-2\alpha/(3\alpha+1)}) &\text{if $\alpha < \eta' - 1/2$ and $\rho \sim cn^{-\frac{\alpha}{3\alpha+1}}$}
\end{cases}
\end{eqnarray*}
as $n \rightarrow \infty$. Further, the same rates of convergence also hold for $m^{-1}\mathrm{MSE}(\hat{\beta}_{TR}^{(m)})$.
\end{theorem}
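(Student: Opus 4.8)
\noindent The strategy is to observe that the discretely observed model \eqref{discrete_FLM} is itself a (finite-dimensional) instance of the functional linear model \eqref{scalar_FLM}: take the Hilbert space to be $\R^{m}$ endowed with the rescaled inner product $\langle u,v\rangle_{m}=m^{-1}u\transpose v$, the covariate to be $X^{(m)}$, and the slope to be $\beta^{(m)}$. Under this identification the relevant covariance operator is the symmetric matrix with eigenpairs $(\lambda_{j}^{(m)},\phi_{j}^{(m)})$, $1\le j\le m$; the normalised error $m^{-1}\mathrm{MSE}(\cdot)$ is exactly $\E\|\cdot-\beta^{(m)}\|_{m}^{2}$, i.e. the Riemann-sum approximation of the mean integrated squared error; and $\widehat\beta_{HR}^{(m)},\widehat\beta_{TR}^{(m)},\widetilde\beta_{HR}^{(m)},\widetilde\beta_{TR}^{(m)}$ are the verbatim analogues of \eqref{HR-data1} and \eqref{HR-orcl2} built from this covariance. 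The plan is then to re-run the proof of Theorem~\ref{thm-data1} (hence of Corollary~\ref{cor-data1}) in this setting, with (A2) replaced by (A2') and (A3) by (A3'), carefully tracking the two features that are new relative to the genuinely functional case: all series now terminate at $j=m$ rather than at $\infty$, and the coefficient bound in (A3') carries the extra additive discretisation term $m^{-1}$.

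\noindent First I would check that every ingredient in the proof of Theorem~\ref{thm-data1} that relies only on (A1) and on eigenvalue decay transfers to the present setting with constants that do not deteriorate as $m\to\infty$. By (A1) the coordinates $\langle X^{(m)}/\sqrt m,\phi_{j}^{(m)}\rangle$ are independent centred Gaussians with variances $\lambda_{j}^{(m)}$ — the exact analogue of the (pseudo-)independence hypothesis of Theorems~\ref{thm-orcl1}--\ref{thm-data1} — so the Gaussian moment and concentration bounds for $\widehat{\mathscr{K}}^{(m)}-\mathscr{K}^{(m)}$, for the empirical eigen-elements, and for $\widehat C_{1}^{(m)},\widehat C_{2}^{(m)}$ relative to their population counterparts hold by the same arguments, the relevant series being summable via (A2') exactly as (A2) was used before; here one uses that (A2') grants $m$-free constants $c',C'$ for all large $m$. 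Consequently the fixed-$\rho$ expansions of $\mathrm{MSE}(\widehat\beta_{HR}^{(m)})-\mathrm{MSE}(\widetilde\beta_{HR}^{(m)})$ and of $\mathrm{MSE}(\widehat\beta_{HR}^{(m)})$ in terms of the variance proxy $(n\rho^{1+1/\alpha})^{-1}$ and a bias proxy that depends only on the decay of the $\langle\beta^{(m)},\phi_{j}^{(m)}\rangle$ go through unchanged, and likewise for the Tikhonov estimator; this is what produces \eqref{discrete-HR-data2}.

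\noindent The only genuinely new bookkeeping is in the bias term. Squaring (A3') gives $m^{-1}\langle\beta^{(m)},\phi_{j}^{(m)}\rangle^{2}\lesssim j^{-2\eta'}+m^{-2}$, so in every bias sum the $j^{-2\eta'}$ part reproduces precisely the $\rho^{M}$ behaviour, with $\eta'$ in place of $\eta$ — this is why $M$ takes its stated form and why the blanket hypothesis is $\alpha<2\eta'$, paralleling $\alpha<2\eta$ in Theorem~\ref{thm-data1}. The $m^{-2}$ part contributes at most $O(m^{-2}\cdot m)=O(m^{-1})$ to $m^{-1}\mathrm{MSE}$, and since $m>\rho^{-2}$ this is $\le\rho^{2}\le\rho^{M}$ (recall $M\le 2$), so it is absorbed into the $O(\rho^{M})$ bias term. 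The same hypothesis $m>\rho^{-2}$, together with $\alpha>1$ and $\eta'>1/2$, also dominates the error incurred by truncating the otherwise infinite bias and variance series at $j=m$: those tails are polynomially small in $m$ and hence of smaller order than the retained terms. Collecting these estimates gives the stated bound on $m^{-1}\mathrm{MSE}(\widehat\beta_{HR}^{(m)})$, and the concluding display follows by substituting the rate-optimal $\rho\sim cn^{-\alpha/(\alpha+2\eta')}$ (resp.\ $\rho\sim cn^{-\alpha/(3\alpha+1)}$) exactly as in the proof of Corollary~\ref{cor-data1}; the only requirement beyond what is needed there is $m>\rho^{-2}$, i.e.\ the grid must satisfy $m\gg n^{2\alpha/(\alpha+2\eta')}$ (resp.\ $n^{2\alpha/(3\alpha+1)}$), which is the ``grid size grows sufficiently fast'' proviso.

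\noindent I expect the main obstacle to be precisely this uniformity: showing that the constants in the Gaussian perturbation bounds for the empirical covariance and cross-covariance matrices do not blow up as $m\to\infty$ (this is where the $m$-free constants in (A2') are essential), and then verifying that the combined discretisation error — the $m^{-1}$ term of (A3') plus the finite-truncation tails — is dominated by the leading bias and variance whenever $m>\rho^{-2}$. Everything else is a transcription of the proofs of Theorem~\ref{thm-data1} and Corollary~\ref{cor-data1} under the substitutions $\eta\mapsto\eta'$ and $\infty\mapsto m$.
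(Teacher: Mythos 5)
Your proposal follows essentially the same route as the paper's proof: recast the discrete model as a functional linear model on $\R^{m}$ with the rescaled inner product (the paper's map $\Phi_{m}(X)=X^{(m)}/\sqrt{m}$), re-run the proof of Theorem \ref{thm-data1} noting that (A2') supplies $m$-free constants for all the eigenvalue sums, and isolate the one new ingredient — the bias sum $\rho^{2}\sum_{j\le m}\langle\beta^{(m)},\phi_{j}^{(m)}\rangle^{2}/(\lambda_{j}^{(m)}+\rho)^{2}$ — where the extra $m^{-1}$ term of (A3') is absorbed into $O(\rho^{M})$ precisely via $m>\rho^{-2}$ and $M\le 2$. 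Your squaring argument handles the case analysis slightly more compactly than the paper's explicit split at $[m^{1/\eta'}]$, but the substance is identical and the proposal is correct.
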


Finally, our last result shows that, similar to the case of perfect functional observations,  the hybrid estimator outperforms the Tikhonov estimator for sufficiently large sample sizes and suitably chosen regularisation even when observations are discrete.

\begin{theorem}  \label{discrete-thm-data2}
Suppose that the conditions of Theorem \ref{discrete-thm-data1} hold. Let $c > 0$ be a fixed constant and $\rho \sim cn^{-\varepsilon}$ for some $\varepsilon > 0$. Also assume that at least one of $\langle \beta,\phi_{j}\rangle$, $j=1,2,\ldots,r$, is non-zero. Then, there exists a constant $\theta_{0} > 0$ such that 
$$m^{-1}\{\mathrm{MSE}(\hat{\beta}_{TR}^{(m)}) - \mathrm{MSE}(\hat{\beta}_{HR}^{(m)})\} > \theta_{0}n^{-2\varepsilon}$$
for all sufficiently large $n$ if $\varepsilon < \alpha/(5\alpha-2\eta'+2)$ in case $\eta' - 1/2 < \alpha < 2\eta'$ or if $\varepsilon < \alpha/(3\alpha+1)$ in case $\alpha < \eta' - 1/2$.
\end{theorem}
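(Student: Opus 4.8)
The plan is to transcribe the proof of Theorem~\ref{thm-data2} into the $m$-dimensional discretised setting, substituting Theorem~\ref{discrete-thm-data1} for Theorem~\ref{thm-data1} and the $\mathbb{R}^{m}$ analogue of Corollary~\ref{cor-orcl1} for Corollary~\ref{cor-orcl1}, while keeping the normalising factor $m^{-1}$ attached throughout. First I would decompose
\begin{align*}
& m^{-1}\{\mathrm{MSE}(\hat{\beta}_{TR}^{(m)}) - \mathrm{MSE}(\hat{\beta}_{HR}^{(m)})\} = m^{-1}\{\mathrm{MSE}(\widetilde{\beta}_{TR}^{(m)}) - \mathrm{MSE}(\widetilde{\beta}_{HR}^{(m)})\} \\
& \qquad {} + m^{-1}\{\mathrm{MSE}(\hat{\beta}_{TR}^{(m)}) - \mathrm{MSE}(\widetilde{\beta}_{TR}^{(m)})\} - m^{-1}\{\mathrm{MSE}(\hat{\beta}_{HR}^{(m)}) - \mathrm{MSE}(\widetilde{\beta}_{HR}^{(m)})\},
\end{align*}
so that it suffices to (i) bound the oracle difference in the first line from below by a strictly positive multiple of $n^{-2\varepsilon}$, and (ii) show that the two ``sampling error'' differences in the second line are $o(n^{-2\varepsilon})$.

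Step (ii) follows directly from Theorem~\ref{discrete-thm-data1}, which bounds each of $m^{-1}|\mathrm{MSE}(\hat{\beta}_{TR}^{(m)}) - \mathrm{MSE}(\widetilde{\beta}_{TR}^{(m)})|$ and $m^{-1}|\mathrm{MSE}(\hat{\beta}_{HR}^{(m)}) - \mathrm{MSE}(\widetilde{\beta}_{HR}^{(m)})|$ by $O(1)[\{(n\rho^{1+1/\alpha})^{-1}+\rho^{M}\}^{1/2}(n\rho^{1+1/\alpha})^{-1/2}+(n\rho^{1+1/\alpha})^{-1}]$ (the hypothesis $m>\rho^{-2}$ of that theorem holding for $\rho\sim cn^{-\varepsilon}$ once $n$ is large). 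Substituting $\rho\sim cn^{-\varepsilon}$ and simplifying exponents, in the regime $\eta'-1/2<\alpha<2\eta'$, where $M=(2\eta'-1)/\alpha$, the term imposing the strictest constraint is $\rho^{M/2}(n\rho^{1+1/\alpha})^{-1/2}$, of order $n^{-1/2+\varepsilon(\alpha+2-2\eta')/(2\alpha)}$, which is $o(n^{-2\varepsilon})$ exactly when $\varepsilon<\alpha/(5\alpha-2\eta'+2)$; in the regime $\alpha<\eta'-1/2$, where $M=2$, the variance-type term $(n\rho^{1+1/\alpha})^{-1}$, of order $n^{-1+\varepsilon(\alpha+1)/\alpha}$, gives the binding constraint $\varepsilon<\alpha/(3\alpha+1)$. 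These are exactly the two ranges of $\varepsilon$ in the statement, so (ii) holds.

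For step (i), I would rerun the computation behind Theorem~\ref{thm-orcl1}(b) and Corollary~\ref{cor-orcl1}, but now in $\mathbb{R}^{m}$, with the covariance matrix of $X^{(m)}/\sqrt{m}$, its eigenpairs $(\lambda_{j}^{(m)},\phi_{j}^{(m)})$, and the rescaled target $\beta^{(m)}/\sqrt{m}$ playing the roles of $\mathscr{K}$, $(\lambda_{j},\phi_{j})$ and $\beta$. Since $\widetilde{\beta}_{TR}^{(m)}$ and $\widetilde{\beta}_{HR}^{(m)}$ apply the same ridge on the coordinates $j>r$ and therefore agree there, the whole difference is carried by the leading $r$ coordinates, on which $\widetilde{\beta}_{HR}^{(m)}$ is unbiased whereas $\widetilde{\beta}_{TR}^{(m)}$ incurs squared bias $\sum_{j\le r}\rho^{2}(\lambda_{j}^{(m)}+\rho)^{-2}(m^{-1/2}\langle\beta^{(m)},\phi_{j}^{(m)}\rangle)^{2}$; the variance discrepancy on those coordinates is $O(n^{-1})=o(n^{-2\varepsilon})$, because $\varepsilon<1/2$ in both regimes. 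Consequently, as $n\to\infty$,
$$m^{-1}\{\mathrm{MSE}(\widetilde{\beta}_{TR}^{(m)})-\mathrm{MSE}(\widetilde{\beta}_{HR}^{(m)})\}=\Big(c^{2}\sum_{j\le r}\lambda_{j}^{-2}\langle\beta,\phi_{j}\rangle^{2}+o(1)\Big)n^{-2\varepsilon},$$
provided the signal in the leading directions survives discretisation, i.e. $\lambda_{j}^{(m)}\to\lambda_{j}$ and $m^{-1/2}\langle\beta^{(m)},\phi_{j}^{(m)}\rangle\to\langle\beta,\phi_{j}\rangle$ as $m\to\infty$, for each fixed $j\le r$.

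Establishing this last convergence is the only genuinely new ingredient relative to Theorems~\ref{thm-data2} and~\ref{discrete-thm-data1}, and I expect it to be the main obstacle. I would deduce it from the fact that the covariance matrix of $X^{(m)}/\sqrt{m}$ converges to the integral operator $\mathscr{K}$ --- its entries being Riemann sums of the kernel of $\mathscr{K}$, with discretisation error controlled through (A2$'$) --- so that standard eigenvalue and eigenvector perturbation bounds, together with the simplicity of the leading eigenvalues from (A2), give $\lambda_{j}^{(m)}\to\lambda_{j}>0$ and convergence of the normalised eigenvector $\phi_{j}^{(m)}$ to the grid restriction of $\phi_{j}$ in the $m^{-1}\|\cdot\|^{2}$ sense, for each fixed $j\le r$; a further Riemann-sum argument, with (A3$'$) controlling the tail, then yields $m^{-1/2}\langle\beta^{(m)},\phi_{j}^{(m)}\rangle\to\langle\beta,\phi_{j}\rangle$. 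Since by hypothesis $\langle\beta,\phi_{j}\rangle\neq 0$ for at least one $j\le r$, the constant $c^{2}\sum_{j\le r}\lambda_{j}^{-2}\langle\beta,\phi_{j}\rangle^{2}$ is strictly positive, and combining (i) with (ii) we obtain, for all sufficiently large $n$,
$$m^{-1}\{\mathrm{MSE}(\hat{\beta}_{TR}^{(m)})-\mathrm{MSE}(\hat{\beta}_{HR}^{(m)})\}\ \ge\ \Big(c^{2}\sum_{j\le r}\lambda_{j}^{-2}\langle\beta,\phi_{j}\rangle^{2}+o(1)\Big)n^{-2\varepsilon}-o(n^{-2\varepsilon})\ >\ \theta_{0}\,n^{-2\varepsilon}$$
for a suitable $\theta_{0}>0$, which is the assertion.
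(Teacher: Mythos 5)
Your proposal is correct and follows exactly the route the paper intends: the paper omits this proof, stating only that it ``can be developed in the same way as that of the proof of Theorem \ref{thm-data2}'', and your decomposition into an oracle gap of order $\rho^{2}$ plus two sampling-error terms that are $o(\rho^{2})$ under the stated ranges of $\varepsilon$ is precisely that argument, with your exponent bookkeeping checking out in both regimes. Your additional observation --- that one must verify $\lambda_{j}^{(m)}\to\lambda_{j}$ and $m^{-1/2}\langle\beta^{(m)},\phi_{j}^{(m)}\rangle\to\langle\beta,\phi_{j}\rangle$ for fixed $j\le r$ so that the continuous non-vanishing hypothesis transfers to the discrete oracle bias term --- is a genuine point that the paper's one-line remark glosses over, and your Riemann-sum-plus-perturbation sketch (in the spirit of the appendix computation the paper carries out for Brownian motion) is the right way to close it.
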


 The proof of Theorem \ref{discrete-thm-data2} can be developed in the same way as that of the proof of Theorem \ref{thm-data2} and is thus omitted.

\section{Simulation Study}
\label{3}

 We now turn to the assessment of the practical performance of the hybrid regularisation estimator relative to the Tikhonov estimator by means of a simulation study. To this aim, we shall  consider the same simulation framework considered in \citet{HH07} and \citet{YC10}. Take ${\cal H} = L_{2}[0,1]$, the space of square-integrable real functions on the interval $[0,1]$, with the usual inner product. Let $X$ be defined via its Karhunen-Lo\`eve expansion as 
$$X = \sum_{j=1}^{50} \gamma_{j}Z_{j}\phi_{j},$$ 
with the $Z_{j}$'s being i.i.d. uniform random variables on $[-3^{1/2},3^{1/2}]$, $\phi_{1}(t) = 1$ and $\phi_{j}(t) = 2^{1/2}\cos(j{\pi}t)$ for $t \in [0,1]$. Further, $\gamma_{j} = (-1)^{j+1}j^{-\alpha/2}$ for $j \geq 1$, and we choose  $\alpha$  to either be equal to $1.1$ or $2$. These two values of $\alpha$ correspond to slow and fast decays of the eigenvalues of $X$. Let $b_{1} = 1$ and $b_{j} = 4(-1)^{j+1}j^{-2}$ for $j =2,3,\ldots,50$. We have chosen three different kinds of slope function: (a) $\beta = \beta_{1} = \sum_{j=1}^{50} b_{j}\phi_{j}$, (b) $\beta = \beta_{2} = \sum_{j=1}^{5}b_{j}\phi_{j}$, and (c) $\beta = \beta_{3} = \sum_{j=6}^{50}b_{j}\phi_{j}$. Note that in cases (b) and (c) above, $\beta$ is expressed by two mutually orthogonal subcollections of eigenfunctions of $X$. We have considered these two choices of $\beta$ to study how the parsimony of $\beta$ in fewer or more eigenfunctions of $X$ influences the performance of the hybrid estimator. 
The sample size chosen is $n = 100$. The distribution of the error variable $\epsilon$ in the functional regression model is standard Gaussian. The $X$'s are evaluated at $50$ equispaced grid points in $[0,1]$. All the estimated mean squared errors are averaged over $1000$ Monte-Carlo replications. 
\begin{figure}
\vspace{-0.2in}
\begin{center}
{
\includegraphics[scale=0.5]{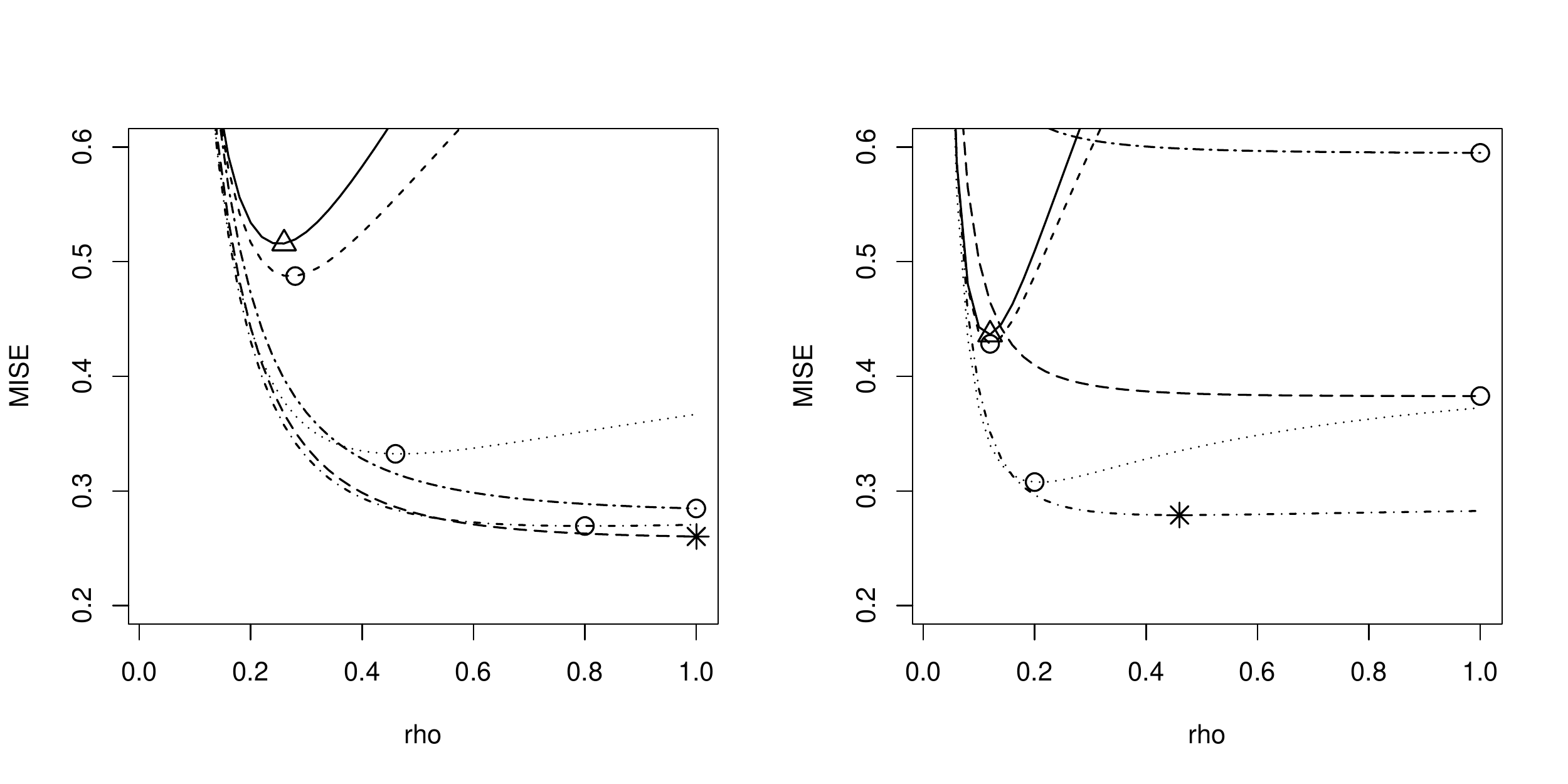}
\vspace{-0.4in}
\includegraphics[scale=0.5]{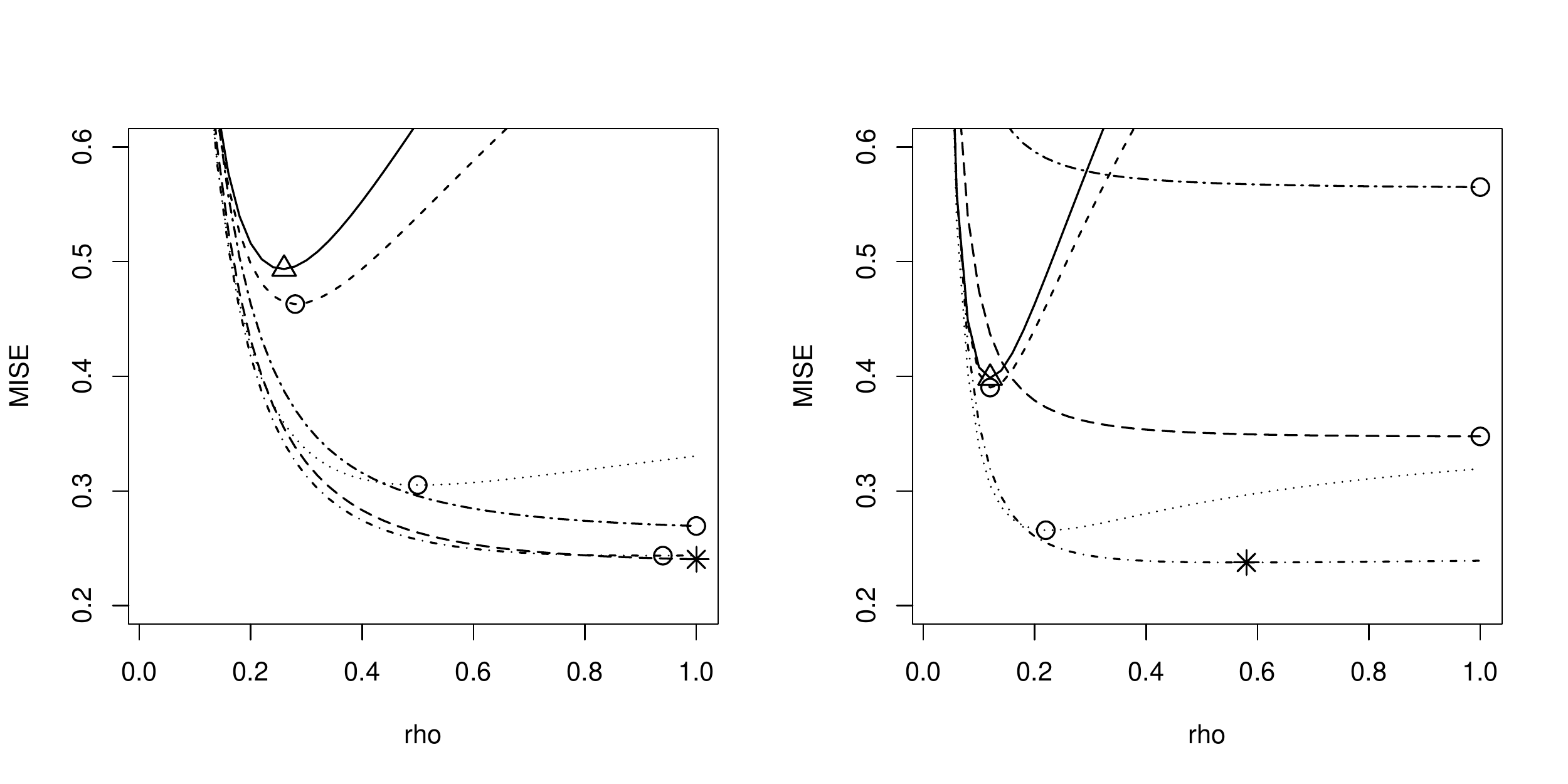}
\vspace{-0.2in}
\includegraphics[scale=0.5]{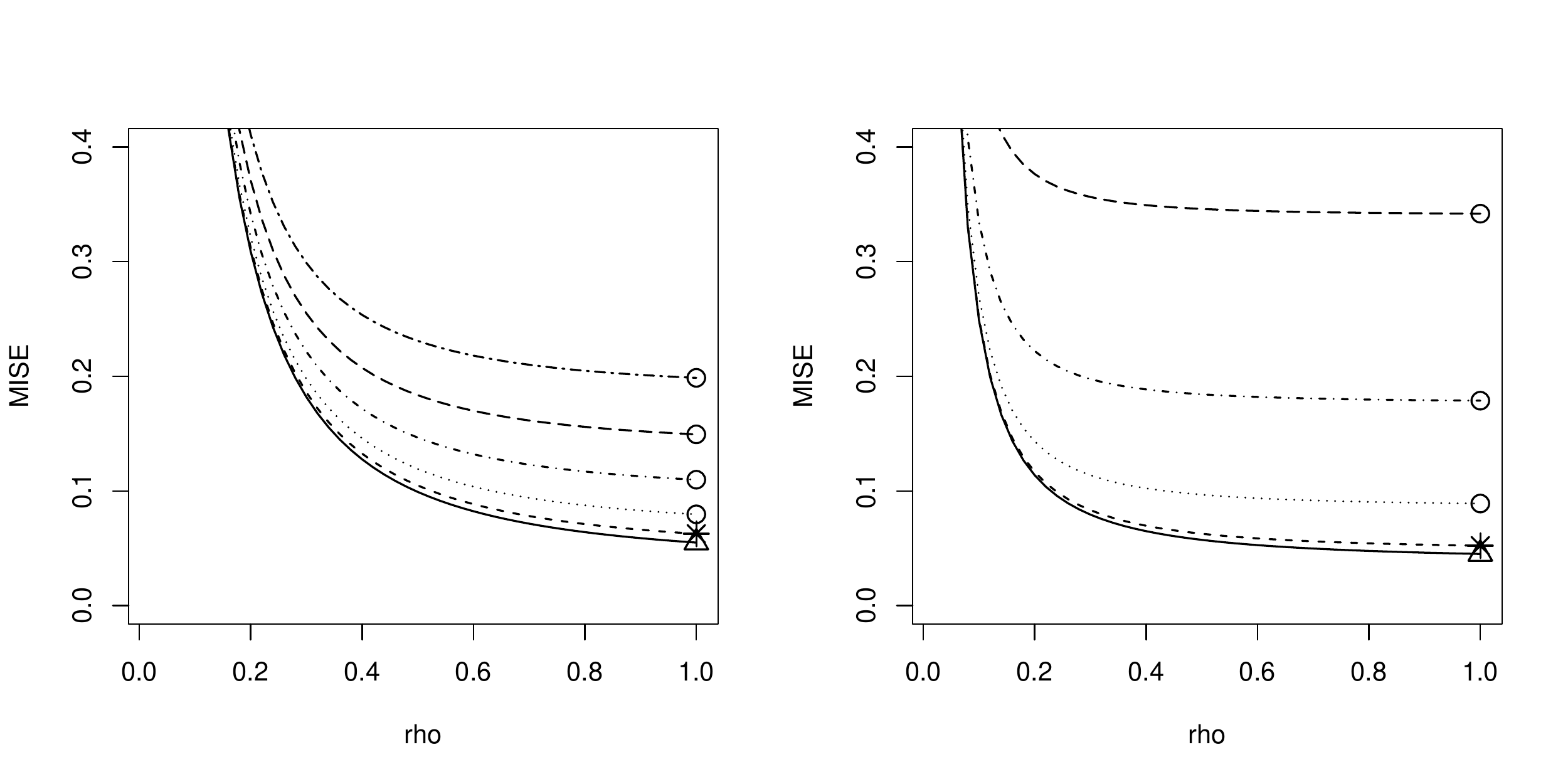}
}
\end{center}
\caption{Plots of the MSEs of the Tikhonov estimator (solid curves) and the hybrid regularisation estimator for $r = 1$ (dashed curves), $r = 2$ (dotted curves), $r = 3$ (dot-dashed curves), $r = 4$ (long-dashed curves) and $r = 5$ (two-dashed curves). The plots in the left and the right columns correspond to $\alpha = 1.1$ and $2$, respectively. The plots in the top, middle and bottom rows correspond to $\beta_{1}$, $\beta_{2}$ and $\beta_{3}$, respectively. }
\label{Fig-1}
\end{figure}
Figure \ref{Fig-1} gives the plots of the MSEs of the two estimators for different choices of $\rho$. In each plot, we have considered the mean squared errors of the hybrid estimator for every $r = 1,2 \ldots,5$. For the Tikhonov estimator, the smallest value of the mean squared is designated by a triangle. For the hybrid  estimator, the smallest value of the mean squared error for each choice of $r$ is marked by a circle. We also point out the smallest mean squared error across the different choices of $r$ by a star. In all of the above cases, the optimal values of $\rho$ and $r$ can be read from the plot and we do not mark them to avoid clutter.

The top four plots in Figure \ref{Fig-1} show that the optimal value of the mean squared error is markedly smaller for the hybrid estimator than for the Tikhonov estimator with the ratio between the two mean squared errors being about $2$ and $1.4$ for $\alpha = 1.1$ and $2$, respectively for both $\beta = \beta_{1}$ and $\beta_{2}$ (see Table \ref{Tab-1}). It can also be remarked that the optimal mean squared error corresponding to the hybrid estimator can also improve upon the optimal Tikhonov mean squared error even for some values of $r$ that are suboptimal. In fact, the difference in each case is statistically significant in the following sense -- the two mean squared errors, which are averages of independent Monte Carlo iterations, are significantly different, when a large sample test of difference of two means is applied. These observations lend support to Theorem \ref{thm-data2}. In the plots in the last row in Figure \ref{Fig-1}, where $\beta = \beta_{3}$, the minimum mean squared errors of the two estimators are not significantly different. This does not contradict Theorem \ref{thm-data2} since $\beta_{3}$ does not satisfy the assumption in that theorem for any $r = 1,2,\ldots,5$.

The first two choices of $\beta$ are at least partially expressed by the eigenfunctions associated with the five largest eigenvalues of $X$. Note that these eigenvalues explain only about $56\%$ of the total variation of $X$ if $\alpha = 1.1$, while this percentage is about $90\%$ if $\alpha = 2$. Thus, the performance of the hybrid estimator does not seem to depend much on whether or not the eigenvalues associated with the eigenfunctions expressing $\beta$ explain a large amount of the total variation of $X$. It is also observed that if we had chosen $r$ by the ``$85\%$-rule'', then one would end up choosing more principal components compared to the optimal value of $r$ found in the simulation studies for each value of $\alpha$ and $\beta = \beta_{1}$ or $\beta_{2}$. Further, the optimal number of principal components is $1$ for both values of $\alpha$ when $\beta = \beta_{3}$. These findings indicate that one should generally \emph{not} use the ``$85\%$-rule'' for choosing $r$ in the construction of the hybrid  estimator. Further, the simulation studies also confirm that when the eigenvalues of $X$ decay slowly (well-conditioned regime), it is better to choose a higher value of $r$. By doing so, we make substantial gains if $\beta$ is at least partially expressed by those $r$ eigenfunctions, and we will only perhaps lose out slightly otherwise. On the other hand, if the eigenvalues of $X$ decay fast (ill-conditioned regime), then a more conservative choice should be used (see Figure \ref{Fig-1}). This is consistent with the choice of $r$ using condition numbers that was recommended in subsection \ref{2.2.1}.

We also observe that the mean squared error of the hybrid estimator for appropriately chosen $r$ is significantly smaller than that of the Tikhonov estimator for all values of $\rho$ greater than the optimal one for the latter estimator, which is small except when $\beta = \beta_{3}$. In that case, for all $\rho > 0.2$, the mean squared errors of the two estimators are almost coincident for an appropriately chosen $r$.
From the simulation studies, it seems that the hybrid estimator acts as a safeguard against over-estimation. This is in contrast to the Tikhonov estimator which is found to be much more sensitive to choice of large values of $\rho$ when $\beta = \beta_{1}$ or $\beta_{2}$ (see Figure \ref{Fig-1}).

We next compare the MSEs of the hybrid regularisation estimator with that of the Tikhonov regularisation estimator as well as the spectral truncation estimator when the regularisation parameters in the hybrid estimator are chosen using the fully automated double cross-validation technique discussed in subsection \ref{2.2.1} and the regularisation parameter involved in each of the other two estimators is also chosen using cross-validation. Table \ref{Tab-1} gives the MSEs (averaged over $1000$ Monte-Carlo iterations) of the three estimators for the simulated models considered earlier as well as the results for the following choice of the $\gamma_{j}$'s in those models: $\gamma_{1} = 1$, $\gamma_{j} = 0.2(-1)^{j+1}(1-0.0001j)$ if $2 \leq j \leq 4$ and $\gamma_{5j+k} = 0.2(-1)^{5j+k+1}\{(5j)^{-\alpha/2}-0.0001k\}$ for $j\geq 1$ and $0 \leq k \leq 4$. As in section \ref{3}, we have chosen $\alpha = 1.1$ or $2$. This new set of $\gamma_{j}$'s generate ``closely-spaced'' eigenvalues and was also considered by \citet{HH07} and \citet{YC10}. The choice of the $\gamma_{j}$'s considered towards the beginning of this section leads to``well-spaced'' eigenvalues. It is known that the spectral truncation estimator has better (worse) performance compared to the Tikhonov regularisation estimator in the ``well-spaced'' (``closely-spaced'') scenario (see \citet{HH07}). The Monte-Carlo standard deviations of the MSEs are mostly of the order of $10^{-3}$ with some exceptions, but even these do not exceed $0.016$. All the significance statements made later take these standard deviations into account. We have also compared the hybrid estimator under the above models with the other two estimators when the hybrid estimator was computed using the algorithm given in subsection \ref{2.2.1}. The overall results obtained were quite similar to those presented in Table \ref{Tab-1}. {We also compared the performance of the estimators when the functional covariate is observed with error, and the results are reported in the Appendix}.

\begin{table}[t]
\caption{MSEs of the hybrid regularisation, the Tikhonov regularisation and the spectral truncation estimators.}{
\begin{center}
\scalebox{1}{
\begin{tabular}{cccccccc}
 \hline \\
\multicolumn{8}{c}{well-spaced} \\ \\ \hline \\ 
$\beta$ & $\alpha$ & $MSE_{ST}^{GCV}$ & $MSE_{ST}^{true}$ & $MSE_{TR}^{GCV}$ & $MSE_{TR}^{true}$ & $MSE_{HR}^{GCV}$ & $MSE_{HR}^{true}$ \\ \\ \hline 
$\beta_{1}$ & $1.1$ & $0.285$ & $0.272$ & $0.773$ & $0.516$ & $0.346$ & $0.26$ \\ 
 & $2$ & $0.296$ & $0.286$ & $0.608$ & $0.445$ & $0.311$ & $0.274$ \\ \\
$\beta_{2}$ & $1.1$ & $0.271$ & $0.247$ & $0.763$ & $0.494$ & $0.357$ & $0.24$ \\
 & $2$ & $0.252$ & $0.241$ & $0.689$ & $0.409$ & $0.284$ & $0.234$ \\ \\
$\beta_{3}$ & $1.1$ & $0.052$ & $0.05$ & $0.057$ & $0.055$ & $0.066$ & $0.063$ \\
 & $2$ & $0.05$ & $0.049$ & $0.046$ & $0.045$ & $0.052$ & $0.051$ \\
 \hline \\ 
 \multicolumn{8}{c}{closely-spaced} \\ \\ \hline \\ 
 $\beta$ & $\alpha$ & $MSE_{ST}^{GCV}$ & $MSE_{ST}^{true}$ & $MSE_{TR}^{GCV}$ & $MSE_{TR}^{true}$ & $MSE_{HR}^{GCV}$ & $MSE_{HR}^{true}$ \\ \\ \hline 
$\beta_{1}$ & $1.1$ & $1.09$ & $0.857$ & $1.054$ & $0.888$ & $0.935$ & $0.851$ \\ 
 & $2$ & $0.949$ & $0.854$ & $0.821$ & $0.694$ & $0.725$ & $0.697$ \\ \\
$\beta_{2}$ & $1.1$ & $1.055$ & $0.822$ & $1.015$ & $0.835$ & $0.887$ & $0.808$ \\
 & $2$ & $0.896$ & $0.813$ & $0.763$ & $0.647$ & $0.681$ & $0.647$ \\ \\
$\beta_{3}$ & $1.1$ & $0.051$ & $0.05$ & $0.045$ & $0.043$ & $0.058$ & $0.051$ \\
 & $2$ & $0.049$ & $0.048$ & $0.046$ & $0.043$ & $0.051$ & $0.05$ \\
 \hline
\end{tabular}}
\end{center}}
\label{Tab-1}
\end{table}

It is observed from Table \ref{Tab-1} that under the well-spaced scenario, the MSEs (true as well as cross-validated) of the hybrid estimator are significantly smaller than those of the Tikhonov estimator for $\beta = \beta_{1}$ and $\beta_{2}$. Somewhat surprisingly, the true MSEs of the hybrid estimator and the spectral estimator are not dissimilar. Although the cross-validation MSE of the spectral estimator for $\beta_{1}$ as well as $\beta_{2}$ is significantly smaller than that of the hybrid estimator for $\alpha= 1.1$, these MSEs are quite close when $\alpha = 2$. In the closely-spaced case, the cross-validation MSEs of the hybrid estimator are significantly smaller than those of the spectral and the Tikhonov estimators for all choices of $\alpha$ under $\beta_{1}$ and $\beta_{2}$. For these $\beta$'s, the true MSEs of the spectral estimator and hybrid estimator are comparable for $\alpha = 1.1$, but the former become significantly larger when $\alpha = 2$. {For $\beta = \beta_{3}$, it is found that the MSEs (true as well as cross-validated) of the three estimators are not significantly different from one another when $\alpha = 2$. In case $\alpha = 1.1$, the cross-validated MSE of the hybrid estimator is marginally larger than those of other two estimators. Further, the true MSE of the hybrid estimator is marginally larger than that of the spectral estimator in the well-spaced scenario. As mentioned in the earlier simulation study, these findings do not contradict the domination result in Theorem \ref{thm-data2}.} It seems that in both the well-spaced and the closely spaced situations, the cross-validation method for $\rho$ is slightly unstable when the eigenvalues decay slowly. This may be attributed to the fact that the cross-validation estimate is based on prediction error, whose difficulty reduces as the eigenvalues decay faster (see the discussion in p. 3428 in \citet{YC10}).

\section{Proofs of Formal Statements}
\label{5}

We first prove a Lemma that will allow us to connect the Fourier coefficient decay of $\beta$, the eigenvalue decay of $\mathscr{K}$, and the ridge parameter $\rho$.
 
\begin{lemma}  \label{lemma-2}
Suppose that $\lambda_{1} > \lambda_{2} > \ldots > 0$ is a sequence of reals and $\rho > 0$. Assume that $\lambda_{j} = O(j^{-\alpha})$ for some $\alpha > 1$ and for all sufficiently large $j \geq 1$. Let $\{b_{j}\}_{j \geq 1}$ be another sequence of reals such that $|b_{j}| \leq j^{-\eta}$ for some $\eta > 0$ and for all sufficiently large $j$. Then, for any $b \geq a \geq 0$ and any $c \geq 0$ with $2c\eta + a\alpha > 1$, we have 
$$\sum_{j=1}^{\infty} b_{j}^{2c}\lambda_{j}^{a}/(\lambda_{j} + \rho)^{b} \leq \mathrm{const.}\rho^{\frac{2c\eta}{\alpha}-b+a-\frac{1}{\alpha}}$$
if $2c\eta < \alpha(b-a) + 1$. Further, if $2c\eta > \alpha(b-a) + 1$, then $\sup_{\rho > 0} \sum_{j=1}^{\infty} b_{j}^{2c}\lambda_{j}^{a}/(\lambda_{j} + \rho)^{b} = \sum_{j=1}^{\infty} b_{j}^{2c}\lambda_{j}^{a}/\lambda_{j}^{b} < \infty$.
\end{lemma}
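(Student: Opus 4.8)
The plan is to localise the analysis around the index at which $\lambda_j$ crosses the level $\rho$, exploiting that the summand changes character there. Throughout I will use the two-sided control $c_1 j^{-\alpha}\le\lambda_j\le c_2 j^{-\alpha}$ for all large $j$ (which is exactly what condition (A2) supplies whenever the lemma is invoked in the paper), rather than only the displayed $\lambda_j=O(j^{-\alpha})$: the lower bound is genuinely needed here, since with the upper bound alone the left-hand side can be of order $\rho^{-1}$ (up to a logarithmic factor), which exceeds $\rho^{e}$ as soon as the target exponent $e:=\tfrac{2c\eta}{\alpha}-b+a-\tfrac1\alpha$ is larger than $-1$. Under this control the crossing occurs at $N\asymp\rho^{-1/\alpha}$: for $j\le N$ we have $(\lambda_j+\rho)^b\asymp\lambda_j^b$, and for $j>N$ we have $(\lambda_j+\rho)^b\asymp\rho^b$.

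Concretely I would fix $j_0$ and $c_1,c_2>0$ with $c_1j^{-\alpha}\le\lambda_j\le c_2j^{-\alpha}$ and $|b_j|\le j^{-\eta}$ for $j\ge j_0$, set $N=\lceil\rho^{-1/\alpha}\rceil$, and for $\rho\le j_0^{-\alpha}$ (so $N\ge j_0$) split the series as $\sum_{j<j_0}+\sum_{j_0\le j\le N}+\sum_{j>N}$. On the middle block, bounding $(\lambda_j+\rho)^b\ge\lambda_j^b$ and then $\lambda_j^{a-b}\le c_1^{a-b}j^{\alpha(b-a)}$ (valid since $a-b\le0$) leaves $\sum_{j_0\le j\le N}j^{-q}$ with $q:=2c\eta-\alpha(b-a)$; in the first regime $q<1$, so this is $O(N^{1-q})=O(\rho^{(q-1)/\alpha})=O(\rho^{e})$. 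On the tail block, bounding $(\lambda_j+\rho)^b\ge\rho^b$ and $\lambda_j^a\le c_2^aj^{-a\alpha}$ leaves $\rho^{-b}\sum_{j>N}j^{-p}$ with $p:=2c\eta+a\alpha>1$ by hypothesis, hence $O(\rho^{-b}N^{1-p})=O(\rho^{-b+(p-1)/\alpha})=O(\rho^{e})$. The fact that both of these exponents collapse to the single number $e$ is the arithmetic core of the statement, and it is why the hypotheses $2c\eta+a\alpha>1$ and $2c\eta<\alpha(b-a)+1$ are the right ones. The finitely many $j<j_0$ terms are each at most the constant $b_j^{2c}\lambda_j^{a-b}$, and a constant is $O(\rho^{e})$ on $\rho\le1$ because $e<0$. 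Finally, for $\rho>j_0^{-\alpha}$ one simply bounds every term by $b_j^{2c}\lambda_j^a\rho^{-b}\le c_2^aj^{-p}\rho^{-b}$ (plus finitely many bounded terms), whose sum is $O(\rho^{-b})=O(\rho^{e})$ there because $e\ge-b$ and $\rho$ is bounded below. Summing the three blocks gives the asserted bound for all $\rho>0$.

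For the second statement, $\rho\mapsto(\lambda_j+\rho)^{-b}$ is nonincreasing, so the whole sum is nonincreasing in $\rho$ and its supremum is the increasing limit as $\rho\downarrow0$; monotone convergence identifies this limit with $\sum_j b_j^{2c}\lambda_j^{a-b}$. Its finiteness then follows from $b_j^{2c}\lambda_j^{a-b}\le c_1^{a-b}j^{-q}$ with $q=2c\eta-\alpha(b-a)>1$ in this regime, so the tail is a convergent $p$-series and the head is finite.

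I expect the only genuinely delicate points to be exactly the two just flagged: recognising that the lower bound on the eigenvalues cannot be dispensed with, and carrying out the two exponent computations carefully enough to see them coincide at $e$. Handling the edge behaviour — the block $j<j_0$, and the range of $\rho$ bounded away from $0$ — so that the estimate holds uniformly over all $\rho>0$ rather than merely asymptotically is then just bookkeeping with partial sums and tails of $p$-series.
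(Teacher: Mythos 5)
Your proof is correct and follows essentially the same route as the paper's: split the series at $J\asymp\rho^{-1/\alpha}$, bound $(\lambda_j+\rho)^b$ from below by $\lambda_j^b$ on the head and by $\rho^b$ on the tail, and compare with the corresponding integrals, both blocks collapsing onto the exponent $\frac{2c\eta}{\alpha}-b+a-\frac{1}{\alpha}$. Your observation that a two-sided bound $c_1 j^{-\alpha}\le\lambda_j\le c_2 j^{-\alpha}$ is genuinely needed (the stated hypothesis $\lambda_j=O(j^{-\alpha})$ alone does not control $\lambda_j^{a-b}$ from above on the head block) is correct --- the paper's own proof silently uses the lower bound at exactly that step, and your extra bookkeeping for the finitely many small $j$ and for $\rho$ bounded away from zero only makes the argument more complete.
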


\begin{proof}
Consider the case when $2c\eta < \alpha(b-a) + 1$, and fix $J = \rho^{-1/\alpha}$. Note that 
\begin{eqnarray*}
\sum_{j > J} \frac{b_{j}^{2c}\lambda_{j}^{a}}{(\lambda_{j} + \rho)^{b}} &\leq& \mathrm{const.}\rho^{-b} \sum_{j > J} b_{j}^{2c}\lambda_{j}^{a} \ \leq \ \mathrm{const.}\rho^{-b} \sum_{j > J} j^{-2c\eta-a\alpha} \\
&\leq& \mathrm{const.}\rho^{-b}\int_{J}^{\infty} x^{-2c\eta-a\alpha} dx \ \leq \ \mathrm{const.}\rho^{\frac{2c\eta}{\alpha}-b+a-\frac{1}{\alpha}}.
\end{eqnarray*}
Also, 
\begin{eqnarray*}
\sum_{j \leq J} \frac{b_{j}^{2c}\lambda_{j}^{a}}{(\lambda_{j} + \rho)^{b}} &\leq& \mathrm{const.} \sum_{j \leq J} b_{j}^{2c}\lambda_{j}^{-b+a} \ \leq \ \mathrm{const.} \sum_{j \leq J} j^{-2c\eta+\alpha(b-a)} \\
&\leq& \mathrm{const.}\int_{0}^{J} x^{-2c\eta+\alpha(b-a)} dx \ \leq \ \mathrm{const.}\rho^{\frac{2c\eta}{\alpha}-b+a-\frac{1}{\alpha}}.
\end{eqnarray*}
This completes the proof of the first part of the lemma. 

 Next consider the case when $2c\eta > \alpha(b-a) + 1$. Note that $\sum_{j =1}^{\infty} b_{j}^{2c}\lambda_{j}^{a}/(\lambda_{j} + \rho)^{b} \leq \sum_{j =1}^{\infty} b_{j}^{2c}\lambda_{j}^{-b+a}$ for all $\rho > 0$. Further, 
$$\sum_{j =1}^{\infty} b_{j}^{2c}\lambda_{j}^{-b+a} \leq \mathrm{const.} \sum_{j=1}^{\infty} j^{-2c\eta+\alpha(b-a)} < \infty.$$
This proves the second part of the lemma.
\end{proof}

\begin{proof}[Proof of Theorem \ref{thm-orcl1}]
(a) Note that $\mathbb{E}(\hat{\beta}_{TR}) = (\mathscr{K} + \rho\mathscr{I})^{-1} \mathbb{E}(yX) = (\mathscr{K} + \rho\mathscr{I})^{-1}\mathscr{K}\beta = \sum_{j=1}^{\infty} \lambda_{j}(\lambda_{j} + \rho)^{-1} \langle\beta,\phi_{j}\rangle \phi_{j}$. Thus, $\mathrm{Bias}(\hat{\beta}_{TR}) = \rho\sum_{j=1}^{\infty} (\lambda_{j} + \rho)^{-1}\langle\beta,\phi_{j}\rangle \phi_{j}$. 

 For simplicity of the calculations of the variances of $\widetilde{\beta}_{TR}$ and $\widetilde{\beta}_{HR}$, we will take $\widetilde{C} = n^{-1} \sum_{i=1}^{n} y_{i}X_{i}$, $\widetilde{C}_{1} = n^{-1} \sum_{i=1}^{n} y_{i}X_{i}$ and $\widetilde{C}_{2} = n^{-1} \sum_{i=1}^{n} y_{i}Z_{i}$. This substitution will not effect the orders of the MSEs since the means $\overline{X}$, $\overline{Y}$ and $\overline{Z}$ are all of smaller order than the original and the substituted estimators of the cross-covariances.
\begin{eqnarray*}
\mathrm{Var}(\widetilde{\beta}_{TR}) &=& \mathrm{Var}\left(n^{-1} \sum_{i=1}^{n} y_{i} \{\sum_{j=1}^{\infty} (\lambda_{j}+\rho)^{-1} \langle X_{i}, \phi_{j} \rangle \phi_{j}\}\right) \\
&=& n^{-1} \mathrm{Var}\left(y\sum_{j=1}^{\infty} (\lambda_{j}+\rho)^{-1} \langle X, \phi_{j} \rangle \phi_{j}\right) \\
&=& n^{-1} \sum_{j=1}^{\infty} \sum_{j' =1}^{\infty} \frac{\langle \mathrm{Cov}(\langle yX, \phi_{j} \rangle, \langle yX, \phi_{j'} \rangle}{(\lambda_{j}+\rho)(\lambda_{j'}+\rho)} (\phi_{j} \otimes \phi_{j'}).
\end{eqnarray*}
Note that  $\mathrm{Cov}(\langle yX, \phi_{j} \rangle, \langle yX, \phi_{j'} \rangle) = \langle \mathrm{Var}(yX)\phi_{j}, \phi_{j'}\rangle$. Now, 
\begin{eqnarray*}
\mathrm{Var}(yX) &=& \mathbb{E}(y^{2}X \otimes X) - \mathbb{E}(yX) \otimes \mathbb{E}(yX) \\
&=& \mathbb{E}\{[\langle X,\beta\rangle^{2} + \epsilon^{2} + 2\epsilon\langle X,\beta\rangle](X \otimes X)\} - \mathscr{K}\beta \otimes \mathscr{K}\beta \\
&=& \mathbb{E}\{\langle X,\beta\rangle^{2} X \otimes X\} + \sigma^{2}\mathscr{K} - \mathscr{K}\beta \otimes \mathscr{K}\beta.
\end{eqnarray*}
It follows that $\langle \mathrm{Var}(yX)\phi_{j}, \phi_{j'}\rangle = \mathbb{E}\{\langle X,\beta\rangle^{2}\langle X,\phi_{j}\rangle\langle X,\phi_{j'}\rangle\} + \sigma^{2}\delta_{jj'}\lambda_{j} - \lambda_{j}\lambda_{j'}\langle\beta,\phi_{j}\rangle\langle\beta,\phi_{j'}\rangle$. Note that 
\begin{eqnarray*}
&& \mathbb{E}\{\langle X,\beta\rangle^{2}\langle X,\phi_{j}\rangle\langle X,\phi_{j'}\rangle\} \\
&& = \sum_{l=1}^{\infty} \sum_{l'=1}^{\infty} \langle\beta,\phi_{l}\rangle\langle\beta,\phi_{l'}\rangle \mathbb{E}\{\langle X,\phi_{l}\rangle\langle X,\phi_{l'}\rangle\langle X,\phi_{j}\rangle\langle X,\phi_{j'}\rangle\},
\end{eqnarray*}
and we denote the above infinite sum by $a_{jj'}$. If $j = j'$, then all terms vanish except those for which $l = l'$. This is due to the fact that the $\langle X,\phi_{k}\rangle$'s are independent with zero mean. So, in this case, 
\begin{eqnarray*}
a_{jj} &=& \sum_{l=1}^{\infty} \langle\beta,\phi_{l}\rangle^{2} \mathbb{E}\{\langle X,\phi_{l}\rangle^{2}\langle X,\phi_{j}\rangle^{2}\} \\
&=& \langle\beta,\phi_{j}\rangle^{2}\mathbb{E}\{\langle X,\phi_{j}\rangle^{4}\} + \lambda_{j}\sum_{l \neq j} \langle\beta,\phi_{l}\rangle^{2} \lambda_{l} \\
&=& \langle\beta,\phi_{j}\rangle^{2}[\mathrm{Var}(\langle X,\phi_{j}\rangle^{2}) + \lambda_{j}^{2}] + \lambda_{j}\sum_{l \neq j} \langle\beta,\phi_{l}\rangle^{2} \lambda_{l}  \\
&=& \langle\beta,\phi_{j}\rangle^{2}\mathrm{Var}(\langle X,\phi_{j}\rangle^{2}) + \lambda_{j}\sum_{l =1}^{\infty} \langle\beta,\phi_{l}\rangle^{2} \lambda_{l} \\
&=& \langle\beta,\phi_{j}\rangle^{2}\mathrm{Var}(\langle X,\phi_{j}\rangle^{2}) + \lambda_{j}\langle \mathscr{K}\beta,\beta\rangle.
\end{eqnarray*}
On the other hand if $j \neq j'$, then 
\begin{eqnarray*}
a_{jj'} = 2\langle\beta,\phi_{j}\rangle\langle\beta,\phi_{j'}\rangle \mathbb{E}\{\langle X,\phi_{j}\rangle^{2}\}\mathbb{E}\{\langle X,\phi_{j'}\rangle^{2}\} = 2\langle\beta,\phi_{j}\rangle\langle\beta,\phi_{j'}\rangle\lambda_{j}\lambda_{j'}.
\end{eqnarray*}
Consequently, 
\begin{eqnarray*}
&& \mathrm{Var}(\widetilde{\beta}_{TR}) \\
&=& n^{-1} \sum_{j=1}^{\infty} \sum_{j'=1}^{\infty} (\lambda_{j} + \rho)^{-1}(\lambda_{j'} + \rho)^{-1}\langle \mathrm{Var}(yX)\phi_{j}, \phi_{j'} \rangle (\phi_{j} \otimes \phi_{j'}) \\
&=& n^{-1} \sum_{j=1}^{\infty} (\lambda_{j} + \rho)^{-2} \left[\langle\beta,\phi_{j}\rangle^{2}\{\mathrm{Var}(\langle X,\phi_{j}\rangle^{2}) - \lambda_{j}^{2}\} \ + \right.\\
&& \hspace{5cm} \left.\lambda_{j}\{\langle \mathscr{K}\beta,\beta\rangle + \sigma^{2}\}\right] (\phi_{j} \otimes \phi_{j}) \\
&& + \ n^{-1} \sum_{1 \leq j \neq j' < \infty} \frac{\lambda_{j} \lambda_{j'}}{(\lambda_{j} + \rho)(\lambda_{j'} + \rho)} \langle\beta,\phi_{j}\rangle \langle\beta,\phi_{j'}\rangle (\phi_{j} \otimes \phi_{j'}).
\end{eqnarray*}
Since $\tr(\phi_{j} \otimes \phi_{j'}) = \langle\phi_{j},\phi_{j'}\rangle = \delta_{jj'}$, where $\delta$ is the Kronecker delta function,  it follows that
\begin{eqnarray}
&& \mathrm{MSE}(\widetilde{\beta}_{TR}) = \tr\{\mathbb{E}\{(\widetilde{\beta}_{TR} - \beta) \otimes (\widetilde{\beta}_{TR} - \beta)\}\} \label{mse-HH-1} \\
&=& \tr\{\mathrm{Var}(\widetilde{\beta}_{TR})\} + \tr\{\mathrm{Bias}(\widetilde{\beta}_{TR}) \otimes \mathrm{Bias}(\widetilde{\beta}_{TR})\} \nonumber \\
&=& n^{-1} \sum_{j=1}^{\infty} (\lambda_{j} + \rho)^{-2}[\langle\beta,\phi_{j}\rangle^{2}\{\mathrm{Var}(\langle X,\phi_{j}\rangle^{2}) - \lambda_{j}^{2}\} + \lambda_{j}\{\langle \mathscr{K}\beta,\beta\rangle + \sigma^{2}\}] \nonumber \\
&& + \ \rho^{2}\sum_{j=1}^{\infty} (\lambda_{j} + \rho)^{-2}\langle\beta,\phi_{j}\rangle^{2} \nonumber
\end{eqnarray}
Recall now that $\mathscr{K} = \mathscr{K}_{1} + \mathscr{K}_{2}$, and the eigenspaces of $\mathscr{K}_{1}$ and $\mathscr{K}_{2}$ are orthogonal. As a result, the eigenvalues and the eigenfunctions of $\mathscr{K}$ are the union of the sets of eigenvalues and the eigenfunctions, respectively, of $\mathscr{K}$ and $\mathscr{K}_{2}$. Furthermore, without loss of generality, we can assume that the eigenvalues of $\mathscr{K}_{1}$ are the $r$ largest eigenvalues of $\mathscr{K}$ to alleviate notation. With this convention,
\begin{eqnarray}
&& \mathrm{MSE}(\widetilde{\beta}_{TR}) \label{mse-HH-2} \\
&=& n^{-1} \sum_{j=1}^{r} (\lambda_{j1} + \rho)^{-2}[\langle\beta,\phi_{j1}\rangle^{2}\{\mathrm{Var}(\langle Y,\phi_{j1}\rangle^{2}) - \lambda_{j1}^{2}\} + \lambda_{j1}\{\langle \mathscr{K}\beta,\beta\rangle + \sigma^{2}\}] \nonumber \\
&& + \ n^{-1} \sum_{j=1}^{\infty} (\lambda_{j2} + \rho)^{-2}[\langle\beta,\phi_{j2}\rangle^{2}\{\mathrm{Var}(\langle Z,\phi_{j2}\rangle^{2}) - \lambda_{j2}^{2}\} + \lambda_{j2}\{\langle \mathscr{K}\beta,\beta\rangle + \sigma^{2}\}]  \nonumber \\
&& + \ \rho^{2}\left[\sum_{j=1}^{r} (\lambda_{j1} + \rho)^{-2}\langle\beta,\phi_{j1}\rangle^{2} + \sum_{j=1}^{\infty} (\lambda_{j2} + \rho)^{-2}\langle\beta,\phi_{j2}\rangle^{2}\right].  \nonumber
\end{eqnarray}

We next compute $\mathrm{MSE}(\widetilde{\beta}_{HR}) = \tr\{\mathbb{E}\{(\widetilde{\beta}_{HR} - \beta) \otimes (\widetilde{\beta}_{HR} - \beta)\}\}$. Write $\widetilde{\beta}_{1} = \mathscr{K}_{1}^{-}\widetilde{C}_{1}$ and $\widetilde{\beta}_{2} = \mathscr{K}_{\rho,2}^{-}\widetilde{C}_{2}$. Thus $\widetilde{\beta} = \widetilde{\beta}_{1} + \widetilde{\beta}_{2}$. Using the fact that $Y$ and $Z$ are uncorrelated, it can be straightforwardly shown that $\mathbb{E}(\widetilde{\beta}_{1}) = \sum_{j=1}^{r} \langle\beta,\phi_{j1}\rangle\phi_{j1}$. Similarly, $\mathbb{E}(\widetilde{\beta}_{2}) = \sum_{j=1}^{\infty} (\lambda_{j2} + \rho)^{-1} \lambda_{j2}\langle\beta,\phi_{j2}\rangle\phi_{j2}$. Since we have assumed that $\beta = \mathscr{P}_{1}\beta + \mathscr{P}_{2}\beta$ for identifiability, we have $\mathrm{Bias}(\widetilde{\beta}_{HR}) = \rho\sum_{j=1}^{\infty} (\lambda_{j2} + \rho)^{-1} \langle\beta,\phi_{j2}\rangle\phi_{j2}$. 

 Next, note that $\mathrm{Var}(\widetilde{\beta}_{HR}) = \mathrm{Var}(\widetilde{\beta}_{1}) + \mathrm{Var}(\widetilde{\beta}_{2}) + \mathrm{Cov}(\widetilde{\beta}_{1},\widetilde{\beta}_{2}) + \mathrm{Cov}(\widetilde{\beta}_{2},\widetilde{\beta}_{1}) = T_{1} + T_{2} + T_{3} + T_{4}$, say. Clearly, $T_{4} = T_{3}^{*}$. Also note that 
 $$T_{1} = n^{-1} \sum_{j=1}^{r} \sum_{j'=1}^{r} \lambda_{j1}^{-1}\lambda_{j'1}^{-1} \langle \mathrm{Var}(yY)\phi_{j1},\phi_{j'1}\rangle \phi_{j1} \otimes \phi_{j'1}.$$ 
 Now, 
\begin{eqnarray*}
\mathrm{Var}(yY) &=& \mathbb{E}(y^{2}Y \otimes Y) - \mathbb{E}(yY) \otimes \mathbb{E}(yY) \\
&=& \mathbb{E}\{[\langle Y,\beta\rangle^{2} + \langle Z,\beta\rangle^{2} + \epsilon^{2} + 2\langle Y,\beta\rangle\langle Z,\beta\rangle +\\
&& \hspace{1cm} 2\epsilon\langle Y,\beta\rangle + 2\epsilon\langle Z,\beta\rangle](Y \otimes Y)\} - \mathscr{K}_{1}\beta \otimes \mathscr{K}_{1}\beta \\
&=& \mathbb{E}\{\langle Y,\beta\rangle^{2}(Y \otimes Y)\} + (\langle\mathscr{K}_{2}\beta,\beta\rangle + \sigma^{2})\mathscr{K}_{1} - \mathscr{K}_{1}\beta \otimes \mathscr{K}_{1}\beta.
\end{eqnarray*}
Also, $\langle \mathrm{Var}(yY)\phi_{j1}, \phi_{j'1}\rangle = \mathbb{E}\{\langle Y,\beta\rangle^{2}\langle Y,\phi_{j1}\rangle\langle Y,\phi_{j'2}\rangle\} + (\langle\mathscr{K}_{2}\beta,\beta\rangle + \sigma^{2})\delta_{jj'}\lambda_{j1} - \lambda_{j1}\lambda_{j'1}\langle\beta,\phi_{j1}\rangle\langle\beta,\phi_{j'1}\rangle$. Calculations similar to those used earlier to derive the term $\mathrm{Var}(\widetilde{\beta}_{TR})$ now yield
\begin{eqnarray*}
\mathbb{E}\{\langle Y,\beta\rangle^{2}\langle Y,\phi_{j1}\rangle\langle Y,\phi_{j'2}\rangle\} = 
\begin{cases}
\langle\beta,\phi_{j1}\rangle^{2}\mathrm{Var}(\langle Y,\phi_{j1}\rangle^{2}) + \lambda_{j1}\langle\mathscr{K}_{1}\beta,\beta\rangle &\text{if $j = j'$} \\
2\langle\beta,\phi_{j1}\rangle\langle\beta,\phi_{j'1}\rangle\lambda_{j1}\lambda_{j'1} &\text{if $j \neq j'$}.
\end{cases}
\end{eqnarray*}
Thus, we obtain
\begin{eqnarray*}
T_{1} &=&  n^{-1} \sum_{j=1}^{r} \lambda_{j1}^{-2}\left[\langle\beta,\phi_{j1}\rangle^{2}\{\mathrm{Var}(\langle Y,\phi_{j1}\rangle^{2}) - \lambda_{j1}^{2}\} + \right. \\
&& \hspace{4cm} \left. \lambda_{j1}\{\langle \mathscr{K}\beta,\beta\rangle + \sigma^{2}\}\right] (\phi_{jY} \otimes \phi_{jY}) \\
&& + \ n^{-1} \sum_{1 \leq j \neq j' \leq r} \langle\beta,\phi_{jY}\rangle\langle\beta,\phi_{j'Y}\rangle (\phi_{jY} \otimes \phi_{j'Y}).
\end{eqnarray*}
Similar calculations yield
\begin{eqnarray*}
T_{2} &=&  n^{-1} \sum_{j=1}^{\infty} (\lambda_{j2} + \rho)^{-2}\left[\langle\beta,\phi_{j2}\rangle^{2}\{\mathrm{Var}(\langle Z,\phi_{j2}\rangle^{2}) - \lambda_{j2}^{2}\} + \right. \\
&& \hspace{4cm} \left. \lambda_{j2}\{\langle \mathscr{K}\beta,\beta\rangle + \sigma^{2}\}\right] (\phi_{j2} \otimes \phi_{j2}) \\
&& + \ n^{-1} \sum_{1 \leq j \neq j' < \infty} \frac{\lambda_{j2}\lambda_{j'2}}{(\lambda_{j2} + \rho)(\lambda_{j'2} + \rho)}\langle\beta_{2},\phi_{j2}\rangle\langle\beta,\phi_{j'2}\rangle (\phi_{j2} \otimes \phi_{j'2}).
\end{eqnarray*}
Moreover, $T_{3} = n^{-1} \sum_{j=1}^{r} \sum_{j'=1}^{\infty} (\lambda_{j'2} + \rho)^{-1}\lambda_{j'2} \langle\beta,\phi_{j1}\rangle\langle\beta,\phi_{j'2}\rangle (\phi_{j1} \otimes \phi_{j'2})$. So,
\begin{eqnarray}
&& \mathrm{MSE}(\widetilde{\beta}_{HR}) \label{mse-HR-1} \\
&=& n^{-1} \sum_{j=1}^{r} \lambda_{j1}^{-2}[\langle\beta,\phi_{j1}\rangle^{2}\{\mathrm{Var}(\langle Y,\phi_{j1}\rangle^{2}) - \lambda_{j1}^{2}\} + \lambda_{j1}\{\langle \mathscr{K}\beta,\beta\rangle + \sigma^{2}\}] \nonumber \\
&& + \ n^{-1} \sum_{j=1}^{\infty} (\lambda_{j2} + \rho)^{-2}[\langle\beta,\phi_{j2}\rangle^{2}\{\mathrm{Var}(\langle Z,\phi_{j2}\rangle^{2}) - \lambda_{j2}^{2}\} + \lambda_{j2}\{\langle \mathscr{K}\beta,\beta\rangle + \sigma^{2}\}] \nonumber \\
&& + \ \rho^{2}\sum_{j=1}^{\infty} (\lambda_{j2} + \rho)^{-2} \langle\beta,\phi_{j2}\rangle^{2}. \nonumber 
\end{eqnarray}
Hence, it follows from \eqref{mse-HH-2} and \eqref{mse-HR-1} that
\begin{eqnarray}
&& \mathrm{MSE}(\widetilde{\beta}_{TR}) - \mathrm{MSE}(\widetilde{\beta}_{HR}) \label{mse-compare}\\
&=& n^{-1} \sum_{j=1}^{r} \{(\lambda_{j1}+\rho)^{-2} - \lambda_{j1}^{-2}\}\left[\langle\beta,\phi_{j1}\rangle^{2}\{\mathrm{Var}(\langle Y,\phi_{j1}\rangle^{2}) - \lambda_{j1}^{2}\} + \right. \nonumber \\
&& \hspace{2cm} \left. \lambda_{j1}\{\langle \mathscr{K}\beta,\beta\rangle + \sigma^{2}\}\right] \nonumber \\
&& + \ \rho^{2}\sum_{j=1}^{r} (\lambda_{j1} + \rho)^{-2}\langle\beta,\phi_{j1}\rangle^{2} \nonumber \\
&=& n^{-1}A_{1}(\rho) + \rho^{2}A_{2}(\rho), \nonumber 
\end{eqnarray}
where $A_{1}(\rho)$ and $A_{2}(\rho)$ are the terms associated with $n^{-1}$ and $\rho^{2}$, respectively, the expression of $\mathrm{MSE}(\widetilde{\beta}_{TR}) - \mathrm{MSE}(\widetilde{\beta}_{HR})$ given above. It follows from the above equality that for each fixed $\rho > 0$, $\mathrm{MSE}(\widetilde{\beta}_{TR})- \mathrm{MSE}(\widetilde{\beta}_{HR}) > 0$ for all sufficiently large $n$. This proves part (a) of the theorem. 

To prove part (b), let $\rho_{n} = cn^{-\gamma}$. Then, it straightforwardly follows that for $\rho = \rho_{n}$, we have $n^{-1} = O(\rho_{n}^{2})$, equivalently $O(n^{-2\gamma})$. Further, $A_{1}(\rho_{n}) = o(1)$ as $n \rightarrow \infty$. Define $B(n) = A_{2}(\rho_{n})$. So, $B_{n}$ converges to $\sum_{j=1}^{r} \langle\beta,\phi_{j1}\rangle^{2}/\lambda_{j1}^{2}$ as $n \rightarrow \infty$. This limit is positive if and only if at least one of the $\langle\beta,\phi_{j1}\rangle$'s is non-zero. Thus, 
$$n^{2\gamma}\{\mathrm{MSE}(\widetilde{\beta}_{TR}) - \mathrm{MSE}(\widetilde{\beta}_{HR})\} > B(n) + o(1)$$
as $n \rightarrow \infty$.
\end{proof}

\begin{proof}[Proof of Theorem \ref{thm-data1}]

As in the oracle case, define $Y_{i} = \sum_{j=1}^{r} \langle X_{i},\phi_{j}\rangle \phi_{j}$ and $Z_{i} = X_{i} - Y_{i}$ for all $i=1,2,\ldots,n$ (these random variables are not observed in practice). By choice of the $Y_{i}$'s and the $Z_{i}$'s, their population covariance operators are $\mathscr{K}_{1} = \sum_{j=1}^{r} \lambda_{j} \phi_{j} \otimes \phi_{j}$ and $\mathscr{K}_{2} = \sum_{j=r+1}^{r} \lambda_{j} \phi_{j} \otimes \phi_{j}$, respectively. So, the corresponding eigenspaces are orthogonal. Also, define $\mathscr{K}_{\rho,2} = \mathscr{K}_{2} + \rho\mathscr{P}_{2}$. 

 Now, observe that
\begin{eqnarray*}
\hat{\mathscr{K}}_{\rho,2}^{-} - \mathscr{K}_{\rho,2}^{-} = \hat{\mathscr{K}}_{\rho}^{-1} - \mathscr{K}_{\rho}^{-1} + \sum_{j=1}^{r} (\lambda_{j}+\rho)^{-1} \phi_{j} \otimes \phi_{j} - \sum_{j=1}^{r} (\hat{\lambda}_{j}+\rho)^{-1} \hat{\phi}_{j} \otimes \hat{\phi}_{j}.
\end{eqnarray*}
Define $\hat{\mathscr{F}}_{a} = \sum_{j=1}^{r} (\lambda_{j}+a)^{-1} \phi_{j} \otimes \phi_{j} - \sum_{j=1}^{r} (\hat{\lambda}_{j}+a)^{-1} \hat{\phi}_{j} \otimes \hat{\phi}_{j}$ for any $a \geq 0$. In this notation, 
$\hat{\mathscr{K}}_{1}^{-} - \mathscr{K}_{1}^{-} = -\hat{\mathscr{F}}_{0}$. Also, $\hat{C}_{1} - \widetilde{C}_{1} = (\hat{\mathscr{P}}_{1} - \mathscr{P}_{1})\hat{C}$ and $\hat{C}_{2} - \widetilde{C}_{2} = (\hat{\mathscr{P}}_{2} - \mathscr{P}_{2})\hat{C} = (\mathscr{P}_{1} - \hat{\mathscr{P}}_{1})\hat{C}$. 

 Note that 
\begin{eqnarray*}
&& \hat{\beta}_{HR} = \widetilde{\beta}_{HR} + \sum_{l=1}^{8} U_{l}, \\
\mbox{where} && U_{1} = (\hat{\mathscr{K}}_{1}^{-} - \mathscr{K}_{1}^{-})(\hat{C}_{1} - \widetilde{C}_{1})  = -\hat{\mathscr{F}}_{0}(\hat{\mathscr{P}}_{1} - \mathscr{P}_{1})\hat{C} \\
&& U_{2} = (\hat{\mathscr{K}}_{\rho,2}^{-} - \mathscr{K}_{\rho,2}^{-})(\hat{C}_{2} - \widetilde{C}_{2}) = (\hat{\mathscr{K}}_{\rho}^{-1} - \mathscr{K}_{\rho}^{-1})(\mathscr{P}_{1} - \hat{\mathscr{P}}_{1})\hat{C} + \hat{\mathscr{F}}_{\rho}(\mathscr{P}_{1} - \hat{\mathscr{P}}_{1})\hat{C} \\
&& U_{3} = \mathscr{K}_{1}^{-}(\hat{C}_{1} - \widetilde{C}_{1}) = \mathscr{K}_{1}^{-}(\hat{\mathscr{P}}_{1} - \mathscr{P}_{1})\hat{C} \\
&& U_{4} = \mathscr{K}_{\rho,2}^{-}(\hat{C}_{2} - \widetilde{C}_{2}) = \mathscr{K}_{\rho,2}^{-}(\mathscr{P}_{1} - \hat{\mathscr{P}}_{1})\hat{C}   \\
&& U_{5} = (\hat{\mathscr{K}}_{1}^{-} - \mathscr{K}_{1}^{-})(\widetilde{C}_{1} - C_{1}) = -\hat{\mathscr{F}}_{0}(\widetilde{C}_{1} - C_{1})  \\
&& U_{6} = (\hat{\mathscr{K}}_{\rho,2}^{-} - \mathscr{K}_{\rho,2}^{-})(\widetilde{C}_{2} - C_{2}) = (\hat{\mathscr{K}}_{\rho}^{-1} - \mathscr{K}_{\rho}^{-1})(\widetilde{C}_{2} - C_{2}) + \hat{\mathscr{F}}_{\rho}(\widetilde{C}_{2} - C_{2})  \\
&& U_{7} = (\hat{\mathscr{K}}_{1}^{-} - \mathscr{K}_{1}^{-})C_{1} = -\hat{\mathscr{F}}_{0}C_{1} \\
&& U_{8} = (\hat{\mathscr{K}}_{\rho,2}^{-} - \mathscr{K}_{\rho,2}^{-})C_{2} = (\hat{\mathscr{K}}_{\rho}^{-1} - \mathscr{K}_{\rho}^{-1})C_{2} + \hat{\mathscr{F}}_{\rho}C_{2}.
\end{eqnarray*}
Putting the pieces together, we get 
\begin{eqnarray*}
&& \mathbb{E}\{(\hat{\beta}_{HR} - \beta) \otimes (\hat{\beta}_{HR} - \beta)\} \\
&=& \mathbb{E}\{(\widetilde{\beta}_{HR} - \beta) \otimes (\widetilde{\beta}_{HR} - \beta)\} + \mathbb{E}\left\{\left(\sum_{l=1}^{8} U_{l}\right) \otimes \left(\sum_{l=1}^{8} U_{l}\right)\right\} \\
&& + \ \mathbb{E}\left\{(\widetilde{\beta}_{HR} - \beta) \otimes \left(\sum_{l=1}^{8} U_{l}\right)\right\} + \mathbb{E}\left\{\left(\sum_{l=1}^{8} U_{l}\right) \otimes (\widetilde{\beta}_{HR} - \beta)\right\}. \\
\end{eqnarray*}
So,
\begin{eqnarray*}
\mathrm{MSE}(\hat{\beta}_{HR}) = \mathrm{MSE}(\widetilde{\beta}_{HR}) + \mathbb{E}\left(\left\|\sum_{l=1}^{8} U_{l}\right\|^{2}\right) +  2\mathbb{E}\left(\langle\sum_{l=1}^{8} U_{l},\widetilde{\beta}_{HR} - \beta\rangle\right).
\end{eqnarray*}
Using the Cauchy-Schwarz inequality, one has
\begin{eqnarray}
&& |\mathrm{MSE}(\hat{\beta}_{HR}) - \mathrm{MSE}(\widetilde{\beta}_{HR})| \label{estm-HR-3} \\
&& \leq O(1)\left[\sum_{l=1}^{8} \mathbb{E}(\|U_{l}\|^{2}) + \mathbb{E}^{1/2}\{\|\widetilde{\beta}_{HR} - \beta\|^{2}\}\left\{\sum_{l=1}^{8} \mathbb{E}(\|U_{l}\|^{2})\right\}^{1/2} \right]. \nonumber
\end{eqnarray}

Note that $\mathbb{E}\{\|\widetilde{\beta}_{HR} - \beta\|^{2}\} = \mathrm{MSE}(\widetilde{\beta}_{HR})$ so that it can be obtained from the general expression in equation (\ref{mse-HR-1}). The second term in the right hand side of equation (\ref{mse-HR-1}) equals (by the definition of $Z_{1}$ and the assumptions in the theorem)
\begin{eqnarray}
&& n^{-1} \sum_{j=r+1}^{\infty} (\lambda_{j} + \rho)^{-2} \left[\langle\beta,\phi_{j}\rangle^{2}\lambda_{j}^{2} + \lambda_{j}\{\langle\mathscr{K}\beta,\beta\rangle + \sigma^{2}\} \right] \label{estm-HR-5} 
= O(n^{-1}) \left[ 1 + \sum_{j=r+1}^{\infty} \frac{\lambda_{j}}{\lambda_{j}+\rho^{2}} \right] = \frac{O(1)}{n\rho^{1+\frac{1}{\alpha}}}. 
\end{eqnarray}
Here, the last equality follows from Lemma \ref{lemma-2} by taking $a = 1, b = 2$ and $c = 0$ in the statement of that lemma. It also follows from Lemma \ref{lemma-2} by taking $a = 0, b=2$ and $c = 1$ that 
\begin{eqnarray}  \label{estm-HR-55}
\sum_{j=r+1}^{\infty} (\lambda_{j} + \rho)^{-2} \langle\beta,\phi_{j}\rangle^{2} = O(\rho^{L}),
\end{eqnarray}
where $L = (2\eta-1)/\alpha - 2$ or $L = 0$ according as $2\eta < 2\alpha + 1$ or $2\eta > 2\alpha + 1$. Put $m = L + 2$. So, the third term in the right hand side of equation (\ref{mse-HR-1}) is $O(\rho^{m})$, where $m = (2\eta-1)/\alpha$ or $m = 2$ according as $\alpha > \eta - 1/2$ or $\alpha < \eta - 1/2$. Combining this bound with \eqref{estm-HR-5} and the fact that first term in the right hand side of equation (\ref{mse-HR-1}) is $O(n^{-1})$, we obtain
\begin{eqnarray}
\mathbb{E}\{\|\widetilde{\beta}_{HR} - \beta\|^{2}\} = O(1)\left\{\frac{1}{n\rho^{1+\frac{1}{\alpha}}} + \rho^{m}\right\},  \label{estm-HR-6}
\end{eqnarray}
where $m = (2\eta-1)/\alpha$ or $m = 2$ depending on whether $\alpha > \eta - 1/2$ or $\alpha < \eta - 1/2$.

 We will now consider bounds for $\mathbb{E}(\|U_{l}\|^{2})$ for $l = 1,2,\ldots, 8$. First note that for any $a \geq 0$, we have
\begin{eqnarray*}
\hat{\mathscr{F}}_{a} &=& -\sum_{j=1}^{r} \{(\hat{\lambda}_{j}+a)^{-1} - (\lambda_{j}+a)^{-1}\} (\hat{\phi}_{j} \otimes \hat{\phi}_{j}) - \\
&& \ \sum_{j=1}^{r} (\lambda_{j}+a)^{-1} \{\hat{\phi}_{j} \otimes (\hat{\phi}_{j} - \phi_{j}) + (\hat{\phi}_{j} - \phi_{j}) \otimes \phi_{j}\} \\ 
\Rightarrow \ \ \hs\hat{\mathscr{F}}_{a}\hs &\leq& \sum_{j=1}^{r} |(\hat{\lambda}_{j}+a)^{-1} - (\lambda_{j}+a)^{-1}| + 2\sum_{j=1}^{r} (\lambda_{j}+a)^{-1} \|\hat{\phi}_{j} - \phi_{j}\|.
\end{eqnarray*}
Some straightforward but tedious moment calculations yield $\mathbb{E}\{\hs\hat{\mathscr{K}} - \mathscr{K}\hs^{8}\} = O(n^{-4})$ so that $\mathbb{E}\{\hs\hat{\mathscr{K}} - \mathscr{K}\hs^{4}\} = O(n^{-2})$. Thus, using Lemma 2.2 and 2.3 in \citet{HK12}, we have that for any $a \geq 0$
\begin{eqnarray}  \label{eq1}
\mathbb{E}\{\hs\hat{\mathscr{F}}_{a}\hs^{4}\} = O(n^{-2})
\end{eqnarray}
as $n \rightarrow \infty$. We will use this fact often in the proof. We will also use the fact that 
\begin{eqnarray}
&& \mathbb{E}\{\hs\hat{\mathscr{P}}_{1} - \mathscr{P}_{1}\hs^{8}\} \label{eq2} \\
&\leq& \mathbb{E}\{\hs\sum_{j=1}^{r} \{\hat{\phi}_{j} \otimes (\hat{\phi}_{j} - \phi_{j}) + (\hat{\phi}_{j} - \phi_{j}) \otimes \phi_{j}\}\hs^{8}\} \nonumber \\
&\leq& O(1) \sum_{j=1}^{r} \mathbb{E}\{\|\hat{\phi}_{j} - \phi_{j}\|^{8}\} \ \leq \  O(1)\mathbb{E}\{\hs\hat{\mathscr{K}} - \mathscr{K}\hs^{8}\} = O(n^{-4})  \nonumber
\end{eqnarray}
as $n \rightarrow \infty$. The third inequality follows from Lemma 2.3 in \citet{HK12}.

 Note that $\mathbb{E}(\|U_{1}\|^{2}) \leq O(1)E^{1/2}\{\hs\hat{\mathscr{F}}_{0}\hs_{\infty}^{4}\}E^{1/4}\{\hs\hat{\mathscr{P}}_{1} - \mathscr{P}_{1}\hs_{\infty}^{8}\}E^{1/4}\{\|\hat{C}\|^{8}\}$. It directly follows that $\mathbb{E}\{\|\widetilde{C}\|^{8}\} = O(1)$ as $n \rightarrow \infty$. Thus using \eqref{eq1} and \eqref{eq2} along with the fact that the operator norm is bounded above by the Hilbert-Schmidt norm, we have
\begin{eqnarray}
\mathbb{E}(\|U_{1}\|^{2}) = O(n^{-2})  \label{estm-HR-7}
\end{eqnarray}
as $n \rightarrow \infty$. 

 Next note that $\mathbb{E}(\|U_{3}\|^{2}) \leq \hs\mathscr{K}_{1}^{-}\hs_{\infty}^{2}E^{1/2}\{\hs\hat{\mathscr{P}}_{1} - \mathscr{P}_{1}\hs^{4}\}E^{1/2}\{\|\hat{C}\|^{4}\}$. Using the fact that $\hs\mathscr{K}_{1}^{-}\hs_{\infty} = \lambda_{r}^{-1}$, we get that
\begin{eqnarray}
\mathbb{E}(\|U_{3}\|^{2}) = O(n^{-1})  \label{estm-HR-9}
\end{eqnarray}
as $n \rightarrow \infty$.

 Next note that $\mathbb{E}(\|U_{5}\|^{2}) \leq E^{1/2}\{\hs\hat{\mathscr{F}}_{0}\hs^{4}\}E^{1/2}\{\|\widetilde{C}_{1} - C_{1}\|^{4}\}$. It is easy to show that $\mathbb{E}\{\|\widetilde{C}_{1} - C_{1}\|^{4}\} = O(n^{-2})$ as $n \rightarrow \infty$. So, it follows from \eqref{eq1} that 
\begin{eqnarray}
\mathbb{E}(\|U_{5}\|^{2}) = O(n^{-2}). \label{estm-HR-10}
\end{eqnarray}
Similar calculations also show that
\begin{eqnarray}
\mathbb{E}(\|U_{7}\|^{2}) = O(n^{-1})  \label{estm-HR-11}
\end{eqnarray}
as $n \rightarrow \infty$. Next, observe that 
\begin{eqnarray} 
&& \mathbb{E}(\|U_{6}\|^{2}) \leq 2E^{1/2}\{\hs\hat{\mathscr{F}}_{\rho}\hs^{4}\}E^{1/2}\{\|
  \widetilde{C}_{2} - C_{2}\|^{4}\} \label{eqn-modf1} \\
&& \hspace{2cm}  + 2E\{\|(\hat{\mathscr{K}}_{\rho}^{-1} - \mathscr{K}_{\rho}^{-1})
  (\widetilde{C}_{2} - C_{2})\|^{2}\}.  \nonumber
\end{eqnarray}  
From the fact that $\mathbb{E}\{\|\widetilde{C}_{2} - C_{2}\|^{4}\} = O(n^{-2})$ as $n \rightarrow \infty$ and using \eqref{eq1}, it follows that the first term on the right hand side of \eqref{eqn-modf1} is $O(n^{-2})$ as $n \rightarrow \infty$. Further, 
\begin{eqnarray}
&& E\{\|(\hat{\mathscr{K}}_{\rho}^{-1} - \mathscr{K}_{\rho}^{-1})(\widetilde{C}_{2} - C_{2}\|^{2}\} \label{eqn-modf2} \\
&\leq& E\{\hs{\mathscr{K}}_{\rho}^{-1}\hs_{\infty}\|(\hat{\mathscr{K}} - \mathscr{K})\mathscr{K}_{\rho}^{-1}(\widetilde{C}_{2} - C_{2}\|^{2}\} \nonumber \\
&\leq& \rho^{-2}E^{1/2}\{\hs\hat{\mathscr{K}} - \mathscr{K}\hs^{4}\}E^{1/2}\{\|\mathscr{K}_{\rho}^{-1}(\widetilde{C}_{2} - C_{2}\|^{4}\} \nonumber \\
&\leq& O(n^{-1}\rho^{-2}) \left[ E^{1/2}\{\|\mathscr{K}_{\rho,1}^{-}(\widetilde{C}_{2} - C_{2}\|^{4}\} + E^{1/2}\{\|\mathscr{K}_{\rho,2}^{-}(\widetilde{C}_{2} - C_{2}\|^{4}\} \right], \nonumber 
\end{eqnarray} 
where $\mathscr{K}_{\rho,1}^{-} = \sum_{j=1}^{r} (\lambda_{j} + \rho)^{-1} (\phi_{j} \otimes \phi_{j})$. The last inequality also uses the bound for $\mathbb{E}\{\hs\hat{\mathscr{K}} - \mathscr{K}\hs_{\infty}^{4}\}$ obtained for deriving \eqref{estm-HR-8}. 

Now, using the fact that for any $j = r+1, r+2, \ldots$, we have $E\{\langle\widetilde{C}_{2} - C_{2},\phi_{j}\rangle^{4}\} = O(n^{-2})E^{2}\{\langle y_{1}Z_{1} - \mathscr{K}_{2}\beta,\phi_{j}\rangle^{2}\} = O(n^{-2})\{\langle\beta,\phi_{j}\rangle^{2}\lambda_{j}^{2} + \lambda_{j}(\sigma^{2} + \langle \mathscr{K}\beta,\beta\rangle)\}^{2}$, we get that
\begin{eqnarray*}
&& E\{\|\mathscr{K}_{\rho,2}^{-}(\widetilde{C}_{2} - C_{2}\|^{4}\} \\
&=& E\left[ \left\{ \sum_{j=r+1}^{\infty} (\lambda_{j}+\rho)^{-2} \langle\widetilde{C}_{2} - C_{2},\phi_{j}\rangle^{2} \right\}^{2} \right] \\
&=& E\left\{ \sum_{j_{1},j_{2}=r+1}^{\infty} (\lambda_{j_{1}}+\rho)^{-2}(\lambda_{j_{2}}+\rho)^{-2}\langle\widetilde{C}_{2} - C_{2},\phi_{j_{1}}\rangle^{2}\langle\widetilde{C}_{2} - C_{2},\phi_{j_{2}}\rangle^{2} \right\} \\
&\leq& \sum_{j_{1},j_{2}=r+1}^{\infty} (\lambda_{j_{1}}+\rho)^{-2}(\lambda_{j_{2}}+\rho)^{-2} E^{1/2}\{\langle\widetilde{C}_{2} - C_{2},\phi_{j_{1}}\rangle^{4}\}E^{1/2}\{\langle\widetilde{C}_{2} - C_{2},\phi_{j_{2}}\rangle^{4}\} \\
&\leq& O(n^{-2})\left[\sum_{j=r+1}^{\infty} (\lambda_{j}+\rho)^{-2} \{\langle\beta,\phi_{j}\rangle^{2}\lambda_{j}^{2} + \lambda_{j}(\sigma^{2} + \langle \mathscr{K}\beta,\beta\rangle)\}\right]^{2} \\
&\leq& O(n^{-2}) \left[ 1 + \left\{\sum_{j=r+1}^{\infty} (\lambda_{j} + \rho)^{-2}\lambda_{j}\right\}^{2}\right] = O(n^{-2}\rho^{-2-2/\alpha}),
\end{eqnarray*}
by an application of Lemma \ref{lemma-2}. Now, using \eqref{eqn-modf1} and \eqref{eqn-modf2}, we get that
\begin{eqnarray}
\mathbb{E}(\|U_{6}\|^{2}) = o(n^{-1}\rho^{-1-1/\alpha})  \label{estm-HR-10-1}
\end{eqnarray}
as $n \rightarrow \infty$.

 Next note that $\mathbb{E}(\|U_{8}\|^{2}) \leq \mathbb{E}\{\|(\hat{\mathscr{K}}_{\rho}^{-1} - \mathscr{K}_{\rho}^{-1})C_{2}\|^{2}\} + \mathbb{E}\{\|\hat{\mathscr{F}}_{\rho}C_{2}\|^{2}\}$. From earlier calculations and using \eqref{eq1}, it follows that $\mathbb{E}\{\|\hat{\mathscr{F}}_{\rho}C_{2}\|^{2}\} = O(n^{-1})$ as $n \rightarrow \infty$. Next, note that $\mathbb{E}\{\|(\hat{\mathscr{K}}_{\rho}^{-1} - \mathscr{K}_{\rho}^{-1})C_{2}\|^{2}\} = \mathbb{E}\{\|\hat{\mathscr{K}}_{\rho}^{-1}(\hat{\mathscr{K}} - \mathscr{K})\mathscr{K}_{\rho}^{-1}C_{2}\|^{2}\} \leq E^{1/2}\{\hs\hat{\mathscr{K}}_{\rho}^{-1}\mathscr{K}_{\rho}\hs_{\infty}^{4}\}E^{1/2}\{\|\mathscr{K}_{\rho}^{-1}(\hat{\mathscr{K}} - \mathscr{K})\mathscr{K}_{\rho}^{-1}C_{2}\|^{4}\}$. Observe that $\hs\hat{\mathscr{K}}_{\rho}^{-1}\mathscr{K}_{\rho}\hs_{\infty} = \hs\hat{\mathscr{K}}_{\rho}^{-1}(\mathscr{K}_{\rho} - \hat{\mathscr{K}}_{\rho}) + \mathscr{I}\hs_{\infty} \leq \rho^{-1}\hs\mathscr{K} - \hat{\mathscr{K}}\hs_{\infty} + 1 \leq \rho^{-1}\hs\hat{\mathscr{K}} - \mathscr{K}\hs + 1$. So, we have $\mathbb{E}\{\hs\hat{\mathscr{K}}_{\rho}^{-1}\mathscr{K}_{\rho}\hs_{\infty}^{4}\} \leq 1 + \rho^{-4}\mathbb{E}\{\hs\hat{\mathscr{K}} - \mathscr{K}\hs^{4}\} = 1 + O(n^{-2}\rho^{-4}) = O(1)$ since $n\rho^{2} \rightarrow \infty$. We have
\begin{eqnarray*}
 \mathbb{E}\{\|\mathscr{K}_{\rho}^{-1}(\hat{\mathscr{K}} -  \mathscr{K})\mathscr{K}_{\rho}^{-1}C_{2}\|^{4}\} 
&=& \mathbb{E}\left[ \sum_{j=1}^{\infty} \langle\mathscr{K}_{\rho}^{-1}(\hat{\mathscr{K}} -  \mathscr{K})\mathscr{K}_{\rho}^{-1}C_{2},\phi_{j}\rangle^{2} \right]^{2} \\
&=& \mathbb{E}\left[\sum_{j=1}^{\infty} \langle(\hat{\mathscr{K}} - \mathscr{K})\mathscr{K}_{\rho}^{-1}C_{2},\mathscr{K}_{\rho}^{-1}\phi_{j}\rangle^{2} \right]^{2} \\
&=& \mathbb{E}\left[\sum_{j=1}^{\infty} \langle(\hat{\mathscr{K}} -  \mathscr{K})\mathscr{K}_{\rho}^{-1}\mathscr{K}_{2}\beta,(\lambda_{j} + \rho)^{-1}\phi_{j}\rangle^{2} \right]^{2}
\end{eqnarray*}
\noindent We denote the above expectation by $T$. Now,
\begin{eqnarray}
&& T = \sum_{j_{1},j_{2} = 1}^{\infty} \sum_{l_{1},l_{2},l_{3},l_{4} = r+1}^{\infty} \left[\frac{\prod_{u=1}^{4}(\langle\beta,\phi_{l_{u}}\rangle\lambda_{l_{u}})}{(\lambda_{j_{1}}+\rho)^{2}(\lambda_{j_{2}}+\rho)^{2}\prod_{u=1}^{4} (\lambda_{l_{u}}+\rho)} \times \right. \label{estm} \\
&& \hspace{2cm} \left. \mathbb{E}\left\{\prod_{i=1}^{2}\langle(\hat{\mathscr{K}} - \mathscr{K})\phi_{j_{1}},\phi_{l_{u}}\rangle\prod_{i=3}^{4}\langle(\hat{\mathscr{K}} - \mathscr{K})\phi_{j_{2}},\phi_{l_{u}}\rangle\right\}\right].  \nonumber
\end{eqnarray}
\noindent Direct calculation yields that if $j_{1} = j_{2}$ in the expression of $T$ above, then
\begin{eqnarray*}
&& \mathbb{E}\left\{\prod_{i=1}^{2}\langle(\hat{\mathscr{K}} - \mathscr{K})\phi_{j_{1}},\phi_{l_{u}}\rangle\prod_{i=3}^{4}\langle(\hat{\mathscr{K}} - \mathscr{K})\phi_{j_{2}},\phi_{l_{u}}\rangle\right\} \\
&\leq& O(n^{-2})\left\{[\lambda_{j_{1}}^{2}\mathbf{1}\{j_{1} = l_{1} = l_{2}\} + \lambda_{j_{1}}\lambda_{l_{1}}\mathbf{1}\{j_{1} \neq l_{1} = l_{2}\}] \times \right. \\
&& \hspace{2cm} \left.[\lambda_{j_{1}}^{2}\mathbf{1}\{j_{1} = l_{3} = l_{4}\} + \lambda_{j_{1}}\lambda_{l_{3}}\mathbf{1}\{j_{1} \neq l_{3} = l_{4}\}]\right\}.
\end{eqnarray*} 
\noindent On the other hand if $j_{1} \neq j_{2}$, then
\begin{eqnarray*}
&& \mathbb{E}\left\{\prod_{i=1}^{2}\langle(\hat{\mathscr{K}} - \mathscr{K})\phi_{j_{1}},\phi_{l_{u}}\rangle\prod_{i=3}^{4}\langle(\hat{\mathscr{K}} - \mathscr{K})\phi_{j_{2}},\phi_{l_{u}}\rangle\right\} \\
&\leq& O(n^{-2})\left\{[\lambda_{j_{1}}^{2}\mathbf{1}\{j_{1} = l_{1} = l_{2}\} + \lambda_{j_{1}}\lambda_{l_{1}}\mathbf{1}\{j_{1} \neq l_{1} = l_{2}\}]\times \right. \\
&& \hspace{2cm} \left. [\lambda_{j_{2}}^{2}\mathbf{1}\{j_{2} = l_{3} = l_{4}\} + \lambda_{j_{2}}\lambda_{l_{3}}\mathbf{1}\{j_{2} \neq l_{3} = l_{4}\}]  \right. \\
&& \hspace{2cm} \left. + \ \lambda_{j_{1}}^{2}\lambda_{j_{2}}^{2}\mathbf{1}\{j_{1} = l_{3} = l_{4}\}\mathbf{1}\{j_{2} = l_{1} = l_{2}\}\right\}
\end{eqnarray*}
So, we have
\begin{eqnarray*}
&& T = O(n^{-2})\left[ \left(\sum_{j_{1}=1}^{r}  \frac{\lambda_{j_{1}}}{(\lambda_{j_{1}}+\rho)^{2}}\right)^{2} \times \left(\sum_{l_{1},l_{2}=r+1}^{\infty} \frac{\langle\beta,\phi_{l_{1}}\rangle^{2}\langle\beta,\phi_{l_{2}}\rangle^{2}\lambda_{l_{1}}^{3}\lambda_{l_{3}}^{2}}{(\lambda_{l_{1}}+\rho)^{2}(\lambda_{l_{2}}+\rho)^{2}}\right) + \right. \\
&& \hspace{2cm} \left. 2\sum_{j_{1} = 1}^{r} \sum_{j_{2} = r+1}^{\infty} \sum_{l_{1}=r+1}^{\infty} \frac{\langle\beta,\phi_{l_{1}}\rangle^{2}\langle\beta,\phi_{j_{2}}\rangle^{2}\lambda_{l_{1}}^{3}\lambda_{j_{2}}^{4}\lambda_{j_{1}}}{(\lambda_{l_{1}}+\rho)^{2}(\lambda_{j_{1}}+\rho)^{2}(\lambda_{j_{2}}+\rho)^{4}} +   \right. \\
&& \hspace{2cm} \left. 4\sum_{j_{1} = 1}^{r} \sum_{j_{2} = r+1}^{\infty} \sum_{\substack{l_{1},l_{2}=r+1\\l_{2} \neq j_{2}}}^{\infty} \frac{\langle\beta,\phi_{l_{1}}\rangle^{2}\langle\beta,\phi_{l_{2}}\rangle^{2}\lambda_{l_{1}}^{3}\lambda_{l_{2}}^{3}\lambda_{j_{1}}\lambda_{j_{2}}}{(\lambda_{l_{1}}+\rho)^{2}(\lambda_{j_{1}}+\rho)^{2}(\lambda_{l_{2}}+\rho)^{2}(\lambda_{j_{2}}+\rho)^{2}} +  \right. \\
&& \hspace{2cm} \left. \sum_{j_{1}=r+1}^{\infty} \frac{\langle\beta,\phi_{j_{1}}\rangle^{4}\lambda_{j_{1}}^{4}}{(\lambda_{j_{1}} + \rho)^{8}} + 2\sum_{\substack{j_{1},l_{1}=r+1 \\ j_{1} \neq l_{1}}}^{\infty} \frac{\langle\beta,\phi_{j_{1}}\rangle^{2}\langle\beta,\phi_{l_{1}}\rangle^{2}\lambda_{j_{1}}^{5}\lambda_{l_{1}}^{3}}{(\lambda_{j_{1}}+\rho)^{6}(\lambda_{l_{1}}+\rho)^{2}} + \right. \\
&& \hspace{2cm} \left. \sum_{\substack{j_{1},l_{1},l_{2}=r+1 \\ j_{1} \neq l_{1}, j_{1} \neq l_{2}}}^{\infty} \frac{\langle\beta,\phi_{l_{1}}\rangle^{2}\langle\beta,\phi_{l_{2}}\rangle^{2}\lambda_{j_{1}}^{2}\lambda_{l_{1}}^{3}\lambda_{l_{2}}^{3}}{(\lambda_{j_{1}}+\rho)^{4}(\lambda_{l_{1}}+\rho)^{2}(\lambda_{l_{2}}+\rho)^{2}} + \right. \\
&& \hspace{2cm} \left. 2\sum_{\substack{j_{1},j_{2}=r+1 \\ j_{1} \neq j_{2}}}^{\infty} \frac{\langle\beta,\phi_{j_{1}}\rangle^{2}\langle\beta,\phi_{j_{2}}\rangle^{2}\lambda_{j_{1}}^{4}\lambda_{j_{2}}^{4}}{(\lambda_{j_{1}}+\rho)^{4}(\lambda_{j_{2}}+\rho)^{4}} + \right. \\
&& \hspace{2cm} \left. 2\sum_{\substack{j_{1},j_{2},l_{1}=r+1 \\ j_{1} \neq j_{2}, j_{2} \neq l_{1}}}^{\infty} \frac{\langle\beta,\phi_{j_{1}}\rangle^{2}\langle\beta,\phi_{l_{2}}\rangle^{2}\lambda_{j_{1}}^{4}\lambda_{j_{2}}\lambda_{l_{2}}^{3}}{(\lambda_{j_{1}}+\rho)^{4}(\lambda_{j_{2}}+\rho)^{2}(\lambda_{l_{2}}+\rho)^{2}} + \right. \\
&& \hspace{2cm} \left. 2\sum_{\substack{j_{1},j_{2},l_{1},l_{2}=r+1 \\ j_{1} \neq j_{2}, j_{1} \neq l_{1}, j_{2} \neq l_{2}}}^{\infty} \frac{\langle\beta,\phi_{l_{1}}\rangle^{2}\langle\beta,\phi_{l_{2}}\rangle^{2}\lambda_{j_{1}}\lambda_{j_{2}}\lambda_{l_{1}}^{3}\lambda_{l_{2}}^{3}}{(\lambda_{j_{1}}+\rho)^{2}(\lambda_{j_{2}}+\rho)^{2}(\lambda_{l_{1}}+\rho)^{2}(\lambda_{l_{2}}+\rho)^{2}} \right].
\end{eqnarray*}
Using the simple bound that $\langle \beta,\phi_{l}\rangle^{2} \leq \|\beta\|^{2}$ and applying Lemma \ref{lemma-2} to the above expression with $c = 0$ and appropriately chosen $a$ and $b$ for each infinite sum, we get that $T = O(n^{-2}\rho^{-2-2/\alpha})$ as $n \rightarrow \infty$. This together with the fact that $\mathbb{E}\{\hs\hat{\mathscr{K}}_{\rho}^{-1}\mathscr{K}_{\rho}\hs_{\infty}^{4}\} = O(1)$ as $n \rightarrow \infty$ implies that $\mathbb{E}\{\|(\hat{\mathscr{K}}_{\rho}^{-1} - \mathscr{K}_{\rho}^{-1})C_{2}\|^{2}\} = O(n^{-1}\rho^{-1-1/\alpha})$ as $n \rightarrow \infty$. So, we have
\begin{eqnarray}
\mathbb{E}(\|U_{8}\|^{2}) = O(n^{-1}\rho^{-1-1/\alpha})  \label{estm-HR-12}
\end{eqnarray}
 as $n \rightarrow \infty$.

 We now turn to controlling $\mathbb{E}(\|U_{4}\|^{2})$. First we decompose $U_{4}$ as 
$$U_{4} = \mathscr{K}_{\rho,2}^{-}(\mathscr{P}_{1} - \hat{\mathscr{P}}_{1})(\hat{C} - C) + \mathscr{K}_{\rho,2}^{-}(\mathscr{P}_{1} - \hat{\mathscr{P}}_{1})C,$$
and denote the first and the second terms by $U_{41}$ and $U_{42}$, respectively. Calculations similar to those carried out earlier yield $\mathbb{E}(\|U_{41}\|^{2}) = O(n^{-2}\rho^{-2})$ as $n \rightarrow \infty$.

To bound $\mathbb{E}(\|U_{42}\|^{2})$, set $M_{n}^{2} = An^{-1}\rho^{-2}$ for some $A > 0$, and define the set 
\begin{eqnarray*}
G_{n} = \left\{\max_{j=1,2,\ldots, r} |\hat{\lambda}_{j} - \lambda_{j}| \leq M_{n}\right\}.
\end{eqnarray*}
Since $\max_{j=1,2,\ldots,r} \mathbb{E}\{(\hat{\lambda}_{j} - \lambda_{j})^{2}\} = O(n^{-1})$ as $n \rightarrow \infty$, Markov's inequality yields$P(G_{n}^{c}) < \rho^{2}$ as $n \rightarrow \infty$ for an appropriate choice of $A$. Thus, $\mathbb{E}\{\|U_{42}\|^{2}\mathbf{1}(G_{n}^{c})\} \leq \rho^{-2}E^{1/2}\{\hs\hat{\mathscr{P}}_{1} - \mathscr{P}_{1}\hs^{4}\}\sqrt{P(G_{n}^{c})} \leq \rho^{-1}E^{1/2}\{\hs\hat{\mathscr{K}} - \mathscr{K}\hs^{4}\} = O(n^{-1}\rho^{-1}) = o(n^{-1}\rho^{-1-1/\alpha})$ as $n \rightarrow \infty$. Consequently, it suffices to bound $\mathbb{E}\{\|U_{42}\|^{2}\mathbf{1}(G_{n})\}$.

\noindent Using the resolvent formalism, we represent $\mathscr{P}_1$ as
\begin{eqnarray*}
\mathscr{P}_{1} = \frac{1}{2{\pi}i}\int_{\Gamma} (\mathscr{K} - z\mathscr{I})^{-1} dz,
\end{eqnarray*}
where $i^2=-1$ and $\Gamma$ is the boundary of a closed disk containing $\{\lambda_{j}:j=1,...,r\}$ and excluding $\{\lambda_{j}:j>r\}$ (see \citet{HE15}). Similarly, 
\begin{eqnarray*}
\hat{\mathscr{P}}_{1} = \frac{1}{2{\pi}i}\int_{\hat{\Gamma}} (\hat{\mathscr{K}} - z\mathscr{I})^{-1} dz,
\end{eqnarray*}
where $\hat{\Gamma}$ is the boundary of a closed disk containing $\{\hat\lambda_{j}:j=1,...,r\}$ and excluding $\{\hat\lambda_{j}:j>r\}$. Since $M_{n} \rightarrow 0$ as $n \rightarrow \infty$, so for all sufficiently large $n$, $M_{n} < (\lambda_{r} - \lambda_{r+1})/4$. Thus, for all sufficiently large $n$, $\hat{\Gamma}$ can be chosen to be $\Gamma$ for all sample points in the set $G_{n}$. Thus, for all sufficiently large $n$, we have 
\begin{eqnarray}
&& \mathbb{E}\{\|U_{42}\|^{2}\mathbf{1}(G_{n})\} \nonumber \\
&=& \mathbb{E}\left\{\left\| \frac{1}{2{\pi}i} \int_{\Gamma} \mathscr{K}_{\rho,2}^{-}[(\hat{\mathscr{K}} - z\mathscr{I})^{-1} - (\mathscr{K} - z\mathscr{I})^{-1}]C dz \right\|^{2}\mathbf{1}(G_{n})\right\} \nonumber \\
&\leq& \mathbb{E}\left\{\left\| \frac{1}{2{\pi}i} \int_{\Gamma} \mathscr{K}_{\rho,2}^{-}(\hat{\mathscr{K}} - z\mathscr{I})^{-1}(\hat{\mathscr{K}} - \mathscr{K})(\mathscr{K} - z\mathscr{I})^{-1}C dz \right\|^{2}\mathbf{1}(G_{n})\right\} \nonumber \\
&\leq& 2\mathbb{E}\left\{\left\| \frac{1}{2{\pi}i} \int_{\Gamma} \mathscr{K}_{\rho,2}^{-}[(\hat{\mathscr{K}} - z\mathscr{I})^{-1} - (\mathscr{K} - z\mathscr{I})^{-1}](\hat{\mathscr{K}} - \mathscr{K})(\mathscr{K} - z\mathscr{I})^{-1}C dz \right\|^{2}\right\} \nonumber \\
&& + \ 2\mathbb{E}\left\{\left\| \frac{1}{2{\pi}i} \int_{\Gamma} \mathscr{K}_{\rho,2}^{-}(\mathscr{K} - z\mathscr{I})^{-1}(\hat{\mathscr{K}} - \mathscr{K})(\mathscr{K} - z\mathscr{I})^{-1}C dz \right\|^{2}\right\} \nonumber \\
&=& 2\mathbb{E}\left\{\left\| \frac{1}{2{\pi}i} \int_{\Gamma} \mathscr{K}_{\rho,2}^{-}(\hat{\mathscr{K}} - z\mathscr{I})^{-1}(\hat{\mathscr{K}} - \mathscr{K})(\mathscr{K} - z\mathscr{I})^{-1}(\hat{\mathscr{K}} - \mathscr{K})(\mathscr{K} - z\mathscr{I})^{-1}C dz \right\|^{2}\right\} \label{eq3} \\
&& + \ 2\mathbb{E}\left\{\left\| \frac{1}{2{\pi}i} \int_{\Gamma} \mathscr{K}_{\rho,2}^{-}(\mathscr{K} - z\mathscr{I})^{-1}(\hat{\mathscr{K}} - \mathscr{K})(\mathscr{K} - z\mathscr{I})^{-1}C dz \right\|^{2}\right\} \nonumber
\end{eqnarray}
Now note that 
\begin{eqnarray*}
&& \left\| \frac{1}{2{\pi}i} \int_{\Gamma} \mathscr{K}_{\rho,2}^{-}(\hat{\mathscr{K}} - z\mathscr{I})^{-1}(\hat{\mathscr{K}} - \mathscr{K})(\mathscr{K} - z\mathscr{I})^{-1}(\hat{\mathscr{K}} - \mathscr{K})(\mathscr{K} - z\mathscr{I})^{-1}C dz \right\|^{2} \\
&\leq& \hs\mathscr{K}_{\rho,2}^{-}\hs_{\infty}^{2} \left\| \frac{1}{2{\pi}i} \int_{\Gamma} (\hat{\mathscr{K}} - z\mathscr{I})^{-1}(\hat{\mathscr{K}} - \mathscr{K})(\mathscr{K} - z\mathscr{I})^{-1}(\hat{\mathscr{K}} - \mathscr{K})(\mathscr{K} - z\mathscr{I})^{-1}C dz \right\|^{2} \\
&\leq& \rho^{-2} \frac{L^{2}}{4\pi^{2}} \sup_{\Gamma} \left|(\hat{\mathscr{K}}-z\mathscr{I})^{-1}(\hat{\mathscr{K}} - \mathscr{K})(\mathscr{K} - z\mathscr{I})^{-1}(\hat{\mathscr{K}} - \mathscr{K})(\mathscr{K} - z\mathscr{I})^{-1}C\right|^{2},
\end{eqnarray*}
where $L$ denotes the arc length of the contour $\Gamma$. This last inequality follows from properties of complex contour integrals (see \citet{Conw78}). Now let us note that
\begin{eqnarray*}
&& \sup_{\Gamma} \left|(\hat{\mathscr{K}}-z\mathscr{I})^{-1}(\hat{\mathscr{K}} - \mathscr{K})(\mathscr{K} - z\mathscr{I})^{-1}(\hat{\mathscr{K}} - \mathscr{K})(\mathscr{K} - z\mathscr{I})^{-1}C\right|^{2} \\
&\leq& \hs\hat{\mathscr{K}} - \mathscr{K}\hs^{4} \sup_{\Gamma} \hs(\hat{\mathscr{K}}-z\mathscr{I})^{-1}\hs_{\infty}\}\hs(\mathscr{K} - z\mathscr{I})^{-1}\hs_{\infty}^{2} \\
&=& \hs\hat{\mathscr{K}} - \mathscr{K}\hs^{4} \sup_{\Gamma} |z|^{-3} \leq \mathrm{const.} \hs\hat{\mathscr{K}} - \mathscr{K}\hs^{4},
\end{eqnarray*}
where the last inequality follows because $\Gamma$ only encompasses $\lambda_{1}, \lambda_{2}, \ldots, \lambda_{r}$ and all of them are bounded away from zero. Thus, from the above facts, it follows that the first expectation in the right hand side of \eqref{eq3} is bounded above by $O(1)\rho^{-2}\mathbb{E}\{\hs\hat{\mathscr{K}} - \mathscr{K}\hs^{4}\} = O(n^{-2}\rho^{-2})$ as $n \rightarrow \infty$.

We continue by noting that
\begin{eqnarray*}
&& \mathbb{E}\left|\frac{1}{2{\pi}i}\int_{\Gamma}\langle(\mathscr{K}-z\mathscr{I})^{-1}(\hat{\mathscr{K}} - \mathscr{K})(\mathscr{K} - z\mathscr{I})^{-1}C,\phi_{j}\rangle dz\right|^{2} \\
&=& \frac{1}{(2{\pi}i)^{2}} \int_{\Gamma} \int_{\Gamma} \mathbb{E}\left\{ \langle(\mathscr{K}-z_{1}\mathscr{I})^{-1}(\hat{\mathscr{K}} - \mathscr{K})(\mathscr{K} - z_{1}\mathscr{I})^{-1}\mathscr{K}\beta,\phi_{j}\rangle \times \right. \\
&& \hspace{1cm} \left. \langle(\mathscr{K}-z_{2}\mathscr{I})^{-1}(\hat{\mathscr{K}} - \mathscr{K})(\mathscr{K} - z_{2}\mathscr{I})^{-1}\mathscr{K}\beta,\phi_{j}\rangle dz_{1}dz_{2}\right\} \\
&=& \frac{1}{(2{\pi}i)^{2}} \int_{\Gamma} \int_{\Gamma} (\lambda_{j} - z_{1})^{-1}(\lambda_{j} - z_{2})^{-1} \sum_{l_{1}, l_{2} = 1}^{\infty} \frac{\langle\beta,\phi_{l_{1}}\rangle\langle\beta,\phi_{l_{2}}\rangle\lambda_{l_{1}}\lambda_{l_{2}}}{(\lambda_{l_{1}} - z_{1})(\lambda_{l_{2}} - z_{2})} \times \\
&& \hspace{3cm} \mathbb{E}\{\langle(\hat{\mathscr{K}} - \mathscr{K})\phi_{l_{1}},\phi_{j}\rangle\langle(\hat{\mathscr{K}} - \mathscr{K})\phi_{l_{2}},\phi_{j}\rangle\} dz_{1}dz_{2}  \\
&=& \frac{1}{(2{\pi}i)^{2}} \int_{\Gamma} \int_{\Gamma} (\lambda_{j} - z_{1})^{-1}(\lambda_{j} - z_{2})^{-1} \sum_{l_{1}, l_{2} = 1}^{\infty} \frac{\langle\beta,\phi_{l_{1}}\rangle\langle\beta,\phi_{l_{2}}\rangle\lambda_{l_{1}}\lambda_{l_{2}}}{(\lambda_{l_{1}} - z_{1})(\lambda_{l_{2}} - z_{2})} \times \\
&& \hspace{3cm} n^{-1}\{\lambda_{j}^{2}\mathbf{1}\{j = l_{1} = l_{2}\} + \lambda_{j}\lambda_{l}\mathbf{1}\{j \neq l_{1} = l_{2}\}\} dz_{1}dz_{2} \\
&=& \frac{n^{-1}}{(2{\pi}i)^{2}} \int_{\Gamma} \int_{\Gamma} (\lambda_{j} - z_{1})^{-1}(\lambda_{j} - z_{2})^{-1} \left\{ \sum_{l=1}^{r} \frac{\langle\beta,\phi_{l}\rangle^{2}\rangle\lambda_{l}^{3}\lambda_{j}}{(\lambda_{l} - z_{1})(\lambda_{l} - z_{2})} + \right. \\
&& \hspace{3cm} \left. \sum_{l=r+1}^{\infty} \frac{\langle\beta,\phi_{l}\rangle^{2}\rangle\lambda_{l}^{2}[\lambda_{j}^{2}\mathbf{1}\{j = l_{1} = l_{2}\} + \lambda_{j}\lambda_{l}\mathbf{1}\{j \neq l_{1} = l_{2}\}]}{(\lambda_{l} - z_{1})(\lambda_{l} - z_{2})} \right\} dz_{1}dz_{2}.
\end{eqnarray*}
Since $\Gamma$ does not contain $\lambda_{r+1}, \lambda_{r+2}, \ldots$, it follows by the Cauchy integral theorem (see \citet{Conw78}) that when $l$ and $j$ varies over $r+1,r+2,\ldots$, we have 
$$\int_{\Gamma} (\lambda_{l} - z)^{-1}(\lambda_{j} - z)^{-1} dz = 0.$$
Furthermore, for any $l = 1,2,\ldots,r$, we have 
\begin{eqnarray*}
&& \frac{1}{2{\pi}i} \int_{\Gamma} \frac{dz}{(\lambda_{j} - z)(\lambda_{l} - z)} \\
&=& \frac{1}{2{\pi}i} \int_{\Gamma} \left(\frac{1}{\lambda_{j} - z} - \frac{1}{\lambda_{l} - z}\right) dz \times \frac{1}{\lambda_{l} - \lambda_{j}} \\
&=& -\frac{1}{\lambda_{l} - \lambda_{j}} \times \frac{1}{2{\pi}i} \int_{\Gamma} \frac{dz}{\lambda_{l} - z} =  -\frac{1}{\lambda_{l} - \lambda_{j}}.
\end{eqnarray*}
The third inequality follows from Cauchy integral theorem along with the fact that $\Gamma$ does not contain $\lambda_{r+1}, \lambda_{r+2}, \ldots$ and the fact that $j$ varies over $r+1,r+2,\ldots$. The last equality follows from Cauchy formula (see \citet{Conw78}), stating that the integral is the winding number of $\Gamma$ around $\lambda_{l}$, which equals one. 

Combining all of the above facts, we finally deduce
\begin{eqnarray*}
\mathbb{E}\{\|U_{42}\|^{2}\mathbf{1}(G_{n})\} &\leq& \sum_{j=r+1}^{\infty} (\lambda_{j} + \rho)^{-2} \sum_{l=1}^{r} \frac{\langle\beta,\phi_{l}\rangle^{2}\rangle\lambda_{l}^{3}\lambda_{j}}{n(\lambda_{l} - \lambda_{j})^{2}} \\
&\leq& n^{-1} \sum_{j=r+1}^{\infty} \frac{\lambda_{j}}{(\lambda_{j} + \rho)^{2}} \left[ \sum_{l=1}^{r} \frac{\langle\beta,\phi_{l}\rangle^{2}\rangle\lambda_{l}^{3}}{n(\lambda_{l} - \lambda_{r+1})^{2}} \right] \\
&=& O(n^{-1}\rho^{-1-1/\alpha}) 
\end{eqnarray*}
as $n \rightarrow \infty$ by using \eqref{eq1}. Thus, we have 
\begin{eqnarray}
\mathbb{E}(\|U_{4}\|^{2}) = O(n^{-1}\rho^{-1-1/\alpha})  \label{estm-HR-13}
\end{eqnarray}
as $n \rightarrow \infty$.

Finally, we provide a bound for $\mathbb{E}(\|U_{2}\|^{2})$. Note that $\mathbb{E}(\|U_{2}\|^{2}) \leq 2\mathbb{E}\{\|(\hat{\mathscr{K}}_{\rho}^{-1} - \mathscr{K}_{\rho}^{-1})(\mathscr{P}_{1} - \hat{\mathscr{P}}_{1})\hat{C}\|^{2}\} + 2\mathbb{E}\{\|\hat{\mathscr{F}}_{\rho}(\hat{\mathscr{P}}_{1} - \mathscr{P}_{1})\hat{C}\|^{2}\}$.  

Similar arguments as above show that $\mathbb{E}\{\|\hat{\mathscr{F}}_{\rho}(\hat{\mathscr{P}}_{1} - \mathscr{P}_{1})\hat{C}\|^{2}\}$ is $O(n^{-2})$ as $n \rightarrow \infty$.  Further, $\mathbb{E}\{\|(\hat{\mathscr{K}}_{\rho}^{-1} - \mathscr{K}_{\rho}^{-1})(\mathscr{P}_{1} - \hat{\mathscr{P}}_{1})\hat{C}\|^{2}\} \leq 2\mathbb{E}\{\|(\hat{\mathscr{K}}_{\rho}^{-1} - \mathscr{K}_{\rho}^{-1})(\mathscr{P}_{1} - \hat{\mathscr{P}}_{1})(\hat{C} - C)\|^{2}\} + \mathbb{E}\{\|(\hat{\mathscr{K}}_{\rho}^{-1} - \mathscr{K}_{\rho}^{-1})(\mathscr{P}_{1} - \hat{\mathscr{P}}_{1})C\|^{2}\}$. Now,
\begin{eqnarray*}
&& \mathbb{E}\{\|(\hat{\mathscr{K}}_{\rho}^{-1} - \mathscr{K}_{\rho}^{-1})(\mathscr{P}_{1} - \hat{\mathscr{P}}_{1})(\hat{C} - C)\|^{2}\} \\
&\leq& E^{1/2}\{\hs\hat{\mathscr{K}}_{\rho}^{-1} - \mathscr{K}_{\rho}^{-1}\hs_{\infty}^{4}\}E^{1/4}\{\hs\hat{\mathscr{P}}_{1} - \mathscr{P}_{1}\hs_{\infty}^{8}\}E^{1/4}\{\|\hat{C} - C\|^{8}\} \\
&\leq& O(n^{-3}\rho^{-4}) = o(n^{-1}\rho^{-1-1/\alpha})
\end{eqnarray*}
as $n \rightarrow \infty$ by using \eqref{eq2}, the fact that $E\{\|\hat{C} - C\|^{8}\} = O(n^{-4})$ as $n \rightarrow \infty$ and arguments similar to those used earlier.

Next, 
\begin{eqnarray}
&& \mathbb{E}\{\|(\hat{\mathscr{K}}_{\rho}^{-1} - \mathscr{K}_{\rho}^{-1})(\mathscr{P}_{1} - \hat{\mathscr{P}}_{1})C\|^{2}\}  \label{eqn-modf3} \\
&\leq& E\{\|\hat{\mathscr{K}}_{\rho}^{-1}(\hat{\mathscr{K}} - \mathscr{K})\mathscr{K}_{\rho}^{-1}(\mathscr{P}_{1} - \hat{\mathscr{P}}_{1})C\|^{2}\}    \nonumber \\
&\leq& \rho^{-2}E^{1/2}\{\hs\mathscr{K} - \mathscr{K}\hs^{4}\}E^{1/2}\{\|\mathscr{K}_{\rho}^{-1}(\mathscr{P}_{1} - \hat{\mathscr{P}}_{1})C\|^{4}\}   \nonumber \\
&\leq& O(n^{-1}\rho^{-2})E^{1/2}\{\|\mathscr{K}_{\rho}^{-1}(\mathscr{P}_{1} - \hat{\mathscr{P}}_{1})C\|^{4}\}.   \nonumber 
\end{eqnarray}
Now, 
\begin{eqnarray*}
&& E^{1/2}\{\|\mathscr{K}_{\rho}^{-1}(\mathscr{P}_{1} - \hat{\mathscr{P}}_{1})C\|^{4}\}  \\
&\leq& O(1)[E^{1/2}\{\|\mathscr{K}_{\rho,1}^{-}(\mathscr{P}_{1} - \hat{\mathscr{P}}_{1})C\|^{4}\} + E^{1/2}\{\|\mathscr{K}_{\rho,2}^{-}(\mathscr{P}_{1} - \hat{\mathscr{P}}_{1})C\|^{4}\}].
\end{eqnarray*}
The first term on the right hand side of the above inequality is $O(n^{-1})$ as $n \rightarrow \infty$ and we need to bound the term $E\{\|\mathscr{K}_{\rho,2}^{-}(\mathscr{P}_{1} - \hat{\mathscr{P}}_{1})C\|^{4}\}$. To do this, we will follows the same arguments as those used to bound $E(||U_{42}||^{2})$ earlier.

Proceeding as in the case of bounding $E(||U_{42}||^{2})$, it is easy to see that to obtain a bound for $E\{\|\mathscr{K}_{\rho,2}^{-}(\mathscr{P}_{1} - \hat{\mathscr{P}}_{1})C\|^{4}\}$, it is enough to obtain a bound for $E\{(2{\pi}i)^{-1}\int_{\Gamma}\mathscr{K}_{\rho,2}^{-}(\mathscr{K}-z\mathscr{I})^{-1}(\hat{\mathscr{K}} - \mathscr{K})(\mathscr{K}-z\mathscr{I})^{-1}Cdz||^{4}\}$, where $\Gamma$ is the same contour as considered in the case of $E(||U_{42}||^{2})$. Now, expanding the latter term, we get that
\begin{eqnarray*}
&& E\left\{\|(2{\pi}i)^{-1}\int_{\Gamma}\mathscr{K}_{\rho,2}^{-}(\mathscr{K}-z\mathscr{I})^{-1}(\hat{\mathscr{K}} - \mathscr{K})(\mathscr{K}-z\mathscr{I})^{-1}Cdz||^{4}\right\} \\
&=& \frac{1}{(2{\pi}i)^{4}} \sum_{j_{1},j_{2}=r+1}^{\infty} \sum_{l_{1},l_{2},l_{3},l_{4}=1}^{\infty} \left\{ (\lambda_{j_{1}} + \rho)^{-2}(\lambda_{j_{2}} + \rho)^{-2} \times \right. \\ 
&& \ \ \ \ \left. \int_{\Gamma}\int_{\Gamma}\int_{\Gamma}\int_{\Gamma} \prod_{u=1}^{2} \frac{\langle\beta,\phi_{l_{u}}\rangle\lambda_{l_{u}}}{(\lambda_{l_{u}} - z_{u})(\lambda_{j_{1}} - z_{u})} \prod_{u=3}^{4} \frac{\langle\beta,\phi_{l_{u}}\rangle\lambda_{l_{u}}}{(\lambda_{l_{u}} - z_{u})(\lambda_{j_{2}} - z_{u})} S \prod_{u=1}^{4}dz_{l_{u}} \right\},
\end{eqnarray*}
where 
$$ S = E\left\{\prod_{u=1}^{2} \langle\hat{(\mathscr{K}} - \mathscr{K})\phi_{l_{u}},\phi_{j_{1}}\rangle \prod_{u=3}^{4} \langle\hat{(\mathscr{K}} - \mathscr{K})\phi_{l_{u}},\phi_{j_{2}}\rangle\right\}.$$
We obtained the expression of $S$ after \eqref{estm} while bounding $E(||U_{8}||^{2})$ earlier. Plugging-in those expressions and using the Cauchy integral theorem arguments used while bounding $E(||U_{42}||^{2})$ earlier, we get that
\begin{eqnarray*}
&& E\left\{\|(2{\pi}i)^{-1}\int_{\Gamma}\mathscr{K}_{\rho,2}^{-}(\mathscr{K}-z\mathscr{I})^{-1}(\hat{\mathscr{K}} - \mathscr{K})(\mathscr{K}-z\mathscr{I})^{-1}Cdz||^{4}\right\} \\
&\leq& O(n^{-2}) \sum_{j_{1},j_{2}=r+1}^{\infty} \sum_{l_{1},l_{2}=1}^{r} \frac{\langle\beta,\phi_{l_{1}}\rangle^{2}\langle\beta,\phi_{l_{2}}\rangle^{2}\lambda_{l_{1}}^{3}\lambda_{l_{2}}^{3}\{\lambda_{j_{1}}^{2}I(j_{1}=j_{2}) + \lambda_{j_{1}}\lambda_{j_{2}}I(j_{1} \neq j_{2})\}}{(\lambda_{j_{1}}+\rho)^{2}(\lambda_{j_{2}}+\rho)^{2}\prod_{u=1}^{2}(\lambda_{l_{u}}-\lambda_{j_{1}})\prod_{u=3}^{4}(\lambda_{l_{u}}-\lambda_{j_{2}})} \\
&\leq& O(n^{-2}) \left[\sum_{j=r+1}^{\infty} \frac{\lambda_{j}}{(\lambda_{j} + \rho)^{2}}\right]^{2} \ = \ O(n^{-2}\rho^{-2-2/\alpha})
\end{eqnarray*}
as $n \rightarrow \infty$ by Lemma \ref{lemma-2}.
Thus, it follows from \eqref{eqn-modf3} that $\mathbb{E}\{\|(\hat{\mathscr{K}}_{\rho}^{-1} - \mathscr{K}_{\rho}^{-1})(\mathscr{P}_{1} - \hat{\mathscr{P}}_{1})C\|^{2}\} = o(n^{-1}\rho^{-1-1/\alpha})$ and hence
\begin{eqnarray}
\mathbb{E}(\|U_{2}\|^{2}) = o(n^{-1}\rho^{-1-1/\alpha}) \label{estm-HR-8}
\end{eqnarray}
as $n \rightarrow \infty$.

The bound for $|\mathrm{MSE}(\hat{\beta}_{HR}) - \mathrm{MSE}(\widetilde{\beta}_{HR})|$ given in the statement of Theorem \ref{thm-data1} now follows from \eqref{estm-HR-3} and using the bounds \eqref{estm-HR-6}, \eqref{estm-HR-7}, \eqref{estm-HR-9}, \eqref{estm-HR-10},  \eqref{estm-HR-11}, \eqref{estm-HR-10-1}, \eqref{estm-HR-12}, \eqref{estm-HR-13} and \eqref{estm-HR-8}.

The bound for $\mathrm{MSE}(\hat{\beta}_{HR})$ is obtained by combining the above bound for $|\mathrm{MSE}(\hat{\beta}_{HR}) - \mathrm{MSE}(\widetilde{\beta}_{HR})|$ with the bound obtained for $\mathrm{MSE}(\widetilde{\beta}_{HR})$ in \eqref{estm-HR-6}.

The proofs of the results for $\hat{\beta}_{TR}$ are directly analogous to those for $\hat{\beta}_{HR}$ and are therefore omitted.
\end{proof}

\begin{proof}[Proof of Theorem \ref{thm-data2}]
It was obtained in the proof of part(b) of Theorem \ref{thm-orcl1} that $\mathrm{MSE}(\widetilde{\beta}_{TR}) - \mathrm{MSE}(\widetilde{\beta}_{HR}) = O(1)\rho^{2}$ as $n \rightarrow \infty$, where the $O(1)$ term is bounded below by a positive number for all sufficiently large $n$. Let $\kappa_{1}$ be a positive number which is less than this $O(1)$ term for all sufficiently large $n$. Then, 
\begin{eqnarray}
\mathrm{MSE}(\widetilde{\beta}_{TR}) - \mathrm{MSE}(\widetilde{\beta}_{HR}) > \kappa_{1}\rho^{2}  \label{mse-compare-1}
\end{eqnarray}
as $n \rightarrow \infty$. Note that this bound is irrespective of whether $\alpha > \eta - 1/2$ or $\alpha < \eta - 1/2$.

Now, if $\rho \sim cn^{-\varepsilon}$ for any $\varepsilon < \alpha/(5\alpha-2\eta+2)$ when $\alpha > \eta - 1/2$ or for any $\varepsilon < \alpha/(3\alpha+1)$ when $\alpha < \eta - 1/2$, it can be checked from \eqref{HR-data2} that $|\mathrm{MSE}(\hat{\beta}_{HR}) - \mathrm{MSE}(\widetilde{\beta}_{HR})| = o(\rho^{2})$ as $n \rightarrow \infty$. So, by Theorem \ref{thm-data1}, it follows that $|\mathrm{MSE}(\hat{\beta}_{TR}) - \mathrm{MSE}(\widetilde{\beta}_{TR})| = o(\rho^{2})$ as $n \rightarrow \infty$. Fix $\kappa_{0}$ to be any positive number less than $\kappa_{1}$. Thus, using the inequality
\begin{eqnarray*}
&& \mathrm{MSE}(\hat{\beta}_{TR}) - \mathrm{MSE}(\hat{\beta}_{HR}) 
> \{\mathrm{MSE}(\widetilde{\beta}_{TR}) - \mathrm{MSE}(\widetilde{\beta}_{HR})\} - |\mathrm{MSE}(\hat{\beta}_{TR}) - \mathrm{MSE}(\widetilde{\beta}_{TR})| \\
&& \qquad\qquad\qquad\qquad \qquad\qquad\qquad\qquad \qquad\qquad\qquad\qquad-|\mathrm{MSE}(\hat{\beta}_{HR}) - \mathrm{MSE}(\widetilde{\beta}_{HR})|
\end{eqnarray*}
along with \eqref{mse-compare-1} and the rates of convergences of $|\mathrm{M|SE}(\hat{\beta}_{TR}) - \mathrm{M|SE}(\widetilde{\beta}_{TR})|$ and $|\mathrm{MSE}(\hat{\beta}_{HR}) - \mathrm{MSE}(\widetilde{\beta}_{HR})|$ obtained above, it follows that 
$$\mathrm{MSE}(\hat{\beta}_{TR}) - \mathrm{MSE}(\hat{\beta}_{HR}) > \kappa_{0}n^{-2\varepsilon}$$
for all sufficiently large $n$ and for the above choices of $\varepsilon$.
\end{proof}

\begin{proof}[Proof of Theorem \ref{discrete-thm-data1}]
Note that $m^{-1}MSE(\hat{\beta}_{HR}^{(m)})$ is equal to the MSE of $\hat{\beta}_{HR}^{(m)}$ as an estimator of $\Phi_{m}(\beta)$ when we compute it based on $$\Phi_{m}(X_{1}), \Phi_{m}(X_{2}), \ldots, \Phi_{m}(X_{n}),$$ where $\Phi_{m}(X) = X^{(m)}/\sqrt{m}$ and $\Phi_{m}(\beta) = \beta^{(m)}/\sqrt{m}$. Since Theorem \ref{thm-data1} applies to any separable Hilbert space, we will follow the proof of this theorem for the above-mentioned random variables and parameter.

First observe that when deriving bounds for $E(||U_{l}||^{2})$ in the proof of Theorem \ref{thm-data1}, we required bounds for $\sum_{j \geq 1} \lambda_{j}^{a}/(\lambda_{j} + \rho)^{b}$ for $b \geq a > 0$. So, in the discrete case, we need bounds for $\sum_{j = 1}^{m} (\lambda_{j}^{(m)})^{a}/(\lambda_{j}^{(m)} + \rho)^{b}$ for $b \geq a > 0$. But by assumption (A2') and using the arguments in the proof of Lemma \ref{lemma-2}, it follows that
\begin{eqnarray*}
\sum_{j = 1}^{m} \frac{(\lambda_{j}^{(m)})^{a}}{(\lambda_{j}^{(m)} + \rho)^{b}} \leq O(1)\rho^{-b + a - \frac{1}{\alpha}}
\end{eqnarray*}
as $m \rightarrow \infty$, where the $O(1)$ term is uniform over $m$.

Next, we bound $\sum_{j=1}^{m} \langle\beta^{(m)},\phi_{j}^{(m)}\rangle^{2}/(\lambda_{j}^{(m)} + \rho)^{2}$, which is the discrete version of \eqref{estm-HR-55} used in the proof of Theorem \ref{thm-data1}. First, consider the case when $\alpha > \eta' - 1/2$. Observe that since $m > \rho^{-2}$ and $\alpha > 1$, in this case, we have $\rho^{-1/\alpha} < m^{1/\eta'}$. So, defining $J = [\rho^{-1/\alpha}]$ as in the proof of Lemma \ref{lemma-2} and by using assumption (A3'), we have 
\begin{eqnarray*}
\sum_{j = 1}^{J} \frac{\langle\beta^{(m)},\phi_{j}^{(m)}\rangle^{2}}{(\lambda_{j}^{(m)} + \rho)^{2}} &\leq& O(1)\rho^{\frac{2\eta'}{\alpha} - 2 - \frac{1}{\alpha}},
\end{eqnarray*}
where the $O(1)$ term is uniform over $m$. Further,
\begin{eqnarray*}
\sum_{j = J+1}^{[m^{1/\eta'}]} \frac{\langle\beta^{(m)},\phi_{j}^{(m)}\rangle^{2}}{(\lambda_{j}^{(m)} + \rho)^{2}} &\leq& O(1)\rho^{-2} \sum_{j > J} j^{-2\eta'} \ \leq \ O(1)\rho^{\frac{2\eta'}{\alpha} - 2 - \frac{1}{\alpha}}, \\
\sum_{j > [m^{1/\eta'}]}^{m} \frac{\langle\beta^{(m)},\phi_{j}^{(m)}\rangle^{2}}{(\lambda_{j}^{(m)} + \rho)^{2}} &\leq& O(1)\rho^{-2} \sum_{j > [m^{1/\eta'}]} m^{-2} \ \leq \ O(1)\rho^{-2}/m,
\end{eqnarray*}
where all the $O(1)$ terms above are uniform in $m$. Combining all the above inequalities, and using the facts that $m > \rho^{-2}$ and $(2\eta' - 1)/\alpha < 2$, we have
\begin{eqnarray*}
\sum_{j = 1}^{m} \frac{\langle\beta^{(m)},\phi_{j}^{(m)}\rangle^{2}}{(\lambda_{j}^{(m)} + \rho)^{2}} &\leq& O(1)\rho^{\frac{2\eta' - 1}{\alpha} - 2},
\end{eqnarray*}
where the $O(1)$ term is uniform over $m$. We then consider the case $\alpha < \eta' - 1/2$. In this case, we may either have $m > \rho^{-\eta'/\alpha}$ or $m \leq \rho^{-\eta'/\alpha}$. In the first scenario, as in the proof of Lemma \ref{lemma-2}, we have
\begin{eqnarray*}
\sup_{\rho > 0} \sum_{j = 1}^{[m^{1/\eta'}]} \frac{\langle\beta^{(m)},\phi_{j}^{(m)}\rangle^{2}}{(\lambda_{j}^{(m)} + \rho)^{2}} &\leq& O(1) \sum_{j = 1}^{[m^{1/\eta'}]} j^{-2\eta'+2\alpha} \ \leq \ O(1), \\
\sum_{j > [m^{1/\eta'}]}^{m} \frac{\langle\beta^{(m)},\phi_{j}^{(m)}\rangle^{2}}{(\lambda_{j}^{(m)} + \rho)^{2}} &\leq& O(1)\rho^{-2} \sum_{j > [m^{1/\eta'}]} m^{-2} \ \leq \ O(1)\rho^{-2}/m,
\end{eqnarray*}
where all the $O(1)$ terms are uniform in $m$. Since $m > \rho^{-2}$, we have
\begin{eqnarray*}
\sum_{j = 1}^{m} \frac{\langle\beta^{(m)},\phi_{j}^{(m)}\rangle^{2}}{(\lambda_{j}^{(m)} + \rho)^{2}} &\leq& O(1),
\end{eqnarray*}
with the $O(1)$ term being uniform in $m$. In the other scenario, when $m \leq \rho^{-\eta'/\alpha}$, we have
\begin{eqnarray*}
\sup_{\rho > 0} \sum_{j = 1}^{[m^{1/\eta'}]} \frac{\langle\beta^{(m)},\phi_{j}^{(m)}\rangle^{2}}{(\lambda_{j}^{(m)} + \rho)^{2}} &\leq& O(1) \sum_{j = 1}^{[m^{1/\eta'}]} j^{-2\eta'+2\alpha} \ \leq \ O(1), \\
\sum_{j > [m^{1/\eta'}]}^{[\rho^{-1/\alpha}]} \frac{\langle\beta^{(m)},\phi_{j}^{(m)}\rangle^{2}}{(\lambda_{j}^{(m)} + \rho)^{2}} &\leq& O(1) \sum_{j \leq [\rho^{-1/\alpha}]} m^{-2} j^{2\alpha} \ \leq \ O(1)\rho^{-2-1/\alpha}/m^{2}, \\
\sum_{j > [\rho^{-1/\alpha}]}^{m} \frac{\langle\beta^{(m)},\phi_{j}^{(m)}\rangle^{2}}{(\lambda_{j}^{(m)} + \rho)^{2}} &\leq& O(1)\rho^{-2} \sum_{j > [\rho^{-1/\alpha}]}^{m} m^{-2} \ \leq \ O(1)\rho^{-2}/m,
\end{eqnarray*}
where all the $O(1)$ terms are uniform in $m$. Since $\alpha > 1$ and $m > \rho^{-2}$, we again have
\begin{eqnarray*}
\sum_{j = 1}^{m} \frac{\langle\beta^{(m)},\phi_{j}^{(m)}\rangle^{2}}{(\lambda_{j}^{(m)} + \rho)^{2}} &\leq& O(1),
\end{eqnarray*}
with the $O(1)$ term being uniform in $m$. Thus, analogous to the bound in \eqref{estm-HR-55} in the proof of Theorem \ref{thm-data1}, we have
\begin{eqnarray*}
\rho^{2} \sum_{j = 1}^{m} \frac{\langle\beta^{(m)},\phi_{j}^{(m)}\rangle^{2}}{(\lambda_{j}^{(m)} + \rho)^{2}} &\leq& O(\rho^{M}),
\end{eqnarray*}
where $M = (2\eta' - 1)/\alpha$ or $M = 2$ according as $\alpha > \eta' - 1/2$ or $\alpha < \eta' - 1/2$.

The proof of the present theorem is now complete by using arguments similar to those used in the proof of Theorem \ref{thm-data2}, using the above bounds and noting that the other $O(1)$ terms in the proof of Theorem \ref{thm-data1}, namely, those involved with the $E(||U_{l}||^{2})$'s and that of $m^{-1}E\{||\widetilde{\beta}_{HR}^{(m)} - \beta^{(m)}||^{2}\}$ are uniform over $m$.
\end{proof}

\section{Appendix}

This appendix contains the results of a simulation study when the functional covariate is observed with error and a comparison of the three regularisation methods considered on a real data set. It also contains the proof of the fact that standard Brownian motion verifies Assumption (A3'), as claimed earlier in the paper.

\subsection*{Performance on Simulated Data}

\indent We consider the same models as those in the later part of Section 6 in the paper, the only difference being that now the functional covariate $X$ is observed subject to measurement error. That is, instead of $X_{i}(t)$, we observe $W_{i}(t) = X_{i}(t) + \xi_{i,t}$, $i = 1,2,\ldots,n$, where the $\{\xi_{i,t} : t \in [0,1]\}$'s comprise i.i.d. standard Gaussian errors independent of the $X_{i}$'s. Note that the above equation is only formal, and $W_{i}$ is not a tight Gaussian random element in $L_{2}[0,1]$ since its covariance is not trace class. The equality is only understood in the weak (cylinder) sense. Thus, the theory in the main paper cannot accommodate such a setting. The following table gives the empirical MSE's of the spectral truncation, the Tikhonov, and the hybrid estimators under the models considered.

\indent It is observed from Table \ref{Tab-1} that the Tikhonov estimator has significantly larger cross-validated MSE than the other two estimators in both the well-spaced and the closely-spaced settings for $\beta_{1}$ and $\beta_{2}$. The hybrid estimator and the spectral truncation estimator have quite similar performance for the above settings. While the cross-validation estimates of their MSEs are close to the true MSEs in the well-spaced case, these estimates are significantly different from the true MSEs in the closely-spaced case. Hence, the tentative patterns in the error-in-covariate models are different those found in Section 6 of the main paper. One possible reason for this is that the error in the covariate automatically induces a regularization, so the spectral truncation estimator in this case behaves almost like a partial (considering only the first $m$ eigenfunctions) Tikhonov estimator with the $\rho$-parameter equal to the variance of the error component. On the other hand, the usual Tikhonov estimator adds an additional regularisation (the $\rho$ in its definition) over and above the variance of the error component, which induces additional bias and may explain the deterioration in its performance.

\begin{table}[t]
\caption{MSEs of the hybrid regularisation, the Tikhonov regularisation and the spectral truncation estimators for simulated models}{
\begin{center}
\scalebox{1}{
\begin{tabular}{cccccccc}
 \hline \\
\multicolumn{8}{c}{well-spaced} \\ \\ \hline \\ 
$\beta$ & $\alpha$ & $MSE_{ST}^{GCV}$ & $MSE_{ST}^{opt}$ & $MSE_{TR}^{GCV}$ & $MSE_{TR}^{opt}$ & $MSE_{HR}^{GCV}$ & $MSE_{HR}^{opt}$ \\ \\ \hline 
$\beta_{1}$ & $1.1$ & $0.361$ & $0.324$ & $0.845$ & $0.622$ & $0.323$ & $0.31$ \\ 
 & $2$ & $0.371$ & $0.361$ & $0.904$ & $0.742$ & $0.362$ & $0.356$ \\ \\
$\beta_{2}$ & $1.1$ & $0.324$ & $0.296$ & $0.829$ & $0.595$ & $0.295$ & $0.288$ \\
 & $2$ & $0.328$ & $0.312$ & $0.891$ & $0.692$ & $0.337$ & $0.31$ \\ \\
$\beta_{3}$ & $1.1$ & $0.053$ & $0.049$ & $0.065$ & $0.063$ & $0.072$ & $0.07$ \\
 & $2$ & $0.051$ & $0.049$ & $0.057$ & $0.054$ & $0.064$ & $0.061$ \\
 \hline \\ 
 \multicolumn{8}{c}{closely-spaced} \\ \\ \hline \\ 
 $\beta$ & $\alpha$ & $MSE_{ST}^{GCV}$ & $MSE_{ST}^{opt}$ & $MSE_{TR}^{GCV}$ & $MSE_{TR}^{opt}$ & $MSE_{HR}^{GCV}$ & $MSE_{HR}^{opt}$ \\ \\ \hline 
$\beta_{1}$ & $1.1$ & $1.256$ & $1.073$ & $1.571$ & $1.309$ & $1.258$ & $1.064$ \\ 
 & $2$ & $1.233$ & $1.024$ & $1.54$ & $1.301$ & $1.237$ & $1.019$ \\ \\
$\beta_{2}$ & $1.1$ & $1.163$ & $1.017$ & $1.507$ & $1.236$ & $1.174$ & $1.009$ \\
 & $2$ & $1.157$ & $0.965$ & $1.472$ & $1.225$ & $1.159$ & $0.961$ \\ \\
$\beta_{3}$ & $1.1$ & $0.053$ & $0.05$ & $0.056$ & $0.053$ & $0.067$ & $0.061$ \\
 & $2$ & $0.052$ & $0.049$ & $0.057$ & $0.052$ & $0.061$ & $0.059$ \\
\hline
\end{tabular}}
\end{center}}
\label{Tab-1}
\end{table}

\subsection*{Performance on Real Data}

\indent We consider the well-known Canadian weather data set to test the relative performance of the hybrid, the Tikhonov and the spectral trunction estimator. The data set can be obtained from
\begin{center}
 \url{http://www.psych.mcgill.ca/misc/fda/downloads/FDAfuns/R/data/} 
\end{center}

It contains the monthly precipitations at $n = 35$ weather stations in Canada. The data set also contains the daily temperatures for those weather stations over a year. We fit a functional linear model with response variable being the logarithm of the annual precipitation and the functional predictor being the daily temperatures. We do not aim to elicit any novel instights in this very intensely studied data set. Rather, we wish to use it as a benchmark, as it is publicly available and indeed often used as such.  

To quantify the effectiveness of the estimators, we use the prediction error in terms of MSE. Admittedly, the problem of prediction problem is quite different from the problem of estimation problem in functional regression, and our theoretical results to not address the case of prediction. We nevertheless follow this approach, as it is essentially the only check we can practically carry out: since we do not know the true slope function $\beta$, we cannot estimate the estimation MSE. We estimate the prediction error of each estimator by randomly splitting the data set in two parts, one part for estimating the slope function (including choosing the tuning parameters) and the other part for prediction. This splitting protects against over-fitting, which would confound the true performance of the estimators. We then calculate the prediction error, and the estimate its mean is obtained by taking an average over $1000$ randomly generated splits. 

The prediction errors for the spectral truncation and the Tikhonov estimators are thus estimated to be $3.57$ and $3.27$, respectively, while that of the hybrid estimator is estimated as $2.87$, representing a considerable improvement over either of the two standard estimators.

\subsection*{Validity of Assumption (A3') in Standard Brownian Motion}

We will use the following facts about standard Brownian motion: 
\begin{itemize}
\item[(a)] $\lambda_{j} = \{(j-0.5)\pi\}^{-1}$ for all $j \geq 1$, \\
\item[(b)] $\phi_{j}(t) = \sqrt{2}\sin\{(j-0.5){\pi}t\}$ for $t \in [0,1]$ and for all $j \geq 1$, \\
\item[(c)] The covariance kernel $K(s,t) = \min(s,t)$ of standard Brownian motion is continuously differentiable almost everywhere on $[0,1]^{2}$.  \\
\end{itemize}
Note that it follows from (b) above that $\phi_{j}'(t) = \sqrt{2}(j-0.5){\pi}\cos\{(j-0.5){\pi}t\}$. Thus, we also have
\begin{itemize}
\item[(d)] $\sup_{t \in [0,1]} |\phi_{j}'(t)| \leq \sqrt{2}\lambda_{j}^{-1/2}$ for all $j \geq 1$.
\end{itemize}
Suppose that $\beta$ lies in the RKHS of standard Brownian motion and thus $$\sum_{j=1}^{\infty} \lambda_{j}^{-1}\langle\beta,\phi_{j}\rangle^{2} < \infty.$$  By the Cauchy-Schwarz inequality, we have
\begin{eqnarray} 
\sum_{j=1}^{\infty} |\langle\beta,\phi_{j}\rangle| < \infty.   \label{sbm-1}
\end{eqnarray}
 
First observe that the dispersion matrix $K^{(m)} = ((k_{ij}))$ of $\Phi_{m}(X)$ satisfies $k_{ij} = K(t_{i},t_{j})/m$ for $1\leq i,j \leq m$ with $K$ denoting the covariance kernel of $X$. Consider the kernel
\begin{eqnarray*}
K_{app}(u,v) = \sum_{i,j=1}^{m} K(t_{i},t_{j}) \mathbb{I}\{(u,v) \in I_{i,m} \times I_{j,m}\}
\end{eqnarray*}
for $u ,v \in [0,1]$, where $I_{i,m} = [(i-1)/m,i/m)$ for $1 \leq i \leq m$. Note that $u \in I_{[mu]+1,m}$ for every $u \in [0,1]$. If $(\lambda,\psi)$ is an eigenvalue/eigenfunction pair of the operator associated with the kernel $K_{app}$, then
\begin{eqnarray*}
&& \int_{0}^{1} K_{app}(u,v)\psi(v)dv = \lambda\psi(u) \ \ \ \forall u \\
&& \Rightarrow \int_{0}^{1} K(t_{[mu]+1},t_{[mv]+1})\psi(v)dv = \lambda\psi(u) \ \ \ \forall u \\
&& \Rightarrow \sum_{i=1}^{m} \int_{(i-1)/m}^{i/m} K(t_{[mu]+1,m},t_{i})\psi(v)dv = \lambda\psi(u) \ \ \ \forall u \\
&& \Rightarrow \sum_{i=1}^{m} K(t_{[mu]+1,m},t_{i}) \{\Psi(i/m) - \Psi((i-1)/m)\} = \lambda\psi(u) \ \ \ \forall u,
\end{eqnarray*}
where $\Psi(s) = \int_{0}^{s} \psi(t)dt$. Upon integrating the last equation over $u \in I_{j,m}$ for every $1 \leq j \leq m$, we obtain
\begin{eqnarray*}
&& \sum_{i=1}^{m} \left[\int_{(j-1)/m}^{j/m} K(t_{[mu]+1,m},t_{i})du \right] \{\Psi(i/m) - \Psi((i-1)/m)\} = \lambda \int_{(j-1)/m}^{j/m} \psi(u) \ \ \ \forall j, \\
&& \Rightarrow \sum_{i=1}^{m} m^{-1}K(t_{j},t_{i}) \{\Psi(i/m) - \Psi((i-1)/m)\} = \lambda \{\Psi(j/m) - \Psi((j-1)/m)\} \ \ \ \forall j. \\
\end{eqnarray*} 
Defining $d_{j} = \{\Psi(j/m) - \Psi((j-1)/m)\}$ for $1 \leq j \leq m$ and $d = (d_{1},d_{2},\ldots,d_{m})$, the above equation reduces to $K^{(m)}d = \lambda d$. So, $\lambda$ is also an eigenvalue of $K^{(m)}$ and the corresponding eigenvector is $\phi_{j}^{(m)} = d/||d||$. In other words, $K^{(m)}$ and $K_{app}$ share the same non-zero eigenvalues, which are at most $m$ in number. 

Let $\widetilde{\phi}_{j}^{(m)}$ denote the eigenfunction of $\mathscr{K}_{app}$ associated with the eigenvalue $\lambda_{j}^{(m)}$. It can be shown that the $\widetilde{\phi}_{j}^{(m)}$'s are uniformly bounded for all $j$ and $m$, and it also follows that
\begin{eqnarray}
\phi_{j}^{(m)} = O(1)\,m^{1/2}\left(\int_{(l-1)/m}^{l/m} \widetilde{\phi}_{j}^{(m)}(v) dv : l = 1,2,\ldots,m \right),  \label{sbm-2}
\end{eqnarray}
where the $O(1)$ term is positive and is uniform over $j$ and $m$.

Next, by a simple Taylor series expansion and using fact (c) stated earlier, it follows that 
\begin{eqnarray}
\hs\mathscr{K}_{app} - \mathscr{K}\hs^{2} = O(m^{-2})   \label{sbm-3}
\end{eqnarray}
as $m \rightarrow \infty$. Now for any $k \geq 1$ and any $j = 1,2,\ldots,m$,
\begin{eqnarray}
\langle\phi_{k},\widetilde{\phi}_{j}^{(m)}\rangle &=& \int_{0}^{1} \phi_{k}(v)\widetilde{\phi}_{j}^{(m)}(v)dv \nonumber \\
&=& \sum_{l=1}^{m} \int_{(l-1)/m}^{l/m} \phi_{k}(v)\widetilde{\phi}_{j}^{(m)}(v)dv \nonumber \\
&=& \sum_{l=1}^{m} \int_{(l-1)/m}^{l/m} \left[\phi_{k}(t_{l}) + (v - t_{l})\phi_{k}'(v_{1})\right]\widetilde{\phi}_{j}^{(m)}(v)dv \nonumber \\
&=& \sum_{l=1}^{m} \phi_{k}(t_{l}) \int_{(l-1)/m}^{l/m} \widetilde{\phi}_{j}^{(m)}(v)dv + \sum_{l=1}^{m} \int_{(l-1)/m}^{l/m} (v - t_{l})\phi_{k}'(v_{1})\widetilde{\phi}_{j}^{(m)}(v)dv \nonumber \\
&=& O(1)m^{-1/2} \sum_{l=1}^{m} \phi_{k}(t_{l})\phi_{j,l}^{(m)} + \sum_{l=1}^{m} \int_{(l-1)/m}^{l/m} (v - t_{l})\phi_{k}'(v_{1})\widetilde{\phi}_{j}^{(m)}(v)dv,  \label{sbm-4}
\end{eqnarray}
where we denote $\phi_{j}^{(m)} = (\phi_{j,l}^{(m)} : l=1,2,\ldots,m)'$ and the $O(1)$ term is non-zero and uniform over $j$ and $m$. Now note that 
\begin{eqnarray*}
\lambda_{k}\langle\phi_{k},\widetilde{\phi}_{j}^{(m)}\rangle &=& \langle\mathscr{K}\phi_{k},\widetilde{\phi}_{j}^{(m)}\rangle \ = \ \langle\phi_{k},\mathscr{K}\widetilde{\phi}_{j}^{(m)}\rangle \nonumber 
= \langle\phi_{k},(\mathscr{K} - \mathscr{K}_{app})\widetilde{\phi}_{j}^{(m)}\rangle + \lambda_{j}^{(m)}\langle\phi_{k},\widetilde{\phi}_{j}^{(m)}\rangle. \nonumber 
\end{eqnarray*}
Thus, using \eqref{sbm-3} and the fact that assumption (A2') holds for standard Brownian motion, we have
\begin{eqnarray}
\lambda_{k}|\langle\phi_{k},\widetilde{\phi}_{j}^{(m)}\rangle| \leq \hs\mathscr{K} - \mathscr{K}_{app}\hs + \lambda_{j}^{(m)} = O(1)\{m^{-1} + j^{-\alpha}\}, \label{sbm-5}
\end{eqnarray}
where the $O(1)$ term is uniform over $j$ and all large $m$.

Since $\beta$ lies in the RKHS of standard Brownian motion, it can be shown that
\begin{eqnarray*}
m^{-1/2}\langle\beta^{(m)},\phi_{j}^{(m)}\rangle &=& m^{-1/2}\sum_{k=1}^{\infty} \langle\beta,\phi_{k}\rangle\lambda_{k}\sum_{l=1}^{m} \phi_{k}(t_{l})\phi_{j,l}^{(m)} \\
\Rightarrow |m^{-1/2}\langle\beta^{(m)},\phi_{j}^{(m)}\rangle - \langle\beta,\phi_{j}\rangle| &\leq& \left|\langle\beta,\phi_{j}\rangle\left\{1 - \lambda_{j}m^{-1/2}\sum_{l=1}^{m} \phi_{j}(t_{l})\phi_{j,l}^{(m)}\right\}\right| \\
&& + \ \left|\sum_{k \neq j} \langle\beta,\phi_{k}\rangle \lambda_{k} m^{-1/2}\sum_{l=1}^{m} \phi_{k}(t_{l})\phi_{j,l}^{(m)}\right| \\
&\leq& |\langle\beta,\phi_{j}\rangle|\left(1 + \lambda_{j}\left\{m^{-1}\sum_{l=1}^{m} \phi_{j}^{2}(t_{l})\right\}^{1/2}\right) \\
&& + \ \sum_{k \neq j} |\langle\beta,\phi_{k}\rangle| \lambda_{k} m^{-1/2}\left|\sum_{l=1}^{m} \phi_{k}(t_{l})\phi_{j,l}^{(m)}\right|
\end{eqnarray*}
The first term on the right hand side of the above inequality is bounded above by $O(1)|\langle\beta,\phi_{j}\rangle|$, where the $O(1)$ term is uniform in $j$ and $m$. This is due to the fact that the $|\phi_{j}(t)| \leq \sqrt{2}$ for all $j$. For bounding the second term, first observe that from \eqref{sbm-4}, it follows that
\begin{eqnarray*}
&& \lambda_{k} m^{-1/2}\left|\sum_{l=1}^{m} \phi_{k}(t_{l})\phi_{j,l}^{(m)}\right| \\
&\leq& O(1)\left[ \lambda_{k}|\langle\phi_{k},\widetilde{\phi}_{j}^{(m)}| + \lambda_{k} \sum_{l=1}^{m} \left|\int_{(l-1)/m}^{l/m} (v - t_{l})\phi_{k}'(v_{1})\widetilde{\phi}_{j}^{(m)}(v)dv \right| \right] \\
&\leq& O(1)\left[m^{-1} + j^{-\alpha} + \lambda_{k}\sum_{l=1}^{m} \left( \int_{(l-1)/m}^{l/m} (v-t_{l})^{2}[\phi_{k}'(v_{1})]^{2}dv\right)^{1/2} \left( \int_{(l-1)/m}^{l/m} [\widetilde{\phi}_{j}^{(m)}(v)]^{2}dv \right)^{1/2} \right] \\
&\leq& O(1)\left[m^{-1} + j^{-\alpha} + \lambda_{k}\sup_{v \in [0,1]}|\phi_{k}'(v)| \left( \sum_{l=1}^{m} \int_{(l-1)/m}^{l/m} (v-t_{l})^{2}dv \right)^{1/2} \left( \sum_{l=1}^{m} \int_{(l-1)/m}^{l/m} [\widetilde{\phi}_{j}^{(m)}(v)]^{2}dv \right)^{1/2} \right] \\
&\leq& O(1)\left[m^{-1} + j^{-\alpha} + \lambda_{k}^{1/2}m^{-1} \right],
\end{eqnarray*}
where the $O(1)$ term is uniform over $j$ and all large $m$. The second inequality above follows from \eqref{sbm-5} and fact (d) stated earlier. Thus, using \eqref{sbm-1}, we have
\begin{eqnarray*}
&& \sum_{k \neq j} |\langle\beta,\phi_{k}\rangle| \lambda_{k} m^{-1/2}\left|\sum_{l=1}^{m} \phi_{k}(t_{l})\phi_{j,l}^{(m)}\right| \\
&\leq& O(1)\left[(m^{-1} + j^{-\alpha}) \sum_{k=1}^{\infty} |\langle\beta,\phi_{k}\rangle| + m^{-1} \sum_{k=1}^{\infty} \lambda_{k}^{1/2}|\langle\beta,\phi_{k}\rangle| \right] \\
&=& O(1)\left[m^{-1} + j^{-\alpha}\right],
\end{eqnarray*}
with the $O(1)$ term being uniform over $j$ and all large $m$. Combining the inequalities obtained above, we get that
\begin{eqnarray*}
|m^{-1/2}\langle\beta^{(m)},\phi_{j}^{(m)}\rangle| \leq O(1)\left[j^{-\eta} + j^{-\alpha} + m^{-1}\right]
\end{eqnarray*}
if assumption (A3) holds, where the $O(1)$ term is uniform over $j$ and all large $m$. This completes the proof of assumption (A3') for standard Brownian motion. So, we have $\eta' = \alpha$ or $\eta' = \eta$ in assumption (A3') according as $\eta \geq \alpha$ or $\eta < \alpha < 2\eta -1$.

\bibliographystyle{imsart-nameyear}
\bibliography{biblio1}

\end{document}